\numberwithin{equation}{section}
\tikzset{
  LabelStyle/.style = {rectangle, rounded corners, draw,
                        minimum width = 2em, minimum height =2em ,
                        fill = black!5, text = black },
  VertexStyle/.style = {circle, draw, fill=black!20, 
                        minimum width= 2.2em, font =  \bfseries},
  EdgeStyle/.style = {->, bend left} }
\renewcommand*\env@cases[1][1.2]{%
  \let\@ifnextchar\new@ifnextchar
  \left\lbrace
  \def\arraystretch{#1}%
  \array{@{\,}c@{\quad}l@{}}%
}
\newcommand{\N}{\mathbb{N}}
\newcommand{\R}{\mathbb{R}}
\newcommand{\Rinfty}{\R_\infty}
\newcommand{\eps}{\varepsilon}
\newcommand{\sfC}{\mathsf{C}} 
\newcommand{\sfQ}{\mathsf{Q}} 
\newcommand{\cB}{\mathcal{B}}
\newcommand{\cE}{\mathcal{E}}
\newcommand{\cI}{\mathcal{I}}
\newcommand{\cJ}{\mathcal{J}}
\newcommand{\cR}{\mathcal{R}}
\newcommand{\cD}{\mathcal{D}}
\newcommand{\fD}{\mathfrak D}
\newcommand{\mfB}{\mathfrak B}
\newcommand{\rmC}{\mathrm{C}}
\newcommand{\rmL}{\mathrm{L}}
\newcommand{\LB}{\lambda_{\mathrm{Bz}}}
\newcommand{\rmW}{\mathrm{W}}
\newcommand{\bigset}[2]{\big\{\, #1 \,\big| \, #2 \,\big\} }
\newcommand{\ti}{{\times}}
\newcommand{\eff}{\mathrm{eff}}
\newcommand{\wt}{\widetilde}
\newcommand{\la}{\langle}
\newcommand{\ra}{\rangle}
\newcommand{\DDD}{\mathbb D}
\newcommand{\Gammlim}{\xrightarrow{\text{$\Gamma$}}}
\newcommand{\wGammlim}{\stackrel{\text{$\Gamma$}}{\rightharpoonup}}
\newcommand{\Moscolim}{\xrightarrow{\mathrm{M}}}
\newcommand{\EDPto}{\xrightarrow{\,\mathrm{EDP}\,}}
\newcommand{\olka}{\overline{\kappa}}%
\newcommand{\ulw}{\underline{w}}%
\newcommand{\ulk}{\underline{\kappa}}%
\newcommand{\D}{\mathrm{D}}
\renewcommand{\d}{\mathrm{d}}
\newcommand{\dd}{\hspace{0.25ex}\mathrm{d}\;\!} 
\definecolor{rot}{rgb}{1.000,0.000,0.000}
\newcommand{\one}{1\!\!1}
\newcommand{\e}{\mathrm{e}}
\newtheorem{thm}{Theorem}[section]
\newtheorem{lem}[thm]{Lemma}
\newtheorem{prop}[thm]{Proposition}
\theoremstyle{definition}
\newtheorem{defi}[thm]{Definition}
\newtheorem{rem}[thm]{Remark}
\newtheorem{exam}[thm]{Example}
\newcommand{\AAAERR}{\color{blue}}
\newcommand{\EEE}{\color{black}}
\title{Coarse-graining via EDP-convergence 
  for linear fast-slow reaction systems%
    \thanks{This research has been funded by 
         Deutsche Forschungsgemeinschaft (DFG) through SFB\,1114 
         ``\emph{Scaling Cascades in Complex Systems}'', 
         Project C05 ``Effective Models for Materials and Interfaces with
         Multiple Scales''.}
}
\author{Alexander Mielke%
    \thanks{Weierstraß-Institut für Angewandte
      Analysis und Stochastik, Mohrenstraße 39, 10117 Berlin
      and Humboldt-Universität zu Berlin, Unter den Linden 6, 10099
      Berlin, Germany}\ \ 
and Artur Stephan%
    \thanks{Weierstraß-Institut für Angewandte Analysis und Stochastik, 
        Mohrenstraße 39, 10117 Berlin}}
\date{\normalsize 29 May 2020, \AAAERR with erratum of 16 June 2021.\footnotemark[4]}
\begin{document}

\maketitle

\footnotetext{\AAAERR A previous version of this work stated Lemma 3.4 with too weak
  assumptions.}

\begin{abstract}
  We consider linear reaction systems with slow and fast reactions,
  which can be interpreted as master equations or Kolmogorov
  forward equations for Markov processes on a finite state space.
  We investigate their limit behavior if the fast reaction
  rates tend to infinity, which leads to a coarse-grained
  model where the fast reactions create microscopically equilibrated
  clusters, while the exchange mass between the clusters occurs 
  on the slow time scale. 

  Assuming detailed balance the reaction system can be written as a
  gradient flow with respect to the relative entropy. Focusing on
  the physically relevant cosh-type gradient structure we show
  how an effective limit gradient structure can be rigorously derived
  and that the coarse-grained equation again has a cosh-type gradient
  structure. We obtain the strongest version of convergence in the
  sense of the Energy-Dissipation Principle (EDP), namely
  EDP-convergence with tilting.
\end{abstract}




\section{Introduction}

Considering $I\in\N$ particles that interact linearly with each other
with given rates $A_{ik}$, the evolution of the probability or
concentration $c_i\in [0,1]$ of a species $i\in\{1,\dots, I\}=:
\mathcal I$ can be described by the master equation
\begin{align}\label{eq1}
 \dot c = A c,
\end{align}
where $A$ is the adjoint of the Markov generator $\mathcal L:\R^I\to
\R^I$ of the underlying Markov process, i.e.\ $A=\mathcal L^*$, see
e.g.\ \cite{Dynk65MP, Bobr05FAPS, Durr10PTE} for more
information. In particular, this means $A_{ki}\geq 0$ for $i\neq
k$ and $\sum_{k=1}^I A_{ki}=0$ for all $i\in \cI$.  
We interpret the master equation as a
\textit{rate equation} defined on the state space 
\[
\sfQ=\mathrm{Prob}(\cI) :=\bigset{c\in [0,1]^I}{ \sum\nolimits_{i=1}^I c_i = 1}  
\subset \R^I. 
\]

In many applications the number $I$ of particles can be huge and the
reaction coefficients $A_{ik}$ may vary in a huge range. In such cases
the analysis or the numerical treatment of system \eqref{eq1} 
is out of reach, and hence
suitable simplifications are necessary. One natural assumption is that
reactions can happen with different speeds. We will consider the case
the slow and fast
reactions are distinguished, the slow ones of order 1 and the fast
ones of order $1/\eps$ for a small parameter $\eps\to 0$. Hence, we
decompose $A=A^\eps$ into 
$A^\eps = A^S + \frac{1}{\eps}A^F$, ``$S$'' for slow and ``$F$''
for fast reactions. Our equation then is $\eps$-dependent and reads
\begin{align}\label{eq2}
 \dot c^\eps = A^\eps c^\eps = \big(A^S + \frac 1 \eps A^F\,\big)\;c^\eps.
\end{align}
The limit passage for $\eps\to 0$ in linear and nonlinear
slow-fast reaction systems is a well-established field starting from
pioneering work by Tikhonov \cite{Tikh52SDEC} and Fenichel
\cite{Feni79GSPT}.  We refer to \cite{Both03ILRC, DiLiZi18EGCG,
  Step19EDPCGSDTS} \EEE for modern 
approaches and to \cite{KanKur13STSM} for nonlinear fast-slow reaction
systems under the influence of stochastic
fluctuations, see e.g.\ Example~6.1 there for a mRNA-DNA system for
$I=6$ species with 8 slow reactions and 2 fast reactions.

While we repeat some of these arguments in Section
\ref{se:EpsDependReactNetw}, the main goal of this paper is quite
different. Our study is devoted to \EEE the associated \emph{gradient
  structures} for \eqref{eq2} and their limiting behavior for
$\eps \to 0$. Gradient structures \EEE  exist under the additional
assumption that the \emph{detailed-balance condition} holds, 
means that there exists a positive equilibrium state
$w^\eps=(w^\eps_i)_{i\in \cI} \in \sfQ$ such that
\begin{equation}
  \label{eq:I.DBC}
  \text{detailed-balance condition (DBC):} \qquad \forall\,i,k\in \cI:\ \ 
A^\eps_{ik} w^\eps_k=A^\eps_{ki}w^\eps_i.
\end{equation}
Following \cite{Miel11GSRD,Pele14VMEG,Miel16EGCG}, a gradient structure for a
rate equation $\dot c= V_\eps(c)$ on the state space $\sfQ$ means that
there exist a differentiable energy functional $\cE_\eps$ and a
dissipation potential $\cR_\eps$ such that the rate equation can be
generated as the associated gradient-flow equation, namely
\begin{equation}
 \label{eq:I.GradFlow}
  \dot c = V_\eps(c)= \D_\xi\cR_\eps^*(c,{-}\D \cE_\eps(c))
  \quad \text{ or equivalently }\quad 
  0 = \D_{\dot c} \cR_\eps(c,\dot c)+ \D \cE_\eps(c).
\end{equation}
Here $\cR_\eps$ is called a dissipation potential if
$\cR_\eps(c,\cdot) : \mathrm T_c \sfQ\to [0,\infty]$ is lower
semicontinuous and convex and satisfies $\cR_\eps(c,0)=0$. Then, 
$\cR^*_\eps$ is the (partial) Legendre-Fenchel transform 
\[
\cR^*_\eps(c,\xi):=\sup\bigset{\la \xi,v\ra -\cR_\eps(c,v)}
                      { v \in \mathrm T_c\sfQ }.
\]

For reaction systems of mass-action type (which includes all linear
systems) satisfying detailed balance, it was shown in
\cite{Miel11GSRD} that an entropic gradient structure exists, i.e.\
$\cE_\eps$ is the relative Boltzmann entropy $\cE_\mathrm{Bz}^\eps (c)
:= \mathcal H (c|w^\eps)$ of $c$ with respect to $w^\eps$, see Section
\ref{suu:QuadrEntrGS}. However, this fact was used implicitly in
earlier works, see e.g.\ \cite[Eqn.\,(113)]{OttGrm97DTCF2} and
\cite[Sec.\,VII]{Yong08ICPD}. For linear reaction systems, which are
master equations for Markov processes, a more general theory was
developed in \cite{Maas11GFEF,CHLZ12FPEF} leading to a large class of
possible gradient structures, see Section \ref{SectionGeneralGS} and
\cite[Sec.\,2.5]{MaaMie18?MCRD}.

Here, we use the physically most natural gradient structure that has
its origin in the theory of large deviation, see \cite{MiPeRe14RGFL,
  MPPR17NETP}. The dual dissipation potentials $\cR^*_\eps (c,\cdot):
\mathrm T_c \sfQ\to \R$ are not quadratic but rather exponential due
to cosh terms, namely
\begin{align}
  \label{eq:I.cosh}
\cR^*_\eps(c,\xi) =  \frac12 \sum_{i<k}  \kappa_{ik}^\eps 
\sqrt{c_ic_k}\, 
\sfC^*(\xi_{i} {-} \xi_{k}) \ \text{ with } 
\sfC^*(\zeta)= 4\cosh(\zeta/2)-4
\end{align}
and $\kappa^\eps_{ik}=A^\eps_{ik} \sqrt{w_k^\eps
  /w_i^\eps}$. The gradient structure $(\sfQ,\cE_\mathrm{Bz}^\eps,
\cR^*_\eps)$ exactly generates the gradient-flow evolution
\eqref{eq2}, and we call it simply the \emph{cosh
  gradient structure}. Note that the dissipation potential
$v \mapsto \cR_\eps(c,\cdot)$ is still superlinear, but grows only
like $|v|\log(1{+}|v|)$. In particular, $\cR_\eps$ does not
induce a metric on $\sfQ$. \EEE

This gradient structure is also in line with the first derivation of
exponential kinetic relations by Marcellin in 1915, see
\cite{Marc15CECP}. Moreover, it arises as effective gradient structure
in EDP converging systems, see \cite{LMPR17MOGG, FreLie19?EDTS}. In
\cite{FreMie19?DKRF} it is shown that the exponential function
``$\cosh$'' arises due to the Boltzmann entropy as inverse of the
logarithm. For $\rmL^p$-type entropies $\cR^*$ will have a growth like
$|\xi|^{c_0/(p-1)}$.

Instead of passing to the limit $\eps\to 0$ in the equation
\eqref{eq2}, our goal is to perform the limit passage in the gradient
system $(\sfQ,\cE_\mathrm{Bz}^\eps, \cR^*_\eps)$ to obtain directly an
effective gradient system $(\sfQ,\cE_0, \cR^*_\eff)$ via the notion of
EDP-convergence as introduced in \cite{LMPR17MOGG, DoFrMi19GSWE,
MiMoPe18?EFED}. Roughly spoken this convergence asked for the
$\Gamma$-convergence of the energies, namely $\cE^\eps_\mathrm{Bz}\Gammlim
\cE_0$ on $\sfQ$, and for the dissipation functionals $\fD_\eps \Gammlim
\fD_0$ on $\rmL^2([0,T];\sfQ)$ with
\begin{align*}
 &\fD_\eps(c)= \int_0^T \!\! \Big( \cR_\eps(c,\dot c){+} 
     \cR^*_\eps(c,{-}\D\cE_\eps(c)) \Big) \dd t \ \text{ and }\\  
 &\fD_0(c)= \int_0^T \!\! \Big(\cR_\eff(c,\dot c) {+} 
            \cR^*_\eff(c,{-}\D\cE_0(c))\Big) \dd t.
\end{align*}
The notion of EDP-convergence produces a \emph{unique limit gradient
  system}, and we may have $\cR_\eps \Gammlim \cR_0$ while $\cR_\eff
\neq \cR_0$, see \cite{LMPR17MOGG, DoFrMi19GSWE}.  As a trivial
consequence of EDP-convergence we then find that $0=\D \cR_\eff(c,\dot
c) + \D \cE_0(c)$ is the limit equation, cf.\ Lemma
\ref{le:EDPimpliesCvg}. 

We emphasize that constructing the limit equation $\dot c=
V^0(c)$ for a family of
evolution equations $\dot c=V^\eps(c)$  depending on a small parameter
$\eps\to 0$ is quite 
different from our goal. In Figure \ref{fig:EDP.comm.diagram} this
would mean to concentrate on the two downward arrows on the right
only. In general, an evolution equation may have many gradient
structures, and this is certainly true for the linear master equations
studied here. Thus, from knowing the limiting equation $\dot c =
V^0(c)$ we cannot recover a unique gradient structure
$(\sfQ,\cE_0,\cR_{\mathrm{eff}})$. 

Our philosophy is opposite: We consider the gradient structure
$(\sfQ,\cE_\eps,\cR_\eps)$ associated to $\dot c= V^\eps(c)$ as
additional information that is not contained
in the equation, but of course they are compatible. Typically, the
additional information is of thermodynamical nature and reflects the
underlying microscopic properties of the model that are no longer seen
in the macroscopic model, 
see \cite{MiPeRe14RGFL, MPPR17NETP}. In \cite{PeReVa14LDSH} it is
shown that the parabolic equation $\dot u = u_{xx}$ associates with different
gradient structures if one models diffusion or if one models heat transfer. \EEE

 Thus, we turn around the
usual limit analysis where one first works on the gradient-flow
equations \eqref{eq:I.GradFlow} and the solutions $c^\eps:[0,T]\to
\sfQ$, and then studies gradient structures for the limit
equations. As indicated \EEE in Figure \ref{fig:EDP.comm.diagram},
EDP-convergence works solely on the gradient systems and produces
$\cR_\eff$ as a nontrivial result, which then gives the limit equation
and the accumulation points $c^0:[0,T]\to \sfQ$ of the solutions
$c^\eps:[0,T]\to \sfQ$.
\begin{figure}
\centering
\begin{tikzpicture}[scale=1.1]
\node[fill=black!0] at (0,2.3) {\smaller gradient systems};
\node[fill=black!30] at (0,1.7) {$(\sfQ,\cE_\eps,\cR_\eps)$};

\node[fill=black!30] at (0,0) {$(\sfQ,\cE_0,\cR_{\mathrm{eff}})$};

\node at (1.4,1.7) {$\leadsto$}; 

\node[fill=black!0] at (4.4,2.3) {\smaller gradient-flow eqn.};
\node[fill=black!20] at (4.4,1.7) 
  {$\dot c=\partial_{\xi}\cR^*_\eps(c,{-}\D\cE_\eps(c))=V^\eps(c)$}; 

\node at (7.4,1.7) {$\leadsto$}; 

\node[fill=black!0] at (9,2.3) {\smaller solutions};
\node[fill=black!10] at (9,1.7) {$c^\eps{:}\,[0,T]\to \sfQ$};

\node[fill=red!30, rotate=270] at (-1.6,0.85) {$\eps\to 0$};
\node[fill=red!40,rotate=270] at (0,0.85) 
                  {$\overset{\text{EDP}}{\longrightarrow}$};
\node[fill=red!20,rotate=270] at (6.3,0.85) {$\leadsto$};
\node[fill=red!20, rotate=270] at (9,0.85) {$\rightharpoonup$};

\node at (1.4,0) {$\leadsto$}; 
\node at (7.4,0) {$\leadsto$}; 

\node[fill=black!20] at (4.4,0) 
     {$\dot c=\partial_{\xi}\cR^*_{\mathrm{eff}}(c,{-}\D\cE_0(c))=V^0(c)$}; 
\node[fill=black!10] at (9,0) {$c^0{:}\,[0,T]\to \sfQ$};

\end{tikzpicture}%

\caption{EDP-convergence leads to a commuting diagram, in particular
  EDP-convergence generates the correct limit equation $\dot c =
  V^0(c)$ and (subsequences of) the solutions $c^\eps$ converge to
  solutions $c^0$ of the limit equation. However, $\cR_\eff$
  provides information not contained in the limit equation.}
\label{fig:EDP.comm.diagram}
\end{figure}
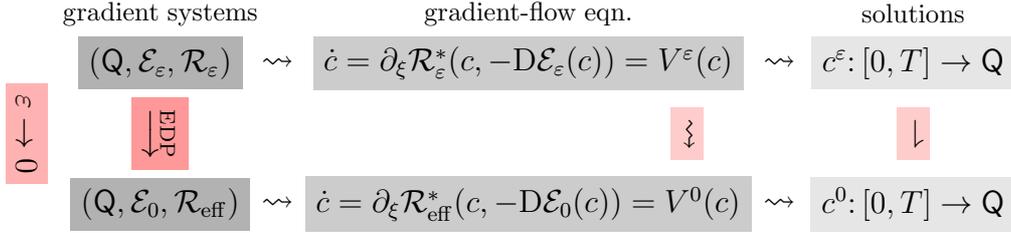

The choice of gradient structure for a given family of equations 
$\dot c=V^\eps(c)$ may even be relevant for deriving the effective of
limiting equation. Choosing the gradient structures
$(\sfQ,\cE_\eps^{(1)},\cR_\eps^{(1)})$ and
$(\sfQ,\cE_\eps^{(1)},\cR_\eps^{(1)})$ we may have EDP convergence to
effective limits $(\sfQ,\cE_0^{(1)},\cR_\eff^{(1)})$ and
$(\sfQ,\cE_0^{(2)},\cR_\eff^{(2)})$, such that the effective equations
\[
\dot c= V^0_{(1)}(c)= \D_\xi \cR^{(1)*}_\eff 
  \big(c,{-}\D\cE_0^{(1)} (c)\big)
\quad \text{and} \quad
\dot c= V^0_{(2)} = \D_\xi \cR^{(2)*}_\eff 
  \big(c,{-}\D\cE_0^{(2)}(c)\big)
\] 
give different dynamics. We refer to 
to \cite[Sec.\,3.3.5.2]{Miel16EGCG} for such a case. \EEE

In \cite[Sec.\,3.3]{LMPR17MOGG} an example of a simple linear reaction
systems (with $I=3$) is considered, where it is shown that the cosh
structure is distinguished by the fact that it is the only one that is
stable under EDP-convergence. It is one of our major results that in
our situation the same stability is true, i.e.\ EDP-convergence yields
a limit gradient structure of cosh-type again.\medskip

We now describe our results more precisely. We mainly work under the
assumption that our system \eqref{eq2} satisfies the DBC
\eqref{eq:I.DBC} for $w^\eps$ and assume that $w^\eps \to w^0 \in
{]0,1[}^I$, i.e.\ all components $w^0_i$ are positive. Then, clearly
$A^F$ satisfies the DBC for $w^0$. As is shown in Section
\ref{se:EpsDependReactNetw}, the fast reactions encoded in $A^F$
separate $\cI=\{1,\ldots, I\}$ into $J<I$ clusters, and we define
a coarse graining operator $M\in \R^{J\times I}$ and a reconstruction
operator $N\in \R^{I\times J}$ satisfying 
\[
MA^F=0\in \R^{J\ti I},\quad A^FN= 0 \in \R^{I\ti J}, \quad \text{and }\ MN=\mathrm{id}_{\R^J}.
\]
The coarse graining operator $M$ satisfies $M_{ji} \in \{0,1\}$
indicating whether the species $i$ belongs to the cluster $j$.  The
limit equation, which is derived in Theorem \ref{thm:ConvergSol}
independently of any EDP-convergence for clarity, then reads
\begin{equation}
  \label{eq:I.LimitEqn}
   M\dot c(t) = M A^S c(t) \qquad \text{and} \qquad A^F c(t)=0.
\end{equation}
Although convergence of solutions of  (\ref{eq2}) is indeed
well-known, we added a short proof, as it shows similarities to the
proof of EDP-convergence in using complementary information to derive
compactness. \EEE 
Using the coarse-grained states $\hat c(t)=Mc(t) \in \hat\sfQ
\subset \R^J$ with probabilities $\hat c_j(t)$ for the cluster $j \in
\cJ$ one obtains the coarse-grained linear reaction systems 
\begin{equation}
 \label{eq:I.CoarseGrain}
  \dot{\hat c}(t) = \hat A \,\hat c(t) \qquad \text{with} \quad 
      \hat A = MA^S N \in \R^{J\ti J}.
\end{equation}
We refer to Section \ref{su:LimitEqn} for a detailed description and an
interpretation of the coarse-grained equation. \EEE

From the solutions $\hat c$ we obtain all solutions
of the limit equation \eqref{eq:I.LimitEqn} via $c(t)=N\hat c(t)$.
In fact, setting $\hat w:=Mw^0\in {]0,1[}^J$ and defining the diagonal
mappings $\DDD_{w^0}= \mathrm{diag} (w^0_i)_{i\in \cI}$ and
$\DDD_{\hat w} = \mathrm{diag} (\hat w_j)_{j\in \cJ}$ the
reconstruction operator $N$ is given via $N= \DDD_{w^0} M^* \DDD_{\hat
  w}^{-1}$. The intrinsic definition of $N$ becomes clear from duality
theory as $\DDD_{w^0}$ can be seen as a duality mapping from relative
densities $\varrho\in (\R^I)^*$ to concentrations $c \in \R^I$. 
\[ 
\begin{tikzcd}[row sep=large, column sep=large]
c\in \R^I \arrow[thick]{r}{\displaystyle \DDD_{w^0}^{-1}} 
\arrow[thick, xshift=-1.3ex,swap]{d}{\displaystyle M} 
  & \varrho \in (\R^I)^* \supset  M^*(\R^J)^*
\\%
\hat c \in \R^J \arrow[thick]{r}{\displaystyle \DDD_{\widehat w}^{-1}} 
 \arrow[red!70!black, xshift=1.3ex, dashed, very thick ]{u}{\displaystyle
   N\hspace*{-1.5em}} 
 & \hat\varrho \in (\R^J)^* \arrow[thick]{u}{\displaystyle M^*}
\end{tikzcd}
\]

In Section \ref{SectionGeneralGS} we discuss general gradient systems
and define different notions of EDP-convergence as in
\cite{DoFrMi19GSWE, MiMoPe18?EFED}, while Section
\ref{SectionGSforLRS} recalls the different possible gradient
structures for linear reaction systems satisfying the DBC
\eqref{eq:I.DBC}. In Section \ref{su:Tilting} we address the important
notion of tilting of Markov processes which means the change of the equilibrium measure $w$ into $w^\eta= \frac1Z\big(
\mathrm e^{-\eta_i} w_i \big){}_{i\in \cI}$. It is another remarkable
feature of the cosh gradient structure that it is invariant under
tilting (see Proposition \ref{pr:Tilt.Inv.GS}). Thus, the strong
notion of EDP-convergence with tilting introduced in
\cite{DoFrMi19GSWE, MiMoPe18?EFED} can only be shown for the
cosh gradient structure. \EEE  

In Section \ref{SectionMainTheorem} we present our main result on the
EDP-convergence with tilting of the cosh-gradient systems
$(\sfQ,\cE^\eps_\mathrm{Bz}, \cR^*_\eps)$ defined via
\eqref{eq:I.cosh}.  While the $\Gamma$-convergence
$\cE^\eps_\mathrm{Bz}\Gammlim \cE^0_\mathrm{Bz}$ follows trivially
from $w^\eps \to w^0$, the $\Gamma$-convergence $\fD_\eps \Gammlim
\fD_0$ in $\rmL^2([0,T], \sfQ)$ is much more delicate. In fact,
Theorem \ref{thm:MoscoCvg.frakD} even provides the Mosco-convergence
of $\fD_\eps \Moscolim \fD_0$, i.e.\ (i) the liminf estimate
$\liminf_{\eps\to 0} \fD_\eps (c^\eps) \geq \fD_0(c^0)$ holds even
under the weak convergence $c^\eps \rightharpoonup c^0$ in
$\rmL^2([0,T];\sfQ)$ and (ii) for each $c^0 \in \rmL^2([0,T];\sfQ)$
there exists a recovery sequence $c^\eps \to c^0$ strongly(!)\ in
$\rmL^2([0,T];\sfQ)$ such that $\limsup_{\eps\to 0} \fD_\eps (c^\eps)
\leq \fD_0(c^0)$.

The main point of the result is the exact characterization
of $\cR_\eff$.  Indeed, we have  
\begin{align*}
 \fD_0(c) = \begin{cases}
                     \displaystyle\int_0^T \!\!\Big(\cR_\eff(c, \dot
                     c) + \cR_\eff^*(c, - \D \cE_\textrm{Bz}^0(c))\Big) \dd t
                     &\text{for } c\in \rmW^{1,1}([0,1];P\sfQ),\\
                     \infty &\text{otherwise in }\rmL^2([0,1];\sfQ),
                    \end{cases}
\end{align*}
where, for $c \in P\sfQ$ the effective dissipation potential
$\cR_\eff$ is given by 
\begin{align*}
 \cR_\eff^*(c,\xi) = \cR_S^*(c,\xi) {+} \chi_{M^*(\R^J)^*}(\xi)
  \quad \text{or equivalently} \quad 
 \cR_\eff(c,v) = \!\! \inf_{z\in\R^I: Mz = Mv} \!\! \cR_S(c, z).
\end{align*}
Here $P=NM$ is the projection mapping general $c \in \sfQ$ into
microscopically equilibrated reactions $c=N\hat c$ with $\hat c=Mc$,
and $\cR_S^*$ is the dual dissipation potential defined as in
\eqref{eq:I.cosh} but using only the slow reactions. Finally, the
characteristic function $\chi_\Xi$ is $0$ for $\xi\in \Xi$ and
$\infty$ else. The condition $\chi_\Xi ({-}\D\cE_\mathrm{Bz}^0 (c)) <
\infty$ is in fact equivalent to $c \in P\sfQ$, see Section
\ref{SectionLimitingGS}). \EEE 

It is easy to see that the degenerate gradient system
$(\sfQ,\cE^0_\text{Bz} ,\cR^*_\eff)$ generates exactly the limit
equation \eqref{eq:I.LimitEqn}. Moreover, 
using the bijective linear mapping $M:P\sfQ \to \hat\sfQ := \bigset{
  \hat c\in [0,1]^J}{ \hat c_1+\cdots + \hat c_J=1} \subset \R^J$ 
with inverse $N:\hat \sfQ \to P \sfQ\subset \R^I$ we can
define the coarse-grained gradient system $(\hat \sfQ, \hat\cE, \hat
\cR)$ for the coarse-grained states $\hat c = Mc$ via 
\begin{align*}
 \hat\cE(\hat c) = \cE_\mathrm{Bz}^0(N\hat c), \quad 
 \hat \cR (\hat c, \hat v) = \cR_\eff(N\hat c, N\hat v), \quad 
  \hat \cR^*(\hat c, \hat \xi) = \cR_\eff^*(N\hat c, M^*\hat\xi).
\end{align*}
The construction and the explicit formula for $\cR^*_\eff$ yield that
$(\hat \sfQ, \hat\cE, \hat \cR)$ is again a cosh gradient structure
and the associated gradient-flow equation is the coarse-grained
equation \eqref{eq:I.CoarseGrain}, see
Proposition~\ref{pr:kappa.hat.kappa}.

This is indeed a rigorous coarse-graining in the sense of
\cite[Sec.\,6.1]{MaaMie18?MCRD}. This paper is intended to be an
easy-to-understand first approach to more general results of 
EDP-convergence. In  \cite{MiPeSt20EDPCNLRS} we will cover
nonlinear reaction systems, for which the coarse-graining
procedure based on Markov operators cannot work and where the structure of
the limiting equations will be more involved because of nonlinear
algebraic constraints. Moreover, reaction-diffusion systems are
discussed in \cite{FreLie19?EDTS, FreMie19?DKRF}. As
already shown in \cite{LMPR17MOGG}, the cosh gradient structure may appear
automatically from quadratic diffusion models, and we \EEE
expect that the cosh gradient structure will also be stable in
the more general situations in \cite{Step20EDPLRDS}, where
also the reaction fluxes are coarse-grained and reconstructed.
\EEE

\section{Fast-slow reaction network}
\label{se:EpsDependReactNetw}

On $\sfQ:=\mathrm{Prob}(\cI ):=\bigset{c\in[0,1]^I}{
\sum_{i\in \cI}c_i=1} \subset X:=\R^I$ we consider 
the Kolmogorov forward equation
or master equation
\begin{align*}
 \dot c = A c \quad \mathrm{with} \quad A \in \R^{I\times I},
\end{align*}
where $A$ is the adjoint of a Markov generator, i.e.
\begin{align*}
  A_{ik}\geq 0 \quad \text{for all } i \neq k \quad
  \text{and} \quad  \forall \,k\in \cI :\quad 0=\sum\nolimits_{i=1}^I A_{ik}.
\end{align*}
Some comments on the notation are in order.  Usually, in the theory of
Markov operators and stochastic processes the state space is the set
of probability measures which is a subset of the dual space of
continuous functions. So it would be more convenient to denote the
space of interest by $X^*$ and not $X$. Certainly, since we are
dealing with finite dimensional spaces, both are isomorphic and the
notation is just a question of manner. In that paper, the master
equation is understood as a \textit{rate equation} of a gradient
system in the sense of Section \ref{SectionGeneralGS} which is an
equation in $X$. Strictly speaking, the operator $A$ is the adjoint of
a Markov generator $\mathcal L$ which generates a semigroup of Markov
operators $\mathrm e^{t\mathcal L }:X^*\to X^*$. 
By definition, a Markov operator $M^*: X^*\rightarrow
Y^*$ on a finite dimensional state space maps positive vectors on
positive vectors and the constant one vector $\one_{X^*}$ to a constant
one vector $\one_{Y^*}$. Its adjoint maps the set of probability vectors
onto the set of probability vectors. 

The linear reactions given by $A$, naturally define a graph or
reaction network, where edges $e_{ik}$ from node $x_i$ to node $x_k$
correspond to the entries $A_{ik}>0$. The graph is directed,
i.e. edges $e_{ik}$ and $e_{ki}$ are different and have an
orientation. We assume that $A$ is irreducible, which means that the
corresponding graph is irreducible, or in other words, that any two
nodes are connected via a directed path. This implies that there
is a unique steady state $w\in\mathrm{Prob}(\cI )$ which is positive,
i.e. $w_j>0$ for all $j\in \cI $, see e.g.\ \cite{Durr10PTE}.

The crucial assumption for our systems is the following symmetry
condition. The Markov process satisfies is called to satisfy the
\emph{detailed-balance condition} (DBC) with respect to its stationary
measure $w>0$, if $A_{ik}w_k=A_{ki}w_i$ for all $i,k\in \cI$. 
Assuming detailed balance, the evolution equation $\dot c = A c$,
which is an equation on $X$, can also be written in another form. Let
us introduce the duality operator 
\[
\DDD_w = \mathrm{diag}(w):\left\{ \begin{array}{ccc} X^*&
\rightarrow &X, \\ \varrho &\mapsto &c=\DDD_{w}\varrho
\end{array} \right. \qquad \text{and} \qquad 
X\ni c \xrightarrow {\DDD_w^{-1}} \varrho \in X^*.
\]
Hence, $\DDD_w$ maps the relative
densities $\varrho$ to the concentrations $c$, i.e. $c_i=\varrho_iw_i$.
The linear master equation can now be written as
\begin{align*}
 \dot c = B \varrho \quad \text{with } B=A\DDD_w\,.
\end{align*}
Because of the DBC, $B=A\DDD_w:X^*\rightarrow X$ is a
symmetric operator on $X$, i.e. $B^* = B$.

For our slow-fast systems, we introduce a scaling parameter
$1/\eps$ for $\eps>0$ and the rates $A_{ik}$ on the right-hand side
decompose into $A = A^\eps = A^S + \frac{1}{\eps}A^F$, where ``$S$''
stands for slow and ``$F$'' for fast reactions. Our equation is
$\eps$-dependent and reads
\begin{align}
 \label{eqceps}
  \dot c^\eps = A^\eps c^\eps = (A^S + \frac 1 \eps A^F)c^\eps.
\end{align}
The aim of the paper is to investigate the system in the limit
$\eps\rightarrow 0$. To do this, some assumptions on the
$\eps$-dependent reaction network are needed.

\subsection{Assumptions on the $\eps$-dependency of the network}
\label{su:AssumpNetw}

Our paper will be restricted to the case where the stationary measure
$w^\eps\in \sfQ$ converges to a positive limit measure
$w^\eps\rightarrow w^0 \in {]0,1[}^I$:
{\def\theequation{2.A}
\begin{subequations}
 \label{eq:cond.ALL}
\begin{align}
 \label{eq:cond.A1}
&\text{\parbox{0.8\textwidth}{
  For all $\eps>0$, the reaction graph defined by $A^\eps$ is
  connected.\\ Moreover, if there is a transition from state $i$ to
  $k$ (i.e.\ $A_{ki}>0$),
  then there is also a transition backwards from $k$ to $i$.}}
\\[0.4em]
& \label{eq:cond.A2}
 \text{\parbox{0.8\textwidth}{
  For all $\eps>0$ there is a unique and positive stationary measure
$w^\eps\in \sfQ$, and the stationary measure converges
$w^\eps\rightarrow w^0$, where $w^0$ is positive.}}
\\[0.4em]
& \label{eq:cond.A.DBC}
\text{\parbox{0.8\textwidth}{
   (DBC):  For all $\eps>0$ the detailed-balance condition with respect to 
   $w^\eps$ holds, i.e.\ $A^\eps_{ik}w^\eps_k = A^\eps_{ki}w^\eps_i$
   for all $i,k\in \cI$.}}
\end{align}
\end{subequations}
}%
These three conditions are not independent of each other, but it
is practical to state them as above. In particular, if
\eqref{eq:cond.A1} and the DBC \eqref{eq:cond.A.DBC} hold, then 
\eqref{eq:cond.A2} follow, which is the content of the following
results. See \cite{Step19?CCSM} and the references therein for
generalizations.

\begin{prop}
\label{prop:DBSimpliesPositivityofw0}
Let the reaction network satisfy \eqref{eq:cond.A1} and
\eqref{eq:cond.A.DBC} and define, for transitions according
\eqref{eq:cond.A1}, the transition quotients 
\begin{align*}
q_{ik}^\eps = \frac{A_{ik}^\eps}{A_{ki}^\eps}= \frac{A_{ik}^S + 
\frac 1 \eps A_{ik}^F}{A_{ki}^S + \frac 1 \eps A_{ki}^F} \:.
\end{align*}
If there is a (universal) bound $q^*<\infty$ such that for all
transitions from $i$ to $k$ and for all $\eps\geq0$ the transition
quotients $q_{ik}^\eps$ satisfy $1/q^* \leq q_{ik}^\eps\leq q^*$, then $w^\eps$
converges and its limit $w^0$ is positive, i.e.\ \eqref{eq:cond.A2}
holds.
\end{prop}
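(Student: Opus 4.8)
The plan is to exploit the detailed-balance condition \eqref{eq:cond.A.DBC} to express the components of $w^\eps$ explicitly through the transition quotients, so that the uniform bound $1/q^*\le q^\eps_{ik}\le q^*$ translates directly into uniform positivity and into convergence. First I would note that the DBC $A^\eps_{ik}w^\eps_k=A^\eps_{ki}w^\eps_i$ means that along any admissible edge one has
\[
\frac{w^\eps_i}{w^\eps_k}=\frac{A^\eps_{ik}}{A^\eps_{ki}}=q^\eps_{ik},
\]
so the ratio of two neighbouring stationary components is exactly a transition quotient. Since, by \eqref{eq:cond.A1}, the reaction graph is connected and every edge is reversible, I fix a reference node (say $i=1$) and, for each $k\in\cI$, a path $1=i_0,i_1,\dots,i_m=k$ with $m\le I{-}1$. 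Telescoping the edge ratios gives
\[
\frac{w^\eps_k}{w^\eps_1}=\prod_{l=1}^m \frac{w^\eps_{i_l}}{w^\eps_{i_{l-1}}}
   =\prod_{l=1}^m \frac{1}{q^\eps_{i_{l-1}i_l}}.
\]
This product is independent of the chosen path precisely because $w^\eps$ exists (from \eqref{eq:cond.A1}) and satisfies the DBC, so the edge ratios are consistent around every cycle.

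Next I would extract uniform positivity. Each factor lies in $[1/q^*,q^*]$ with $q^*\ge1$ (as $q^\eps_{ik}$ and $q^\eps_{ki}=1/q^\eps_{ik}$ are both bounded by $q^*$), so a product of at most $I{-}1$ such factors satisfies
\[
(q^*)^{-(I-1)}\le \frac{w^\eps_k}{w^\eps_1}\le (q^*)^{I-1}
   \qquad\text{for all }k\in\cI,\ \eps>0.
\]
Summing over $k$ and using $\sum_k w^\eps_k=1$ yields $w^\eps_1=\big(\sum_k w^\eps_k/w^\eps_1\big)^{-1}$, which is bounded above and below by $\eps$-independent positive constants; multiplying back, every component obeys $w^\eps_k\ge (q^*)^{-2(I-1)}/I=:\delta>0$ uniformly in $\eps$. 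Hence $\{w^\eps\}$ stays in a compact subset of the open simplex, and every accumulation point is strictly positive.

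Finally I would upgrade boundedness to genuine convergence by passing to the limit in the edge ratios. Writing $q^\eps_{ik}=(\eps A^S_{ik}+A^F_{ik})/(\eps A^S_{ki}+A^F_{ki})$, the bound $q^\eps_{ik}\in[1/q^*,q^*]$ forces, for each edge, that $A^F_{ik}$ and $A^F_{ki}$ are either both positive or both zero (otherwise the quotient would blow up or vanish as $\eps\to0$); consequently each $q^\eps_{ik}$ converges to a finite positive limit $q^0_{ik}$, equal to $A^F_{ik}/A^F_{ki}$ on fast edges and to $A^S_{ik}/A^S_{ki}$ on purely slow ones. Since each path ratio $w^\eps_k/w^\eps_1$ is a finite product of such convergent factors, it converges to $r_k:=\prod_l 1/q^0_{i_{l-1}i_l}\in(0,\infty)$ (with $r_1=1$); the normalization then forces $w^\eps_1\to(\sum_{k'} r_{k'})^{-1}>0$ and hence $w^\eps_k\to w^0_k:=r_k/\sum_{k'}r_{k'}>0$. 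This establishes both the convergence $w^\eps\to w^0$ and the positivity of $w^0$, i.e.\ \eqref{eq:cond.A2}.

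The step I expect to be the main obstacle is the honest justification that the uniform bound on the transition quotients is exactly what excludes the ``mixed'' edges (fast in one direction, slow in the other) and thereby guarantees the existence of the edge-wise limits $q^0_{ik}$; once this is in place, everything else is either a direct consequence of the DBC telescoping identity or an elementary compactness-and-normalization argument.
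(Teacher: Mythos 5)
Your proof is correct and follows essentially the same route as the paper's: the detailed-balance condition expresses the ratios $w^\eps_i/w^\eps_k$ through the transition quotients, connectivity lets one telescope along paths to a reference node, and the uniform bound on the quotients yields both convergence and positivity of the limit. Your version merely spells out explicitly (telescoping formula, uniform lower bound, case analysis excluding mixed fast/slow edges) what the paper compresses into the remark that each $w^\eps_i$ is a rational function of $\eps$ depending only on the $q^\eps_{ik}$.
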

\begin{proof}
  Using the DBC \eqref{eq:cond.A.DBC}, the stationary measure $w^\eps$
  only depends on the transition quotient $q_{ik}^\eps$.  Hence, each
  $\eps\mapsto w^\eps_i \in [0,1]$ is a rational polynomial in $\eps$
  and thus converges to $w^0_i$ with $w^0 \in \sfQ=\mathrm{Prob}(\cI
  )$ with polynomial dependency on $\eps>0$. Moreover,
  $q_{ik}^\eps=1/q_{ki}^\eps$ converges to $q_{ik}^0 \in
  [1/q^*,q^*]$. Since the limit $w^0$ again depends only $q_{ik}^0$,
  we conclude that it is positive.
\end{proof}

We now comment on the relevance of the above assumptions and give two
nontrivial examples. 

\begin{rem}\label{RemarkAssumptionsEpsReactionNetwork}\mbox{} \vspace{-0.6em}
\begin{enumerate}\itemsep-0.3em
\item[(a)] In the chemical literature, our assumption
  \eqref{eq:cond.A1} is often called (weak) \textit{reversibility}. It
  implies already that the stationary measure $w^\eps$ for
  $A^\eps$ is unique and positive.

 \item[(b)] The assumptions in Proposition
   \ref{prop:DBSimpliesPositivityofw0} say that the quotients
   $q_{ij}^\eps$ are bounded even for $\eps\rightarrow 0$ and hence,
   they converge.  In particular, this means that if there is a fast
   reaction $A_{ik}^F\neq 0$ then necessarily also the backward
   reaction is fast, i.e. $A_{ki}^F\neq 0$. So, the graph does not
   change its topology in the limit process $\eps\rightarrow
   0$. Without this assumption the mass $w^\eps_i$ may vanish for some
   species $i$, see Example \ref{ExampleNetworks}(b).  This case is
   more delicate and will be considered in subsequent work.

 \item[(c)] It was observed in \cite{Yong08ICPD, Miel11GSRD} that reaction
   systems of mass-action type have an entropic gradient structure, if
   the DBC holds. For linear reaction
   systems this was independently found in \cite{Maas11GFEF,
     CHLZ12FPEF}. However, our work will not use the quadratic
   gradient structure derived in the latter works, but will rely on the
   cosh-type generalized gradient structure derived in
   \cite{MiPeRe14RGFL, MPPR17NETP}, see Section \ref{SectionGSforLRS}.

 \item[(d)] Assuming \eqref{eq:cond.A1}, \eqref{eq:cond.A.DBC}, and
   additionally that the reaction quotients $q_{ik}^\eps$ scale either
   with $1$ or with $1/\eps$, i.e. $A_{ik}^F\neq 0 \Rightarrow
   A_{ik}^S=0$, then the transition quotients $q_{ik}^\eps$ are
   $\eps$-independent. In particular, the stationary measure $w_\eps$
   as well as the energy $\cE_\eps$ (see Section \ref{su:GenerClass}) are
   independent of $\eps$.
 \end{enumerate}
\end{rem}

\begin{exam}\label{ExampleNetworks} We discuss two cases highlighting
  the relevance of our assumptions. 
 \begin{enumerate}
  \item[(a)] A prototype example is the following, where four states are involved:
\begin{center}
\begin{tikzpicture}
\SetGraphUnit{4}
  \Vertex{3}
   \WE(3){2}
   \WE(2){1}
   \EA(3){4}
  \Edge[label = $A_{1,2}$](2)(1)
  \Edge[label = $\frac{A_{2,3}}{\eps}$](3)(2)
  \Edge[label = $\frac{A_{3,2}}{\eps}$](2)(3)
  \Edge[label = $A_{2,1}$](1)(2)
  \Edge[label = $A_{4,3}$](3)(4)
  \Edge[label = $A_{3,4}$](4)(3)
\end{tikzpicture}
\end{center}
As in all reaction chains, this example satisfies the DBC \eqref{eq:cond.A.DBC}.
  
We observe that the reaction rates $A_{ik}^\eps$ scale either with 1
or with $1/\eps$ and hence, the reaction ratios as well as the
stationary measure do not depend on $\eps$, see Remark
\ref{RemarkAssumptionsEpsReactionNetwork}(d). Hence, the assumptions
\eqref{eq:cond.ALL} are satisfied. We expect that in the limit $\eps\rightarrow
0$ a local equilibrium between the states 2 and 3 occur, which means
that the system can be described by only three states. 
\begin{center}
\begin{tikzpicture}[scale=1.0, transform shape,
state/.style = {circle, draw, fill=black!20, 
                minimum width= 2.9em, font =  \bfseries},
RRR/.style = {rectangle, rounded corners, draw,
              minimum width = 2em, minimum height =2em ,
              fill = black!5, text = black},
 MMM/.style = {}]

\node (a) at (-2, 0) {$\eps=0$};
\node [state]  (N1) at (0,0) {\  1\ \ };
\node [state] (N23) at (4,0) {$\{2,3\}$};
\node [state]  (N4) at (8,0) {\ 4\ \ };

\path[thick,->] (N1) edge[bend left] node[RRR] {$\hat A_{23,1}$} (N23);
\path[thick,->] (N23) edge[bend left] node[RRR] {$\hat A_{4,23}$} (N4);
\path[thick,->] (N4) edge[bend left] node[RRR] {$\hat A_{23,4}$} (N23);
\path[thick,->] (N23) edge[bend left] node[RRR] {$\hat A_{1,23}$} (N1);

\end{tikzpicture}
\end{center}

\item[(b)] In \cite{LMPR17MOGG}, the authors considered the following
  reaction chain:
   \begin{center}
  $\eps>0$\ 
  \begin{tikzpicture}
  \SetGraphUnit{3}
  \Vertex{2}
  \WE(2){1}
  \EA(2){3}
  \Edge[label = 2](1)(2)
  \Edge[label = $\frac 2 \eps$](2)(3)
  \Edge[label = 2](3)(2)
  \Edge[label = $\frac 2 \eps$](2)(1)
 \end{tikzpicture}
    \qquad $\eps=0$ \ 
 \begin{tikzpicture}
  \SetGraphUnit{3}
  \Vertex{3}
  \WE(2){1}
  \Edge[label = $1$](1)(2)
  \Edge[label = $1$](2)(1)
 \end{tikzpicture}
\end{center}
The DBC \eqref{eq:cond.A.DBC} is again satisfied. The stationary measure is
$w_\eps=\frac 1 {2+\eps}(1, \eps, 1)$. The transition quotients are
$q_{12}^\eps=\eps$ and $q_{23}^\eps=\frac 1 \eps$, which converge to
$0$ or $\infty$, respectively. Hence assumption \eqref{eq:cond.A2} is
violated. In fact the limit stationary measure is $w^0 = (\frac 1
2, 0 , \frac 1 2)$, which is no longer strictly positive. In
\cite[Sec.\,3.3]{LMPR17MOGG} the EDP-convergence is performed for
different gradient structures and only the cosh-gradient structure as
defined in Section \ref{suu:Boltz.cosh} turned out to be stable.
 \end{enumerate}

\end{exam}

\subsection{Capturing the states connected by fast reactions}
\label{su:CaptStates}

In the limit species which are connected by fast reactions have to be
treated like one large particle. Let $i_1\sim_F i_2$ denote the
relation if states $i_1$ and $i_2$ are connected via fast reactions.
Assumptions \eqref{eq:cond.A1}, \eqref{eq:cond.A2}, and
\eqref{eq:cond.A.DBC} \EEE guarantee that $\sim_F$ defines an
equivalence relation on $\cI $ and decomposes $\cI $ into different
equivalence classes $\mathcal{J}:=\{\alpha_1, \dots, \alpha_J\}$,
where the index of $\sim_F$, i.e. the number of (different)
equivalence classes, is denoted by $J$. By definition all $\alpha_j$
are non-empty. Obviously, we have $1\leq J \leq I$.  In particular,
$J=I$ means that there are no fast reactions; $J=1$ means that each
two species are connected via at least one reaction path consisting
only of fast reactions. Let
$\phi : \{1, \dots, I\}\rightarrow \{\alpha_1, \dots, \alpha_J\}$ be
the function, which maps a state $i$ to its equivalence class
$\alpha_j$, i.e. $i\mapsto \phi(i)=[i]_{\sim_F}=\alpha_j$. To make
notation simpler, we denote the set of equivalence classes by
$\mathcal{J}=\{1,\dots, J\}$ and further use $j\in \mathcal J$ and
$i\in\mathcal I$.

The function $\phi :\cI \rightarrow \mathcal{J}$ defines a
deterministic Markov operator $M^*: Y^* \rightarrow X^*$, where $Y^*$
is a $J$-dimensional real vector space, by
\begin{align*}
(M^*\hat \varrho)_i := \hat \varrho_{\phi(i)}, ~~~~\hat\varrho \in
Y^*, ~ i \in \cI . 
\end{align*}
Deterministic Markov operator means that its dual $M:X\rightarrow Y$
maps pure concentrations, i.e. unit vectors $e_i$, to pure
concentrations. 

Some facts on deterministic Markov operators are in order. Clearly for
a deterministic Markov operator it holds $M^*(\hat \varrho \cdot \hat
\psi) = M^*\hat \varrho \cdot M^* \hat \psi$ where the multiplication
is meant pointwise. (This, by the way, characterizes all deterministic
Markov operator.) We want to write the multiplicative relation in form
of operators. To do this let us define the multiplication by $\hat
\varrho$ as $\Pi_{\hat \varrho}: Y^* \rightarrow Y^*$, with
$(\Pi_{\hat \varrho} \hat \psi )_j= \hat \varrho_j \cdot \hat
\psi_j$. Hence, we conclude for a deterministic Markov operator that
$M^*\Pi_{\hat \varrho} = \Pi_{M^*\hat \varrho}M^*$. Dualizing this
equation, we get $\Pi_{\hat \varrho}^*M = M\Pi_{M^*\hat
  \varrho}^*$. Note, that the adjoint operator has a simple form:
$\Pi_{\hat \varrho}^* : Y\rightarrow Y$, $\Pi_{\hat \varrho}^*\hat
c=\DDD_{\hat c}\hat\varrho$.  So summarizing
\begin{align}\label{EqDetMOProducts}
\Pi_{\hat \varrho}^*M = M\Pi_{M^*\hat \varrho}^*  \quad \text{and} 
\quad \Pi_{\hat \varrho}^*\hat c=\DDD_{\hat c}\hat\varrho.
\end{align}

In the limit process the species connected by fast reactions are identified. This is done by a linear \textit{coarse-graining-operator}, which is the adjoint of $M^*$, $M: X\rightarrow Y$.
In matrix representation induced by the canonical basis, we have
\begin{align}
\label{eq:Def.cgM}
 M: X\approx\R^I \rightarrow Y \approx \R^J,~~ M_{ji} := \begin{cases}
 1, ~~\mathrm{for~~}i \in \alpha_j,\\
 0, \mathrm{~~otherwise~}.
 \end{cases}
\end{align}
Note that the construction is such that $M$ maps $X\supset\mathrm{Prob}(\cal I)$ onto $Y\supset\mathrm{Prob}(\cal J)$. Since for $\alpha_j$ there is at least one $i$ with $i\in \alpha_j$, the matrix of $M$ has full rank and each column is a unit vector. Moreover, we point out that $M$ and $M^*$ only depend on the reaction network topology and the locations of the fast reactions, the specific reaction rates $A_{ij}$ do not matter (see Example \ref{ExampleOperatorForNetwork} below).

\subsection{Properties of the coarse-graining operator $M$ and
  the reconstruction operator $N$}
\label{su:PropCG+Recov}

Recall the duality map $\DDD_{w^0}$, which is a represented by a diagonal
matrix with entries $w^0>0$, connects the concentrations and the
relative densities, i.e.
\begin{align*}
 \varrho \in X^*\xrightarrow{\DDD_{w^0}} c\in X.
\end{align*}
The subset of $X^*$ which consists of the equilibrated densities
$\varrho_i$ is denoted by $X^*_{\mathrm{eq}}$, i.e.
\begin{align*}
X^*_{\mathrm{eq}} :=\bigset{\varrho\in X^* }
  {\forall\, i_1\sim_F i_2 : \ \varrho_{i_1}=\varrho_{i_2} }.
\end{align*}
\EEE For the limit system, we define the stationary measure (denoted
by $\hat w$) by $\hat w = Mw^0$.  Since $M^*$ is a deterministic
Markov operator, we have the following characterization of the
multiplication operator induced by $\hat w$.

\begin{lem}
\label{le:CommDiagram}
  Let $M^* : Y^*\rightarrow X^*$ be a deterministic Markov operator
  induced by a function $\phi : \{1, \dots, I\}\rightarrow \{1, \dots,
  J\}$ and let $w\in X$. Then $Mw = \hat w$ if and only if $\DDD_{\hat w}
  = M \DDD_{w} M^*$.
\end{lem}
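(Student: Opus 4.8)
The plan is to reduce the biconditional to a single operator identity, namely
\[
M\DDD_w M^* = \DDD_{Mw} \quad \text{in } \mathrm{Lin}(Y^*,Y),
\]
which I claim holds for \emph{every} $w\in X$. Granting this, the lemma is immediate: if $Mw=\hat w$, then $M\DDD_w M^*=\DDD_{Mw}=\DDD_{\hat w}$; conversely, if $\DDD_{\hat w}=M\DDD_w M^*=\DDD_{Mw}$, then reading off the (necessarily unique) diagonal entries of the two diagonal operators gives $\hat w=Mw$. Thus the entire content of the statement sits in the displayed identity, and the two directions follow from the trivial fact that a diagonal matrix both is determined by, and determines, its diagonal.

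To establish the identity I would argue coordinate-free using the multiplicative relations \eqref{EqDetMOProducts} for the deterministic Markov operator. Starting from $M^*\hat\varrho\in X^*$, the pointwise product with $w$ can be rewritten via the second relation in \eqref{EqDetMOProducts} (applied in $X$ with $\hat c=w$, $\hat\varrho$ replaced by $M^*\hat\varrho$) as $\DDD_w M^*\hat\varrho=\Pi_{M^*\hat\varrho}^* w$. Applying the intertwining relation $M\Pi_{M^*\hat\varrho}^*=\Pi_{\hat\varrho}^*M$ from \eqref{EqDetMOProducts} then yields $M\DDD_w M^*\hat\varrho=\Pi_{\hat\varrho}^*(Mw)$, and a final use of $\Pi_{\hat\varrho}^*\hat c=\DDD_{\hat c}\hat\varrho$ with $\hat c=Mw\in Y$ gives $\Pi_{\hat\varrho}^*(Mw)=\DDD_{Mw}\hat\varrho$. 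Since $\hat\varrho\in Y^*$ is arbitrary, the operator identity follows.

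As a sanity check one can also compute directly in the canonical bases: from $M_{ji}=1$ exactly when $\phi(i)=j$ and $(M^*\hat\varrho)_i=\hat\varrho_{\phi(i)}$ one gets $(M\DDD_w M^*\hat\varrho)_j=\sum_{i\in\alpha_j} w_i\,\hat\varrho_j=(Mw)_j\,\hat\varrho_j$, which is exactly $\DDD_{Mw}\hat\varrho$ and confirms the identity. I do not expect a genuine obstacle here; the only point needing care is that the argument uses the \emph{deterministic} (hence multiplicative) character of $M^*$ in an essential way --- it is precisely multiplicativity, encoded in \eqref{EqDetMOProducts}, that forces $M\DDD_w M^*$ to be diagonal rather than merely positive. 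For a general, non-deterministic Markov operator the product $M\DDD_w M^*$ would acquire off-diagonal entries and the equivalence would break down, so the main thing to keep straight is the bookkeeping of adjoints and the correct placement of the multiplicativity relation.
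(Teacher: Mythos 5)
Your proof is correct and follows essentially the same route as the paper: the core step in both is the chain $\DDD_{Mw}\hat\varrho=\Pi_{\hat\varrho}^*Mw=M\Pi_{M^*\hat\varrho}^*w=M\DDD_wM^*\hat\varrho$ obtained from \eqref{EqDetMOProducts}. The only (cosmetic) difference is in the converse direction, where the paper evaluates both operators at $\one_{Y^*}$ while you invoke the universal identity $M\DDD_wM^*=\DDD_{Mw}$ together with the fact that a diagonal operator is determined by its diagonal --- these amount to the same observation.
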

\begin{proof}
Assume that $\DDD_{\hat w} = M \DDD_{w} M^*$ holds. Evaluating both
sides at the constant vector $\one_{Y^*}$, we get $\DDD_{\hat w}
\one_{Y^*} = \hat w$ and $M \DDD_{w} M^* \one_{Y^*} = M \DDD_w \one_{X^*} =
Mw$, since $M^*$ is a Markov operator which maps $\one_{Y^*}\mapsto
\one_{X^*}$. This proves the claim in one direction.

Assume $\hat w = Mw$ we have to show that $\DDD_{M w} = M \DDD_{w}
M^*$.  We use statement (\ref{EqDetMOProducts}) for deterministic
Markov operators and find $ \DDD_{Mw} \hat \varrho = \Pi_{\hat
  \varrho}^* Mw = M\Pi_{M^*\hat \varrho}^*w = M \DDD_w M^* \hat
\varrho$.
\end{proof}
If $M^*$ is not a deterministic Markov operator but a
general one, then the above relation will not hold.

We assumed that all equivalence classes $\alpha_j$ are non-empty and
hence, each row of our coarse-graining operator $M$ defined in
\eqref{eq:Def.cgM} \EEE has at least one entry ``$1$''. In particular, this
implies that $\hat w$ is strictly positive and hence, $\DDD_{\hat w}$ is
invertible.  In particular, we proved that the following diagram
commutes:
\[ \begin{tikzcd} c\in X \arrow{r}{\DDD_{w^0}^{-1}} \arrow[swap]{d}{M} &
  \varrho \in X^* \supset X^*_{\mathrm{eq}} =\bigset{\varrho\in X^* }
  {\forall\, i_1\sim_F i_2 : \ \varrho_{i_1}=\varrho_{i_2} } \\%
  \hat c \in Y \arrow{r}{\DDD_{\hat w}^{-1}} & Y^*\arrow{u}{M^*}
\end{tikzcd}
\]

The crucial object is the following operator $N:Y\rightarrow X$, which
"inverts" the coarse-graining operator $M:X\rightarrow Y$, by
mapping coarse-grained concentrations $\hat c\in Y$ to concentrations
$c\in X$ (see also \cite{Step13IMOM}, where the operator is introduced
for its connection to the direction of time). We call $N$ a
\emph{reconstruction operator} as it reconstructs the full information on the
density $c\in X$ from the coarse-grained vector $\hat c\in Y$
assuming, of course, microscopic equilibrium.  More precisely,
 $N$ is defined via 
\begin{align}
\label{eq:def.N.N*}
N := \DDD_{w^0} M^* \DDD_{\hat w}^{-1}:Y\rightarrow X \quad
\text{such that} \quad  N^* = \DDD_{\hat w}^{-1}M\DDD_{w^0}:
X^*\rightarrow Y^*.
\end{align}
While the coarse-graining operator $M$ simply merges the masses
within the corresponding equivalence classes, the reconstruction
operator $N$ redistributions the masses in each equivalence class
proportional to the equilibrium measure. As a result $P=NM$ will be a
projection from general states to states in local equilibrium.

These and other important properties of the operator $M$ and $N$ and
their adjoints $M^*$ and $N^*$ \EEE are summarized in the next
proposition, which is independent of the generators
$A^\eps =A^S+\frac1\eps A^F$.

\begin{prop}
\label{pr:OperatorN}
Let $M^*:Y^*\rightarrow X^*$ be a deterministic Markov operator as
in Lemma \ref{le:CommDiagram} with adjoint $M:X\rightarrow Y$ and 
let $\hat  w :=Mw^0$ for some $w^0 \in {]0,1[}^I\subset \sfQ$. 
Moreover, $N$ and $N^*$ be defined as in \eqref{eq:def.N.N*}, then 
the following holds:
\begin{enumerate}
\item $N^*$ is a Markov operator.
\item $MN=\mathrm{id}_Y$ or $N^*M^* = \mathrm{id}_{Y^*}$, i.e. $N^*$
  is a left-inverse of the Markov operator $M^*$. 
\item $P:=NM$ \EEE is a projection on $X$, which leaves the range of
  $\DDD_{w^0}M^*: Y^*\rightarrow X$ invariant. The adjoint $M^*N^*$ is a
  projection as well, which leaves the range of $M^*$ invariant. 
\item $N\hat w = w^0$, i.e. $N$ inverts w.r.t. the stationary measure.
\item The operator $P^*= M^*N^*$ is a Markov operator on $X^*$ and
  its adjoint $P= NM$ has the stationary measure $w^0$. Moreover,
  $P^*$ satisfies detailed balance w.r.t. $w^0$.
\end{enumerate}
\end{prop}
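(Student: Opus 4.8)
The plan is to deduce all five statements purely algebraically from the single structural identity $\DDD_{\hat w} = M\DDD_{w^0}M^*$ furnished by Lemma~\ref{le:CommDiagram} (applicable since $\hat w = Mw^0$), together with the defining properties of the deterministic Markov operator $M^*$, namely $M^*\one_{Y^*} = \one_{X^*}$ and the nonnegativity of all entries of $M$ and $M^*$. Everything then reduces to bookkeeping with the positive diagonal matrices $\DDD_{w^0}$, $\DDD_{\hat w}$ and the definitions $N = \DDD_{w^0}M^*\DDD_{\hat w}^{-1}$ and $N^* = \DDD_{\hat w}^{-1}M\DDD_{w^0}$; no inequality or limiting argument is required.

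I would establish (ii) first, as it underlies the projection statements. Substituting the definition of $N$ and using $M\DDD_{w^0}M^* = \DDD_{\hat w}$ gives $MN = M\DDD_{w^0}M^*\DDD_{\hat w}^{-1} = \DDD_{\hat w}\DDD_{\hat w}^{-1} = \mathrm{id}_Y$, and dualizing yields $N^*M^* = \mathrm{id}_{Y^*}$. For (iv), $N\hat w = \DDD_{w^0}M^*\DDD_{\hat w}^{-1}\hat w = \DDD_{w^0}M^*\one_{Y^*} = \DDD_{w^0}\one_{X^*} = w^0$, invoking $M^*\one_{Y^*} = \one_{X^*}$. For (i), the mirror computation $N^*\one_{X^*} = \DDD_{\hat w}^{-1}Mw^0 = \DDD_{\hat w}^{-1}\hat w = \one_{Y^*}$ shows the constant vector is preserved, while $N^*$ has nonnegative entries (a product of the nonnegative matrix $M$ with positive diagonal factors) and no zero rows since each class is nonempty; hence $N^*$ is a Markov operator.

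For (iii), idempotency of $P = NM$ is immediate from (ii): $P^2 = N(MN)M = NM = P$, and transposing shows $P^* = M^*N^*$ is likewise a projection. To see that $P$ fixes the range of $\DDD_{w^0}M^*$ I would compute $P\DDD_{w^0}M^* = NM\DDD_{w^0}M^* = N\DDD_{\hat w} = \DDD_{w^0}M^*$, so $P$ restricts to the identity there; dualizing gives $P^*M^* = M^*$, the corresponding statement for the range of $M^*$. Finally, for (v), $P^* = M^*N^*$ is a composition of the Markov operators $M^*$ and $N^*$ (the latter by (i)), hence Markov, and $Pw^0 = NMw^0 = N\hat w = w^0$ by (iv), so $w^0$ is stationary.

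The detailed-balance assertion is the heart of the statement, and the only point that is at all delicate, since it requires keeping the adjoints of the rectangular operators $M,M^*,N,N^*$ straight. Writing out $P\DDD_{w^0} = \DDD_{w^0}M^*\DDD_{\hat w}^{-1}M\DDD_{w^0}$ exhibits a palindromic product of self-adjoint diagonal factors flanking $\DDD_{\hat w}^{-1}$, so that $(P\DDD_{w^0})^* = \DDD_{w^0}M^*\DDD_{\hat w}^{-1}M\DDD_{w^0} = P\DDD_{w^0}$. This symmetry of $P\DDD_{w^0}$ is exactly reversibility with respect to $w^0$, i.e.\ the detailed-balance condition, which completes the proof once the identity $\DDD_{\hat w} = M\DDD_{w^0}M^*$ is in hand.
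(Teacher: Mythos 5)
Your proposal is correct and follows essentially the same route as the paper: all five claims are reduced to the identity $\DDD_{\hat w} = M\DDD_{w^0}M^*$ from Lemma~\ref{le:CommDiagram} together with $M^*\one_{Y^*}=\one_{X^*}$, and your detailed-balance computation $P\DDD_{w^0} = \DDD_{w^0}M^*\DDD_{\hat w}^{-1}M\DDD_{w^0} = (P\DDD_{w^0})^*$ is exactly the palindromic identity the paper uses. The only difference is that you spell out the algebra the paper leaves implicit, which is harmless.
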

\begin{proof}
 Clearly, $N^*$ is non-negative and $N^*\one_{X^*} = \DDD_{\hat  w}^{-1}
 M \DDD_{w^0}\one_{X^*} = \DDD_{\hat w}^{-1}Mw^0 =
\one_{Y^*}$ holds. This proves the first statement. 

Lemma \ref{le:CommDiagram} implies that $MN=\mathrm{id}_Y$
and that $NM$ is a projection on $X$, which leaves the range of
$\DDD_{w^0}M^*: Y^*\rightarrow X$ invariant. The fourth claim is also
trivial. It is also not hard to see that $P^*$ is a Markov operator
and that its adjoint has the stationary measure $w^0$. Moreover,
detailed balance holds:
 \begin{align*}
\DDD_{w^0}P^* = \DDD_{w^0}M^* N^* = \DDD_{w^0}M^*\DDD_{\hat
  w}^{-1}M\DDD_{w^0} = NM\DDD_{w^0} = P\DDD_{w^0}. 
\end{align*}
This proves the result. 
\end{proof}

The following example shows how the operators look like in a specific
case. 

\begin{exam}\label{ExampleOperatorForNetwork}
  For the reaction network in Example \ref{ExampleNetworks}(a) we have
  $I=4$ with only one fast reaction $2\sim_F 3$, hence $J=3$. Using
  the numbering $\alpha_1=\{1\}$, $\alpha_2=\{2,3\}$, and
  $\alpha_3=\{4\}$ and the stationary measures
  $w=(w_1,w_2,w_3,w_4)^\top \in X$ and $\hat w = (w_1,w_2{+}w_3,
  w_4)^\top \in Y$, respectively, we find
\begin{align*}
  M=\begin{pmatrix}
      1 & 0 & 0 & 0\\
      0 & 1 & 1 & 0\\
      0 & 0 & 0 & 1
     \end{pmatrix}, 
\ \ 
 N= \begin{pmatrix}
       1 & 0 & 0 \\
       0 & \frac{w_2}{w_2+w_3} & 0 \\
       0 & \frac{w_3}{w_2+w_3} & 0 \\
       0 & 0 & 1
     \end{pmatrix},
\text{ and } 
P=NM=\begin{pmatrix}
       1 & 0 &  0 & 0 \\
       0 & \frac{w_2}{w_2+w_3} &  \frac{w_2}{w_2+w_3} &0 \\
       0 & \frac{w_3}{w_2+w_3} & \frac{w_3}{w_2+w_3} & 0 \\
       0 & 0 & 0& 1
     \end{pmatrix}.
  \end{align*}
\end{exam}

\subsection{The limit equation and the coarse-grained equation}
\label{su:LimitEqn}

As a direct consequence of Proposition \ref{pr:OperatorN} we obtain a
decomposition of the state space $X\approx \R^I$ into the
microscopically equilibrated states 
\[
c = Pc \in \sfQ_\mathrm{eq}:=P\sfQ \subset 
  X_\mathrm{eq}:=PX=\bigset{ c \in X}{A^Fc =0},
\]
which are measures having constant density with respect to $w^0$,
and the \EEE component $(I{-}P)c \in X_\mathrm{fast}:=(I{-}P)X$ that
disappears exponentially on the time scale of the fast reactions.  We
emphasize that the following result does not use the DBC
\eqref{eq:cond.A.DBC}.

\begin{prop}
\label{pr:DecompX}
Under the assumptions \eqref{eq:cond.A1}--\eqref{eq:cond.A2} we have
\begin{subequations}
  \label{eq:DecompoX}
 \begin{align}
    \label{eq:DecompoX.AA}
   & PA^F=A^FP=0 \in \R^{I\ti I},\quad MA^F = 0 \in \R^{J\ti I}, 
  \quad A^FN=0 \in \R^{I\ti J}, 
  \\
   & \label{eq:DecompoX.a} 
     X= X_\mathrm{eq} \oplus X_\mathrm{fast} \quad \text{with }\\
    \label{eq:DecompoX.b}
   & X_\mathrm{eq} = \mathrm{kernel}(A^F) =
             \mathrm{range}(P)=\mathrm{range}(N) \text{ and}\\
  \label{eq:DecompoX.c}
   & X_\mathrm{fast} = \mathrm{range}(A^F) =
             \mathrm{kernel}(P)=\mathrm{kernel}(M).
 \end{align}
\end{subequations}
Here, $X_\mathrm{fast}$ depends on $M$ only, i.e.\ only on the
reaction graph of $A^F$, whereas $X_\mathrm{eq} $ depends on $A^S$ and
$A^F$ through $w^0$. 
\end{prop}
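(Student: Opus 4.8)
The plan is to reduce everything to three algebraic facts: the annihilation identities $MA^F=0$ and $A^FN=0$, the projection structure $P=NM$ with $MN=\mathrm{id}_Y$ from Proposition~\ref{pr:OperatorN}, and the single dimension count $\dim\mathrm{kernel}(A^F)=J$. Once these are in hand, all seven identities in \eqref{eq:DecompoX} drop out by linear algebra.

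First I would pin down $w^0$: since $w^\eps$ is stationary, $\big(A^S+\tfrac1\eps A^F\big)w^\eps=0$, so multiplying by $\eps$ and letting $\eps\to0$ with $w^\eps\to w^0$ (assumption \eqref{eq:cond.A2}) gives $A^Fw^0=0$, i.e.\ $w^0\in\mathrm{kernel}(A^F)$. The structural heart of the argument is the next step, which I expect to be the main obstacle (though it is standard): the equivalence classes $\alpha_j$ of $\sim_F$ carry no fast reactions between distinct classes, so $A^F$ is block-diagonal along the partition $\cI=\alpha_1\sqcup\dots\sqcup\alpha_J$, and by \eqref{eq:cond.A1} each diagonal block is the adjoint of an irreducible Markov generator. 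Perron--Frobenius then gives a one-dimensional kernel per block, spanned by the positive restriction $w^0|_{\alpha_j}$ extended by zero; writing $v^{(j)}_i:=w^0_i$ for $i\in\alpha_j$ and $0$ otherwise, the $v^{(j)}$ have disjoint support and thus $\mathrm{kernel}(A^F)=\mathrm{span}\{v^{(1)},\dots,v^{(J)}\}$ is exactly $J$-dimensional.

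Next I would verify the annihilation identities. For $MA^F=0$: the $k$-th column of $A^F$ is supported on the block of $k$ and sums to zero (adjoint of a Markov generator), so $(MA^F)_{jk}=\sum_{i\in\alpha_j}(A^F)_{ik}$ equals either a trivial empty sum or the full column sum, hence $0$. For $A^FN=0$ I read off the range of $N$ from its formula $N=\DDD_{w^0}M^*\DDD_{\hat w}^{-1}$: vectors $M^*\hat\varrho$ are constant on each class, so $\DDD_{w^0}M^*\hat\varrho$ has the form ``$w^0$ times a class-constant'', which is precisely $\mathrm{span}\{v^{(j)}\}$. Since $\DDD_{\hat w}^{-1}$ is invertible, $\mathrm{range}(N)=\mathrm{range}(\DDD_{w^0}M^*)=\mathrm{span}\{v^{(j)}\}=\mathrm{kernel}(A^F)$, which gives $A^FN=0$ at once. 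Then $PA^F=NMA^F=0$ and $A^FP=A^FNM=0$, completing \eqref{eq:DecompoX.AA}.

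Finally I would assemble the decomposition. As $P=NM$ is a projection (Proposition~\ref{pr:OperatorN}), $X=\mathrm{range}(P)\oplus\mathrm{kernel}(P)$, and I set $X_\mathrm{eq}:=\mathrm{range}(P)$, $X_\mathrm{fast}:=\mathrm{kernel}(P)$. Using $MN=\mathrm{id}_Y$ one checks $\mathrm{range}(P)=\mathrm{range}(N)$ (since $Ny=P(Ny)$) and $\mathrm{kernel}(P)=\mathrm{kernel}(M)$ (since $MNMx=Mx$); combined with the previous paragraph this yields $X_\mathrm{eq}=\mathrm{range}(N)=\mathrm{range}(P)=\mathrm{kernel}(A^F)$, establishing \eqref{eq:DecompoX.b}. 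For \eqref{eq:DecompoX.c} I note $\mathrm{rank}\,M=J$, so $\dim X_\mathrm{fast}=\dim\mathrm{kernel}(M)=I-J=\dim\mathrm{range}(A^F)$ (the latter by $\dim\mathrm{kernel}(A^F)=J$); since $MA^F=0$ forces $\mathrm{range}(A^F)\subseteq\mathrm{kernel}(M)$, the equal dimensions give $\mathrm{range}(A^F)=\mathrm{kernel}(M)=X_\mathrm{fast}$. The closing remark that $X_\mathrm{fast}$ depends on $M$ alone while $X_\mathrm{eq}$ depends on $w^0$ is then immediate from the constructions of $M$ and $N$.
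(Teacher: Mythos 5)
Your proof is correct and follows essentially the same route as the paper: establish $MA^F=0$ by column sums, $A^FN=0$ via $A^Fw^0=0$ and the structure of $N$, and then use the projection $P=NM$ together with dimension counting. The only difference is that you make explicit the block-diagonal Perron--Frobenius argument showing $\dim\mathrm{kernel}(A^F)=J$, which the paper leaves implicit behind the phrase ``dimension counting gives the desired equality''.
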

\begin{proof} By construction of $M$ from the reaction network induced
by $A^F$ we immediately obtain $\mathrm{range}(A^F) =
\mathrm{kernel}(M)$. Indeed, the entries of $M$ are all $0$ or $1$,
where the $j$th row contains only the entry $1$ exactly for $i\in
\alpha(j)$. Thus, these $1$s correspond to the mass conservation in
the corresponding equivalence class $\alpha(j)\subset \{1,\ldots,
I\}$, and $MA^F=0$ follows, which implies $ \mathrm{range}(A^F) 
\subset \mathrm{kernel}(M)$. Dimension counting gives the desired
equality.
 
Using the injectivity of $N$ and $P=NM$ we have shown
\eqref{eq:DecompoX.c}.

To establish the relation for $X_\mathrm{eq} $ it suffices to show
$\mathrm{kernel}(A^F)= \mathrm{range}(N)$, since the surjectivity of
$M$ and $P=NM$ gives $ \mathrm{range}(N)= \mathrm{range}(P)$.

Using the dimension counting it is even sufficient to show
$A^FN=0$. Firstly, we use
$0=A^\eps w^\eps =(A^S+\frac 1 \eps A^F)w^\eps$, which gives
$A^Fw^\eps \to 0$, and hence $A^Fw^0=0$. Moreover, we observe that the $j$th column of
$N=\DDD_{w^0} M^* \DDD_{\hat w}$ contains the unique equilibrium
measure associated with the equivalence class $\alpha(j)\subset
\{1,\ldots, I\}$, which implies that $A^FN=0$.\EEE%
\end{proof}

Based on the above result we can formally pass to the limit in our
linear reaction system $\dot c^\eps = (A^S{+}\frac1\eps A^F)c^\eps$. Multiplying
the equation from the left by $M$ we can use $MA^F=0$ and see that
the term of order $\frac1\eps$ disappears. Moreover, it is
expected that the fast reactions equilibrate, so in the limit $\eps\to
0$ we expect the microscopic equilibrium condition $A^F c^\eps \to
0$. Hence, we expect that $c^\eps:[0,T]\to \sfQ$ converges to a function
$c^0:[0,T]\to \sfQ$ which solves the limit equation
\begin{equation}
  \label{eq:LimitEqn}
  M \dot c(t) = MA^S c(t) \quad \text{ and } \quad A^F c(t)=0. 
\end{equation}
Before giving a proof for the convergence $c^\eps \to c$
 we want state that this
system has a unique solution for each initial condition $c(0)$ that is
compatible, i.e.\ $A^F c(0)=0$ and that this solution is characterized
by solving the so-called \emph{coarse-grained equation}. 

\begin{thm}[Coarse-grained equation]
\label{th:CoarseGrEq} For each $c_0\in \sfQ$ with $A^Fc_0=0$ there is a
unique continuous solution $c:[0,T]\to \sfQ$ of \eqref{eq:LimitEqn} with $c(0)=c_0$.  
This solution is obtained by solving the coarse-grained ODE
\begin{equation}
  \label{eq:CoarseGraEqn}
  \dot{\hat c} = MA^S N \, \hat c, \quad \hat c(0)=Mc_0
\end{equation}
and setting $c(t)=N \hat c(t)$.  Moreover, the stationary solution is
$\hat w=Mw^0$. 
\end{thm}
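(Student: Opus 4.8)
The plan is to reduce the claim about the differential-algebraic limit system \eqref{eq:LimitEqn} to a statement about the genuinely finite-dimensional ODE \eqref{eq:CoarseGraEqn}, using the algebraic identities from Proposition \ref{pr:DecompX}. First I would establish the \emph{correspondence} between solutions of the two systems. Given a solution $c:[0,T]\to\sfQ$ of \eqref{eq:LimitEqn} with $A^Fc_0=0$, set $\hat c := Mc$. Multiplying $M\dot c = MA^Sc$ and using the constraint $A^Fc=0$, i.e. $c\in X_\mathrm{eq}=\mathrm{range}(P)=\mathrm{range}(N)$ by \eqref{eq:DecompoX.b}, I can write $c=Pc=NMc=N\hat c$. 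Substituting $c=N\hat c$ yields $\dot{\hat c}=M\dot c = MA^S c = MA^S N\hat c$, which is exactly \eqref{eq:CoarseGraEqn} with $\hat c(0)=Mc_0$. Conversely, given the unique solution $\hat c$ of the linear ODE \eqref{eq:CoarseGraEqn} I define $c:=N\hat c$ and must check it solves \eqref{eq:LimitEqn}: the constraint $A^Fc = A^FN\hat c = 0$ holds by $A^FN=0$ from \eqref{eq:DecompoX.AA}, and $M\dot c = MN\dot{\hat c} = \dot{\hat c} = MA^SN\hat c = MA^Sc$ using $MN=\mathrm{id}_Y$ (Proposition \ref{pr:OperatorN}(2)) together with $c=N\hat c$; here I should note $MA^Sc = MA^SN\hat c$ because $c=N\hat c$ exactly, so no hidden projection is needed.

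Having set up this bijection, existence and uniqueness follow immediately. The coarse-grained equation \eqref{eq:CoarseGraEqn} is a \emph{linear} autonomous ODE on the finite-dimensional space $Y\approx\R^J$, so by the standard Picard--Lindelöf theorem it has a unique global solution $\hat c(t)=\e^{tMA^SN}Mc_0$ for every initial datum, and in particular on $[0,T]$. Transporting this solution back via $c=N\hat c$ gives one solution of \eqref{eq:LimitEqn}; transporting any putative solution of \eqref{eq:LimitEqn} forward via $M$ gives a solution of \eqref{eq:CoarseGraEqn} with the same initial value $Mc_0$, which by uniqueness for the ODE must coincide with $\hat c$, and applying $N$ (and using $c=N\hat c$ for \eqref{eq:LimitEqn}-solutions) recovers $c$ uniquely. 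Thus existence and uniqueness for \eqref{eq:LimitEqn} are inherited from the elementary theory for \eqref{eq:CoarseGraEqn}.

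The statement about the stationary measure is then a short computation. Since $A^Fw^0=0$ (shown in the proof of Proposition \ref{pr:DecompX}) we have $w^0\in X_\mathrm{eq}$, hence $w^0=Pw^0=N(Mw^0)=N\hat w$ with $\hat w=Mw^0$, consistent with Proposition \ref{pr:OperatorN}(4). To see that $\hat w$ is stationary for \eqref{eq:CoarseGraEqn} I must verify $MA^SN\hat w=0$, i.e. $MA^Sw^0=0$. This follows from $0=A^\eps w^\eps = A^Sw^\eps + \tfrac1\eps A^Fw^\eps$: multiplying by $M$ and using $MA^F=0$ gives $MA^Sw^\eps=0$ for every $\eps>0$, and passing to the limit $\eps\to 0$ (using $w^\eps\to w^0$ from \eqref{eq:cond.A2}) yields $MA^Sw^0=0$, whence $MA^SN\hat w = MA^Sw^0 = 0$.

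The argument is essentially a chain of the operator identities from Propositions \ref{pr:OperatorN} and \ref{pr:DecompX}, so I expect no genuine obstacle. The only point requiring care is the \emph{equivalence} of the constrained system \eqref{eq:LimitEqn} with the reduced ODE: one must use that $A^Fc=0$ forces $c=N\hat c$ \emph{exactly} (not merely $Mc=\hat c$), so that the slow drift $A^Sc$ and its reduction $A^SN\hat c$ agree and no term is silently discarded. This is precisely where the identification $X_\mathrm{eq}=\mathrm{range}(N)$ from \eqref{eq:DecompoX.b} does the essential work; once it is invoked the reduction is exact and the remaining steps are immediate.
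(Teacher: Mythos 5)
Your proposal is correct and follows essentially the same route as the paper: the equivalence $A^Fc=0\Leftrightarrow c=NMc$ from \eqref{eq:DecompoX.b} reduces the constrained system to the linear ODE \eqref{eq:CoarseGraEqn}, and stationarity of $\hat w$ is obtained by passing to the limit in $MA^Sw^\eps=MA^\eps w^\eps=0$. You merely spell out the converse direction ($c=N\hat c$ solves \eqref{eq:LimitEqn}, via $A^FN=0$ and $MN=\mathrm{id}_Y$) in more detail than the paper does.
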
 
\begin{proof} On the one hand, 
by \eqref{eq:DecompoX.b} we know that $A^F c=0$ is equivalent to $
c=Pc=NMc$.  Thus, for any solution $c$ of \eqref{eq:LimitEqn}
the coarse-grained state $\hat c = M c$ satisfies the coarse-grained
equation \eqref{eq:CoarseGraEqn}.

On the other hand, \eqref{eq:CoarseGraEqn} is a linear ODE in $\hat
\sfQ\subset Y$ which has a unique solution satisfying $\hat c(t)\in \hat
\sfQ$. This proves the first result.

To see that $\hat w=Mw^0$ is a stationary measure, we use $A^Fw^0=0$ and
\eqref{eq:DecompoX.a}  implies $Pw^0=w^0$. 
On the other hand using $MA^F=0$ we can pass to the limit in $0
=M0=MA^\eps w^\eps= MA^S w^\eps$ to obtain $MA^Sw^0=0$. Combing the
two results we find 
\[
\hat A \hat w = M A^S N(Mw^0)=MA^S\,Pw^0= MA^S w^0=0,
\]
which is the desired result.  
\end{proof}

We emphasize that the coarse-grained equation \eqref{eq:CoarseGraEqn}
is again a linear reaction system, describing the master equation for
a Markov process on $\cJ=\{1,\ldots,J\}$. The effective operator
$\hat A:=M A^S N$ can be interpreted in the following way: $N$ divides
the coarse-grained states into microscopically equilibrated states,
$A^S$ is the part of the slow reactions, and $M$ collects the states
according to their equivalence classes $\alpha(j)$. 

Using $M_{ji}=\delta_{j\phi(i)}$ and $N_{ij}=\frac{w^0_i}{\hat w_j}
\delta_{j\phi(i)} $ the coefficients of
the generator $\hat A = MA^SN$ are easily obtained by a suitable
average, namely
\begin{equation}
  \label{eq:hatA.j1j2}
  \hat A_{j_1j_2} = \sum_{i_1\in \alpha_{j_1}}   \sum_{i_2 \in
    \alpha_{j_2}}  A^S_{i_1 i_2} \frac{w^0_{i_2}}{\hat w_{j_2}}.  
\end{equation}

\subsection{Convergence of solutions on the level of the ODE}
\label{su:ConvergSol}

Finally, for mathematical completeness, we provide a simple and short
convergence proof. It can also be obtained as a special case of the
result in \cite{Both03ILRC}. Of course, the convergence of solutions
is also a byproduct of the EDP-convergence given below, see Lemma
\ref{le:EDPimpliesCvg}. The latter result, which is the main goal of
this work, provides convergence of the gradient structures, which is a
significantly stronger concept, because the coarse-grained equation
\eqref{eq:CoarseGraEqn} has many different gradient structures, while
the EDP-limit is unique.

\begin{thm}[Convergence of $c^\eps$ to $c^0$]
\label{thm:ConvergSol}
Assume \eqref{eq:cond.ALL} and 
consider solutions $c^\eps:[0,T]\to \sfQ$ of \eqref{eq2} such that
$Mc^\eps(0)\to \hat c_0$. Then, we have the convergences 
\[
  Mc^\eps \to  Mc^0 \text{ in } \rmC^{0}([0,T];X)
  \qquad \text{and}  \qquad 
  c^\eps \to c^0 \text{ in }\rmL^2([0,T];X),
\]
where $c^0$ is the unique solution of \eqref{eq:LimitEqn} with
$c^0(0)=N\hat c_0$.  
\end{thm}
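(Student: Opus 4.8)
The plan is to prove convergence by exploiting the two-scale structure of the equation, using the decomposition $X = X_\mathrm{eq}\oplus X_\mathrm{fast}$ from Proposition~\ref{pr:DecompX}. The key idea is that the slow variable $\hat c^\eps := Mc^\eps$ is controlled uniformly (its time derivative stays bounded because $MA^F=0$ kills the singular term), while the fast component $(I{-}P)c^\eps$ is forced to zero in an averaged ($\rmL^2$) sense by a dissipation/energy estimate. Thus I obtain strong compactness for $Mc^\eps$ in $\rmC^0$ and only weak-type control plus a vanishing-fast-part statement for the full $c^\eps$, which together identify the limit.

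First I would establish the \emph{uniform bound and equicontinuity of the slow variable}. Applying $M$ to \eqref{eq2} and using $MA^F=0$ (Proposition~\ref{pr:DecompX}) gives
\begin{align*}
 \frac{\d}{\dd t}\,Mc^\eps = MA^S c^\eps.
\end{align*}
Since $c^\eps(t)\in\sfQ$ is bounded in $X$ uniformly in $\eps$ and $t$, the right-hand side is bounded by $\|MA^S\|$, independently of $\eps$. Hence $\{Mc^\eps\}$ is uniformly bounded and uniformly Lipschitz in $t$, so by Arzel\`a--Ascoli a subsequence converges in $\rmC^0([0,T];Y)$ to some limit $\hat c$, with $\hat c(0)=\hat c_0$ by assumption.

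Second I would show the \emph{fast part vanishes in $\rmL^2$}. This is the main obstacle, and I expect it to require a Lyapunov functional. Using the DBC \eqref{eq:cond.A.DBC}, the operator $B^F:=A^F\DDD_{w^0}$ is symmetric and negative semidefinite with kernel exactly $X^*_\mathrm{eq}$ (equivalently $A^F$ has range $X_\mathrm{fast}$ and is coercive there). Testing the equation with the relative density $\varrho^\eps=\DDD_{w^0}^{-1}c^\eps$ against a squared quantity measuring $\|(I{-}P)c^\eps\|^2$, or directly differentiating the relative entropy $\mathcal H(c^\eps|w^0)$, produces a dissipation term of order $\frac1\eps$ times a coercive quadratic form in the fast component. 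Integrating in time and using that the entropy is bounded along the flow yields
\begin{align*}
 \frac1\eps\int_0^T \big\| (I{-}P)c^\eps(t)\big\|^2 \dd t \leq C,
\end{align*}
so that $(I{-}P)c^\eps\to 0$ strongly in $\rmL^2([0,T];X)$. The delicate point is establishing the spectral gap (coercivity of $A^F$ on $X_\mathrm{fast}$ uniformly in $\eps$), which follows from \eqref{eq:cond.A2} guaranteeing that the limiting fast graph stays connected within each cluster and $w^0>0$.

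Finally I would \emph{identify the limit and upgrade the full convergence}. From $(I{-}P)c^\eps\to 0$ in $\rmL^2$ we get $c^\eps = Pc^\eps + (I{-}P)c^\eps = N M c^\eps + o(1) = N\hat c^\eps + o(1)$; since $Mc^\eps\to\hat c$ in $\rmC^0$ and $N$ is continuous, it follows that $c^\eps\to N\hat c =: c^0$ strongly in $\rmL^2([0,T];X)$, and in particular $A^Fc^0=0$. Passing to the limit in the slow equation, using $c^\eps\to c^0$ in $\rmL^2$ to pass the limit under the time integral, gives $M\dot c^0 = MA^S c^0$, so $c^0$ solves \eqref{eq:LimitEqn} with $c^0(0)=N\hat c_0$. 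By Theorem~\ref{th:CoarseGrEq} this limit problem has a unique solution, whence the whole sequence (not just a subsequence) converges, completing the proof.
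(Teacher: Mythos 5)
Your proposal is correct and follows essentially the same route as the paper: uniform Lipschitz bounds on $Mc^\eps$ via $MA^F=0$ plus Arzel\`a--Ascoli, a quadratic Lyapunov/dissipation estimate exploiting the DBC to force $\frac1\eps\int_0^T|(I{-}P)c^\eps|^2\,\dd t\leq C$, recombination $c^\eps=N\hat c^\eps+(I{-}P)c^\eps$, limit passage in the integrated slow equation, and uniqueness of the limit problem to upgrade from subsequences. The only cosmetic difference is that the paper tests with $\DDD_{w^\eps}^{-1}c^\eps$ and then compares the resulting quadratic forms $\mathfrak Q_\eps$ and $\mathfrak Q_0$, whereas you test directly with $\DDD_{w^0}^{-1}c^\eps$; both yield the same coercivity of the fast dissipation on $X_\mathrm{fast}$.
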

\begin{proof} 
  \emph{Step 1: Weak compactness.}  
We first observe that $c^\eps:[0,T] \to \sfQ \subset [0,1]^I$ provides a
trivial a priori bound for $c^\eps$ in
$\rmL^\infty([0,T];\R^I)$. Hence, we may choose a subsequence (not
relabeled) such that $c^\eps \to c^0$ weakly in $\rmL^2([0,T];\R^I)$. 

\emph{Step 2: Compactness of coarse-grained concentrations.}  
With Step 1 we see that $\hat a{}^\eps := M
c^\eps$ is bounded in $\rmC^\mathrm{Lip}([0,T];\R^I)$, because of
$\dot{\hat a}{}^\eps = M\dot c^\eps = MA^S c^\eps$. Thus, there is a
subsequence (not relabeled) such that $\hat a{}^\eps \to \hat a{}^0$
in $\rmC^0([0,T];\R^J)$ and $\hat a{}^0(0)=\hat c_0$. Moreover, with
Step 1 we have $\hat a{}^0 = Mc^0$.

\emph{Step 3: Generation of microscopic equilibrium.} We take the dot
product of the ODE with the vector of relative densities
$c^\eps/w^\eps:=(c^\eps_i/w^\eps_i)_{i=1,..,I}$. Defining the
quadratic form $\cB_\eps(c)=\sum_{i=1}^I \frac {c_i^2}{2 w_i^\eps}$ we
obtain
\begin{align}
 \label{eq:Quadratic} 
  \frac{\d}{ \dd t} \cB_\eps(c^\eps)& = \dot{c}^\eps\cdot
  \frac{c^\eps}{w^\eps} = \big( A^\eps c^\eps) \cdot
  \frac{c^\eps}{w^\eps} 
  =  \frac1\eps \big( B^\eps c^\eps) \cdot c^\eps \quad \text{with } 
  \eps \DDD_{w^\eps}^{-1}A^\eps =:B^\eps = (B^\eps)^* \geq 0.
\end{align}
The latter relations follow from the DBC
\eqref{eq:cond.A.DBC}. Defining the quadratic functional $ \mathfrak Q_\eps(c):=
\int_0^T B^\eps c(t)\cdot c(t) \dd t $ and integrating
\eqref{eq:Quadratic} over $[0,T]$ gives
\[
\mathfrak Q_\eps(c^\eps) = \eps \cB(c^\eps(0)) - \eps \cB(c^\eps(T))
\leq C_1 \eps.
\]
Moreover, using $|w^\eps {-} w^0|\leq C_2\eps$ we find $
|\mathfrak Q_\eps(c) {-} \mathfrak Q_0(c)| \leq C_3 \eps$. 
Hence $\mathfrak Q_0(c^\eps) \leq \mathfrak Q_\eps (c^\eps) + C_3\eps
\leq C_1\eps + C_3 \eps$. 
Using the convexity of $\mathfrak Q_0$ the weak limit $c^0$ of
$c^\eps$ satisfies 
\[
0 \leq \mathfrak Q_0(c^0) \leq \liminf_{\eps \to 0} \mathfrak Q_0(
c^\eps) \leq \liminf_{\eps \to 0} (C_1{+}C_3)\eps = 0.
\]
Since $B^0 = \DDD_{w^0}^{-1} A^F $ is symmetric and positive semidefinite
we conclude $A^F c^0(t)=0$ a.e.\ in $[0,T]$. 
More precisely, by \eqref{eq:DecompoX.c} $c \mapsto \big(B^0c \cdot 
c \big)^{1/2}$ defines a norm on $X_\mathrm{fast}$ that is equivalent
to $c \mapsto |(I{-}P)c|$. Thus, we conclude $(I{-}P)c^\eps \to
(I{-}P)c^0$.  Moreover, Step 2 gives $Pc^\eps = NMc^\eps= N \hat
a{}^\eps \to NM c^0=Pc^0$ such that $c^\eps \to c^0$ in
$\rmL^2([0,T;\R^I)$ follows. 

\emph{Step 4. Limit passage in the ODE.} To see that $c^0$ satisfies
the limit equation \eqref{eq:LimitEqn} we pass to the limit in 
\[
Mc^\eps(t)=Mc^\eps(0)+ \int_0^t M A^S c^\eps(s) \dd s,
\] 
where the left-hand side converges by Step 2 and the right-hand side
by the assumption on the initial condition and by Step 3 and
Lebesgue's' dominated convergence theorem. Thus, $Mc^0(t)=Mc^0(0)+
\int_0^t MA^S c^0(s) \dd s$, and with $A^Fc^0=0$ from Step 3 the
desired limit equation \eqref{eq:LimitEqn} is established. 

As we already know that the solution of \eqref{eq:LimitEqn} is unique,
we conclude convergence of the whole family $(c^\eps)_{\eps>0}$,
instead of a subsequence only.  
\end{proof}

In the above proof the DBC \eqref{eq:cond.A.DBC} is not really
necessary, but it simplified our proof considerably.

\section{Generalized gradient structures}
\label{SectionGeneralGS}

This small section provides the general notions of gradient systems,
gradient-flow equations, the energy-dissipation principle (EDP), and
the three notions of EDP convergence. We follow the survey article
\cite{Miel16EGCG} and the more recent works
\cite{DoFrMi19GSWE,MiMoPe18?EFED}.   

\subsection{Gradient systems and the Energy-Dissipation Principle}
\label{su:GeneralGS}

A triple $(\sfQ, \cE,\cR)$ is called a \emph{gradient system} if
\begin{itemize}
 \item $\sfQ$ is a closed convex subset of a Banach space $X$,
 \item $\cE:\sfQ\rightarrow \R_\infty:=\R\cup\{\infty\}$ is a
   differentiable functional (e.g.\ free energy, negative entropy)
 \item $\cR: \sfQ\times X \rightarrow \Rinfty$ is a dissipation
   potential, i.e.\ for all $u\in \sfQ$ the functional $\cR(u,\cdot ): X
   \rightarrow \Rinfty$ is lower semicontinuous (lsc), nonnegative,
   convex and satisfies $\cR(u,0)=0$. 
\end{itemize}
(More general, $\sfQ$ can be a manifold, then $\cR$ is defined on the
tangent bundle $\mathrm T \sfQ$, but this generalization is not needed in
this work.)
A gradient system $(\sfQ, \cE,\cR)$ is called \textit{classical} if
$\cR(u,\cdot)$ is quadratic, i.e.\ if there are symmetric and positive
definite operators $\mathbb{G}(u):X\to X^*$ such that $\cR(u, v) = \frac 1 2\la
\mathbb{G}(u) v, v \ra$. But often $\cR(u,\cdot)$ is not quadratic
(e.g. for rate-independent processes such as elastoplasticity), see
\cite{Miel16EGCG} and reference therein. We define the dual
dissipation potential $\cR ^*$ using the Legendre transform via
\[
\cR^*(u,\xi) = (\cR(u,\cdot))^*(\xi):= \sup \bigset{\la \xi, v\ra -
\cR(u,v) } {v\in X } .
\] 
The gradient system is uniquely described by  $(Q,\cE, \cR)$
or, equivalently by $(Q,\cE, \cR^*)$ \EEE and, in particular, in this paper
we prefer \EEE the second representation.

The evolution of the states $u(t)$ in a gradient system are given in
terms of the so-called \emph{gradient-flow equation} that is given in
terms of $\cE$ and $\cR$ and can be formulated in three equivalent
ways:
\begin{equation}
  \label{eq:GradFlowEqn}
  \begin{aligned}
 \text{(I) }& \text{\em force balance in $X^*$.} \quad
   0\in\partial_{\dot u}\cR(u,\dot u) + \D\cE(u) \in X^*, 
 \\
 \text{(II) }& \text{\em power balance in $\R$.} \quad \cR(u,\dot u) +
   \cR^*(u, -\D\cE(u)) = -\la \D\cE(u), \dot u \ra, 
 \\
 \text{(III) } & \text{\em rate equation in $X$.} \quad \dot u
   \in \partial_{\xi}\cR^*(u, -\D\cE(u)) \in X, 
 \end{aligned}
\end{equation}
where $\partial$ is the set-valued partial subdifferential with
respect to the second variable.  

In general, we cannot expect that the solution of the
gradient-flow equation fill the whole state space. Clearly, along
solutions we want to have $\cE(u(t))<\infty$ for $t>0$. Moreover,
relation (III) asks that $-\D\cE(u(t))$ lies in the domain of
$\partial_\xi\cR^*(u(t),\cdot)$ for a.a.\ $t\in [0,T]$. Thus, we set 
\begin{equation}
  \label{eq:Dom.GS}
  \mathrm{Dom}(\sfQ,\cE,\cR):=\bigset{u \in \sfQ}{ \D\cE(u)\text{
      exists},\  \partial_{\xi}\cR^*(u, -\D\cE(u))\text{ is nonempty } }.
\end{equation}
Typically, one expects that solutions exist for all initial conditions
in the closure of $\mathrm{Dom}(\sfQ,\cE,\cR)$. 

These three formulations are the same due to the so-called
\textit{Fenchel equivalences} (cf.\ \cite{Fenc49CCF}): Let $Z$ be a
reflexive Banach space and $\Psi: Z\rightarrow \Rinfty$ be a proper,
convex and lsc, then for every all pairs $(v,\xi) \in Z \ti Z^*$ the
following holds:
\[
\text{(i) } \xi\in\partial \Psi(v) \quad \Longleftrightarrow \quad 
\text{(ii) } \Psi(v) + \Psi^*(\xi) = \la \xi, v \ra \quad \Longleftrightarrow \quad
\text{(iii) } v\in\partial\Psi^*(\xi).
\]
We emphasize that (ii) and (II) should be seen as scalar optimality
conditions, because the definition of the Legendre transform easily
gives the Young-Fenchel inequality, namely $\Psi(v)+\Psi^*(\xi)\geq \la \xi,v\ra$ for
all $(v,\xi)\in Z\ti Z^*$. 

Integrating the power balance (II) in \eqref{eq:GradFlowEqn} over
$[0,T]$ along a solution $u:[0,T]\to \sfQ$ and using the chain rule $\la
\D\cE(u(t)), \dot u(t) \ra= \frac{\d}{\d t} \cE(u(t))$ we find the
\textit{Energy-Dissipation Balance} (EDB):
\begin{align}
\label{eq:EDB.gen}
\cE(u(T)) + \int_0^T \! \Big(\cR(u(t),\dot u(t)) +\cR^*(u(t),
-\D\cE(u(t))) \Big) \dd t = \cE(u(0)). 
\end{align}
The following \emph{Energy-Dissipation Principle} (EDP) states that
solving \eqref{eq:EDB.gen} is equivalent to solving the gradient-flow
equation \eqref{eq:GradFlowEqn}.

\begin{thm}[Energy-dissipation principle, see e.g.\
  {\cite[Th.\,3.2]{Miel16EGCG}}]
\label{thm:EDP}
Assume that $\sfQ$ is a closed convex subset of $X=\R^I$, that $\cE\in
\rmC^1(\sfQ,\R)$, and that the dissipation potential $\cR(u,\cdot)$ is
superlinear uniformly in $u\in \sfQ$. Then, a function $u \in \mathrm
W^{1,1}([0,T];\sfQ)$ is a solution of the gradient-flow equation
\eqref{eq:GradFlowEqn} if and only if $u$ solves the
energy-dissipation balance \eqref{eq:EDB.gen}.
\end{thm}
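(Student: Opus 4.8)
The plan is to reduce everything to the pointwise Young--Fenchel inequality together with the chain rule for $t\mapsto \cE(u(t))$. For any $u\in \rmW^{1,1}([0,T];\sfQ)$ and a.e.\ $t$, applying the Young--Fenchel inequality recalled above to the proper, convex, lsc function $\Psi=\cR(u(t),\cdot)$ with $v=\dot u(t)$ and $\xi=-\D\cE(u(t))$ gives the pointwise estimate
\[
\cR(u(t),\dot u(t)) + \cR^*(u(t),-\D\cE(u(t))) \geq -\la \D\cE(u(t)),\dot u(t)\ra .
\]
The Fenchel equivalences then tell us that equality holds at $t$ \emph{if and only if} the gradient-flow equation \eqref{eq:GradFlowEqn} (in any of its three forms) holds at $t$. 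This single observation carries both implications of the theorem once the integrated version is set up correctly.

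Second, I would establish the chain rule. Since $\cE\in\rmC^1(\sfQ,\R)$ and $u\in\rmW^{1,1}$ is absolutely continuous, the composition $t\mapsto \cE(u(t))$ is absolutely continuous with $\frac{\d}{\dd t}\cE(u(t)) = \la \D\cE(u(t)),\dot u(t)\ra$ for a.e.\ $t$, and this derivative is integrable because $u([0,T])$ is compact, so $\D\cE(u(\cdot))$ is bounded, while $\dot u\in\rmL^1$. The uniform superlinearity of $\cR(u,\cdot)$ ensures that $\cR^*(u,\cdot)$ is everywhere finite and locally bounded, so $t\mapsto \cR^*(u(t),-\D\cE(u(t)))$ is bounded and integrable as well. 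Integrating the pointwise estimate over $[0,T]$ and inserting the chain rule yields the upper energy-dissipation estimate
\[
\int_0^T\!\! \Big(\cR(u,\dot u)+\cR^*(u,-\D\cE(u))\Big)\dd t \;\geq\; \cE(u(0))-\cE(u(T)),
\]
valid for \emph{every} $u\in\rmW^{1,1}([0,T];\sfQ)$.

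For the direction ``gradient flow $\Rightarrow$ EDB'' there is nothing more to do: along a solution the power balance (II) holds pointwise, so equality holds in the estimate above, which is exactly \eqref{eq:EDB.gen}; this is the computation already carried out before the theorem. For the converse, assume the EDB \eqref{eq:EDB.gen} holds. Rewriting $\cE(u(0))-\cE(u(T)) = -\int_0^T \la\D\cE(u),\dot u\ra\dd t$ via the chain rule, the EDB becomes
\[
\int_0^T \!\! \Big( \cR(u,\dot u)+\cR^*(u,-\D\cE(u)) + \la \D\cE(u),\dot u\ra \Big)\dd t = 0 .
\]
By the pointwise Young--Fenchel inequality the integrand is nonnegative a.e.; since all three terms are integrable (so no $\infty{-}\infty$ ambiguity arises), a nonnegative integrand with vanishing integral must vanish a.e. Thus equality holds in Young--Fenchel for a.e.\ $t$, and the Fenchel equivalence (ii)$\Leftrightarrow$(i),(iii) turns this into the gradient-flow equation \eqref{eq:GradFlowEqn} holding for a.e.\ $t$, as required.

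The main obstacle is not the convex-analytic core --- which is the one-line Young--Fenchel (in)equality --- but the measure-theoretic bookkeeping that makes the pointwise-to-integral passage rigorous: justifying the chain rule for $\cE\circ u$ under the mere $\rmW^{1,1}$-regularity of $u$, checking measurability of $t\mapsto\cR^*(u(t),-\D\cE(u(t)))$, and guaranteeing that every term in the EDB integrand is finite so that ``nonnegative with zero integral implies zero a.e.'' is legitimate. This is precisely where the hypotheses are consumed: $\cE\in\rmC^1$ together with the compact trajectory bounds $\D\cE(u(\cdot))$ and validates the chain rule, while the uniform superlinearity of $\cR$ keeps $\cR^*$ finite-valued so that the dissipation integral cannot conceal a hidden $+\infty$.
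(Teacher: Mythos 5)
Your argument is correct and is exactly the standard proof of the Energy-Dissipation Principle that the paper sketches around the theorem statement (integrating the power balance (II) via the chain rule for one direction, and combining the pointwise Young--Fenchel inequality with the ``nonnegative integrand with vanishing integral'' argument and the Fenchel equivalences for the converse); the paper itself defers the details to \cite[Th.\,3.2]{Miel16EGCG}, whose proof follows the same route. Your attention to the integrability and measurability bookkeeping is precisely where the hypotheses $\cE\in\rmC^1$ and uniform superlinearity of $\cR$ are used, so nothing is missing.
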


Again, the EDB is an optimality condition, because integrating the
Young-Fenchel inequality for arbitrary $\widetilde u \in \mathrm
W^{1,1}([0,T];\sfQ)$ and using the chain rule we obtain the estimate 
\begin{align}
\label{eq:EDEst.gen}
\cE(\widetilde u(T)) + \int_0^T \! \Big(\cR(\widetilde
u(t),\dot{\widetilde u}(t)) +\cR^*(\widetilde u(t),
-\D\cE(\widetilde u(t))) \Big) \dd t \geq  \cE(\widetilde u(0)). 
\end{align}

The above considerations show that an important quantity associated
with a gradient system $(\sfQ,\cE,\cR)$ is given by the \emph{dissipation
  functional} 
\begin{align*}
 \fD(u):=\int_0^T \! \Big(\cR(u(t),\dot u(t)) +\cR^*\big(u(t),
 -\D\cE(u(t)) \big) \Big)  \dd t, 
\end{align*}
which is defined for all curves $u \in \mathrm
W^{1,1}([0,T];\sfQ)$.

\subsection{General gradient systems and EDP-convergence}
\label{su:EDPcvg.General}

In the following, we consider a family of gradient systems $(X,
\cE_\eps, \cR_\eps)$ and define a notion of convergence on the level
of gradient systems which uniquely defines the limit or effective
system $(\sfQ,\cE_0,\cR_\eff)$. 
Our notion relies on the the energy-dissipation principle from above
and the so-called sequential $\Gamma$-convergence for
functionals, which is defined as follows. 


\begin{defi}[$\Gamma$-convergence, see e.g.\ \cite{Atto84VCFO}]
 For functionals $(I_\eps)_{\eps>0}$ on a Banach space $Z$ we
 say $I_\eps$ \emph{(strongly) $\Gamma$-converges to} $I$, and write
 $I_\eps\Gammlim I$, if the following two conditions hold: 
 \begin{enumerate}
 \item \emph{Liminf estimate.}\\ \indent if $u_\eps \to u$ in $Z$,
   then \quad $I(u) \ \leq \ \liminf_{\eps \to 0} I_\eps(u_\eps)$,
 \item \emph{Existence of recovery sequences.}\\ 
  \indent for all $\tilde u \in Z$ there
   exists $(\tilde u_\eps)_{\eps>0} $ such that \quad $\tilde u_\eps \to \tilde
   u$ \ and \ $\lim_{\eps\rightarrow 0} I_\eps(\tilde u_\eps) = I(\tilde u)$.
 \end{enumerate}
 If the same conditions hold when the strong convergences ``$\to$''
 are replaced by weak convergences ``$\rightharpoonup$'', we say that
 $I_\eps$ \emph{weakly $\Gamma$-converges to} $I$ and write
 $I_\eps\wGammlim I$. If $I_\eps\wGammlim I$ and $I_\eps\Gammlim I$
 holds, we say that $I_\eps$ \emph{Mosco converges to} $I$ and write
 $I_\eps \Moscolim I$.
\end{defi}
\noindent Clearly, for finite-dimensional Banach spaces $Z$ the convergences 
$ \Gammlim $, $ \wGammlim$, and $\Moscolim$ coincide.

The energy dissipation principle allows us to formulate the
gradient-flow equation in terms of the two functionals $\cE_\eps$ and
$\fD_\eps$. However, to explore the full structure of gradient systems
it is useful to embed the given gradient system into a family of
tilted gradient systems $(\sfQ,\cE^\eta,\cR)$, where the \emph{tilted
  energies} $\cE^\eta$ are given by  
\begin{equation}
  \label{eq:cE.tilt}
  \cE^\eta(u)= \cE(u)- \ell^\eta(u) \quad \text{with }
  \ell^\eta(u):=\la \eta, u\ra
\end{equation}
with an arbitrary tilt $\eta \in X^*$. Moreover, introducing the
\emph{tilted dissipation functional} 
\begin{equation}
  \label{eq:mfDeps}
  \fD^\eta_\eps(u):=\int_0^T \! \Big(\cR_\eps(u,\dot u) + 
             \cR_\eps^*(u, \eta{-}\D\cE_\eps(u)) \Big)  \dd t,
\end{equation}
we can now define three versions of EDP-convergence for a
family $\big( (\sfQ,\cE_\eps,\cR_\eps)\big)_{\eps>0}$ as follows. 

\begin{defi}[EDP-convergence \cite{DoFrMi19GSWE,MiMoPe18?EFED}]
\label{def:EDPcvg}
  Let $\sfQ$ be a closed convex subset of a Banach space $X$ and let
  $\cE_\eps$ be Gateaux differentiable.\smallskip
 
\noindent (A) We say that the gradient systems $(\sfQ, \cE_\eps, \cR_\eps )_{\eps>0}$
  \emph{converges in the simple EDP sense to} $(\sfQ, \cE_0, \cR_\eff)$,
  and write $(\sfQ, \cE_\eps, \cR_\eps ) \EDPto (\sfQ, \cE_0, \cR_\eff )$,
  if the following conditions hold:
\begin{enumerate}\itemsep-0.2em
  \item[(i)\ ] $\cE_\eps\wGammlim \cE_0$ on $\sfQ\subset X$,  and
  \item[(ii)\ ] $\fD_\eps\wGammlim \fD_0$ on $\rmL^2([0,T];\sfQ)$ with 
    $\fD_0(u)=\int_0^T\!\!\big(\cR_\eff(u,\dot u) +
    \cR_\eff^*(u,-\D\cE_0(u))\big) \dd t$. \smallskip
\end{enumerate}
(B) We say that $(\sfQ, \cE_\eps, \cR_\eps )$
\emph{EDP-converges with tilting to} $(\sfQ,
 \cE_0, \cR_\eff)$, if for all $\eta\in X^*$ we have 
$(\sfQ, \cE_\eps{-}\ell_\eta, \cR_\eps ) \EDPto (\sfQ, \cE_0{-}\ell_\eta,
\cR_\eff )$.\medskip
 
\noindent (C) We say that $(\sfQ, \cE_\eps, \cR_\eps )$
\emph{contact EDP-converges with tilting to} $(\sfQ,
 \cE_0, \cR_\eff)$, if (i) holds and for all $\eta\in X^*$ we
 have $\fD_\eps^\eta\wGammlim
 \fD_0^\eta$ with $\fD_0^\eta(u)=\int_0^T \mathcal M(u(t),\dot
 u(t),\eta{-}\D\cE_0(u(t))\big) \dd t$, where $\mathcal M$ satisfies
 the contact conditions 
\begin{align*}
 \text{(c1) }\ & \mathcal M(u,v,\xi)\geq \la \xi,v\ra
    \text{ for all }(v,\xi) \in X\ti X^*,\\ 
 \text{(c2) }\ & \mathcal M(u,v,\xi)=\la \xi,v\rangle \
 \Longleftrightarrow \  \cR_\eff(u,v){+}\cR_\eff^*(u,\xi)=\la\xi,v\ra. 
\end{align*} 
\end{defi}
Clearly, `tilted EDP-convergence' is a stronger notion than
`contact EDP-convergence' since the contact potential $\mathcal M$ is
explicitly given in $\cR{+}\cR^*$ form.  We refer to
\cite{DoFrMi19GSWE,MiMoPe18?EFED} for a general discussions of
EDP-convergence and remark that `contact EDP-convergence with tilting'
was called `relaxed EDP-convergence' in \cite{DoFrMi19GSWE}. We
emphasize that there are cases where we have the $\Gamma$ (or even
Mosco) convergence $\cR_\eps \to \cR_0$, but EDP-convergence yields
$\cR_\eff \neq \cR_0$. In general, EDP-convergence allows for
effective dissipation potentials $\cR_\eff$ that inherit properties of
the family $(\cE_\eps)_{\eps>0}$.

The most important feature of the three different notions of EDP-convergence
is that the effective gradient system is \emph{uniquely determined}
through the family $(\sfQ, \cE_\eps, \cR_\eps )$. This is
a much stronger statement than the one obtained by the classical
approach, where the effective or limiting equation is derived first
and then a gradient structure is constructed afterwards. Obviously,
uniqueness cannot guaranteed because one equation  may have several
gradient structures. We also refer
to \cite[Sec.\,3.3.5.2]{Miel16EGCG} where for one family of evolution
equations two different gradient structures are considered such that
the EDP-limit exist and but is different. \EEE

A further interesting observation is that the notion of
EDP-convergence does not involve the solutions of the associated
gradient-flow equation. This may look like an advantage, since
solutions need not be characterized, however typically showing
EDP-convergence is at least as difficult.  Another important feature
is that, \AAAERR under suitable technical assumptions, \EEE 
EDP-convergence automatically implies the convergence of the
corresponding solutions $u^\eps$ of the gradient-flow equations to the
solutions $u$ of the effective equation
\begin{equation}
  \label{eq:EffEqn.Gen}
  0 \in \partial_v \cR_\eff(u(t),\dot u(t)) + \D \cE_0(u(t)) \quad
  \text{for a.a.\ } t \in [0,T].
\end{equation}
\AAAERR The following result gives one possible variant of such a result, see
\cite[Lem.\,2.8]{MiMoPe18?EFED} for another. We do not enforce the condition
$u^\eps(0)\to u(0)$ but only $\cE_\eps(u^\eps(0))\to \cE_0(u(0))$ as well as
the continuity of the limit encoded in the assumption
$u \in \rmW^{1,1}([0,T];X)$. Thus, the result still applies to fast-slow
reaction systems, where jumps at initial time $t=0$ may develop for
$\eps\to 0$, see e.g.\ the example treated in
\cite[Sec.\,2.5]{MiPeSt20EDPCNLRS}. Then, it is important to take into account
that $\lim_{\eps\to 0} u^\eps(0)$ may be different from
$u(0)=\lim_{t\to 0^+} u(t)$.  \EEE
 
\begin{lem}
\label{le:EDPimpliesCvg} Let the assumption of Theorem \ref{thm:EDP}
be satisfied for all $\eps\geq 0$. 
Assume that the gradient systems $(\sfQ;\cE_\eps,\cR_\eps)$
EDP-converge to $(\sfQ,\cE_0,\cR_\eff)$ in one of the three senses of
Definition \ref{def:EDPcvg}, then the following holds. If
$u^\eps:[0,T] \to \sfQ$ are solutions for \eqref{eq:GradFlowEqn} and
$u:[0,T] \to \sfQ$ is such that \AAAERR $ u \in \mathrm W^{1,1}([0,T];X)$,
\[
  \cE_\eps(u^\eps(0))\to \cE_0(u(0)), \ \ \ \AAAERR u^\eps  \rightharpoonup  u
  \text{ in  } \rmL^2([0,T];\sfQ), \ \ \ 
\text{and } \  u^\eps(t)  \to  u(t)  \text{ for all  }t\in {]0,T]}, \EEE 
\]
then \AAAERR  $u $ \EEE is a solution of the \AAAERR effective \EEE 
gradient-flow equation \eqref{eq:EffEqn.Gen}.
\end{lem}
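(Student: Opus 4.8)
The plan is to pass to the limit in the integrated Energy-Dissipation Balance (EDB) for the solutions $u^\eps$ and use the $\Gamma$-liminf estimates provided by EDP-convergence together with the energy convergence at $t=0$. Since each $u^\eps$ solves the gradient-flow equation \eqref{eq:GradFlowEqn}, by the Energy-Dissipation Principle (Theorem \ref{thm:EDP}) it satisfies the EDB \eqref{eq:EDB.gen}, which we rewrite as
\[
  \cE_\eps(u^\eps(0)) - \cE_\eps(u^\eps(T)) = \fD_\eps(u^\eps),
\]
where $\fD_\eps$ is the (untilted) dissipation functional. The goal is to show that the limit $u$ satisfies the corresponding EDB for the effective system, since then Theorem \ref{thm:EDP} applied at $\eps=0$ yields that $u$ solves the effective gradient-flow equation \eqref{eq:EffEqn.Gen}.

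\emph{First} I would establish the upper bound $\fD_0(u) \le \cE_0(u(0)) - \cE_0(u(T))$. For this I combine three ingredients: (i) the assumed convergence $\cE_\eps(u^\eps(0)) \to \cE_0(u(0))$; (ii) the liminf estimate $\liminf_{\eps\to 0}\fD_\eps(u^\eps) \ge \fD_0(u)$, which is exactly the liminf half of the $\Gamma$- (or weak $\Gamma$-) convergence $\fD_\eps \to \fD_0$ postulated in Definition \ref{def:EDPcvg} and holds along the weak convergence $u^\eps \rightharpoonup u$ in $\rmL^2([0,T];\sfQ)$; and (iii) a liminf estimate for the terminal energy, $\liminf_{\eps\to 0}\cE_\eps(u^\eps(T)) \ge \cE_0(u(T))$. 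The last point is where the hypothesis $u^\eps(t)\to u(t)$ \emph{for all} $t\in{]0,T]}$ is used: evaluating at $t=T$ gives $u^\eps(T)\to u(T)$ strongly, so the liminf estimate from $\cE_\eps\wGammlim \cE_0$ applies. Passing to the liminf in the EDB then gives
\[
  \fD_0(u) \ \le\ \liminf_{\eps\to 0}\fD_\eps(u^\eps)
  \ \le\ \lim_{\eps\to 0}\cE_\eps(u^\eps(0)) - \liminf_{\eps\to 0}\cE_\eps(u^\eps(T))
  \ \le\ \cE_0(u(0)) - \cE_0(u(T)).
\]

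\emph{Second}, the reverse inequality $\fD_0(u) \ge \cE_0(u(0)) - \cE_0(u(T))$ is the Young--Fenchel/chain-rule estimate \eqref{eq:EDEst.gen} applied to the limit system at $\eps=0$; it holds for \emph{any} admissible curve, here using $u\in\rmW^{1,1}([0,T];X)$ so that the chain rule for $\cE_0$ along $u$ is valid. Combining both inequalities forces equality, i.e.\ $u$ satisfies the effective EDB, and then Theorem \ref{thm:EDP} (with the effective data $\cE_0,\cR_\eff$, which satisfy its hypotheses by assumption) identifies $u$ as a solution of \eqref{eq:EffEqn.Gen}. For the contact EDP case (C), the same scheme works with $\fD_0^\eta$ replaced by the contact-potential functional and the contact conditions (c1)--(c2) ensuring that equality in the limit balance is equivalent to the pointwise optimality $\cR_\eff(u,\dot u)+\cR_\eff^*(u,-\D\cE_0(u)) = \la -\D\cE_0(u),\dot u\ra$.

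\emph{The main obstacle} I anticipate is the terminal-energy liminf, and more generally the matching of initial and terminal data. The convergence $u^\eps\rightharpoonup u$ is only weak in $\rmL^2$, which alone does not control endpoint values; this is precisely why the hypotheses are stated the way they are, supplying pointwise convergence $u^\eps(t)\to u(t)$ on $]0,T]$ separately. One must be careful that $u(0)=\lim_{t\to 0^+}u(t)$ (well-defined since $u\in\rmW^{1,1}$) need \emph{not} equal $\lim_{\eps\to 0}u^\eps(0)$ --- the initial data is only matched in energy via $\cE_\eps(u^\eps(0))\to\cE_0(u(0))$, which is exactly what the two-sided EDB argument requires and is enough, accommodating the initial-layer jumps flagged in the remark preceding the lemma. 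A secondary technical point is verifying that the chain rule holds for $\cE_0$ along the merely $\rmW^{1,1}$ curve $u$; this needs $\cE_0\in\rmC^1(\sfQ,\R)$ and the superlinearity of $\cR_\eff(u,\cdot)$ guaranteeing $\cR_\eff^*(u,-\D\cE_0(u))<\infty$ along $u$, both of which are covered by invoking the hypotheses of Theorem \ref{thm:EDP} at $\eps=0$.
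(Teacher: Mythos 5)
Your proposal is correct and follows essentially the same route as the paper: pass to the liminf in the energy--dissipation balance for $u^\eps$ using the assumed initial-energy convergence, the weak $\Gamma$-liminf for $\fD_\eps$, and the pointwise convergence at $t=T$ for the terminal energy, then combine the resulting one-sided estimate with the Young--Fenchel inequality \eqref{eq:EDEst.gen} and Theorem \ref{thm:EDP}. Your additional remarks on the possible mismatch between $u(0)$ and $\lim_{\eps\to0}u^\eps(0)$ and on case (C) are consistent with the paper's discussion surrounding the lemma.
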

\begin{proof} By Theorem \ref{thm:EDP} we know that the EDB
  \eqref{eq:EDB.gen} holds for $u^\eps$ as solutions for the gradient
  system $(\sfQ,\cE_\eps, \cR_\eps)$, \AAAERR namely 
$\cE_\eps(u^\eps(T))+\fD_\eps(u^\eps) = \cE_\eps(u^\eps(0))$. 

Using $u^\eps(T)\to u(T)$ and $u^\eps \rightharpoonup u$ in $\rmL^2$ 
we have  the liminf estimates 
\[
\cE_0(u(T)) \leq \liminf_{\eps \to 0}  \cE_\eps(u^\eps(T)) \quad
\text{and} \quad \fD_0(u) \leq \liminf_{\eps \to 0}  \fD_\eps(u^\eps).
\]
Together with the assumed convergence of the energies at $t=0$ and the
representation  of $\fD_0$ via $\cR_\eff$, \EEE  we obtain
\begin{align}
\label{eq:EDB.lim}
\cE_0(u(T)) + \int_0^T \!\! \Big(\cR_\eff(u(t),\dot u(t)) +\cR^*_\eff(u(t),
-\D\cE_0(u(t))) \Big) \dd t \leq \cE_0(u(0)). 
\end{align}
Together with \eqref{eq:EDEst.gen} and the EDP in Theorem
\ref{thm:EDP} we see that $u$ solves \eqref{eq:EffEqn.Gen}.
\end{proof}

\section{Gradient structures for linear reaction systems}
\label{SectionGSforLRS}

In this section we discuss several gradient structures for linear reaction
systems satisfying the detailed balance condition. Moreover, following
the theory of Markov processes we define a natural way of tilting such
systems in such a way that a new global equilibrium state $w$
arises. This will show that the entropic gradient structure with
cosh-type dual dissipation plays a distinguished role.

\subsection{A special representation for generators}
\label{su:Generators}

We start from a general linear reaction system with the finite index
space $\cI:=\{1, \dots,   I\}$. On the state space 
$ \sfQ=\mathrm{Prob}(\cI )$ we consider the general linear reaction system 
\begin{equation}
  \label{eq:Evol.Gen}
  \dot c = A c \qquad \text{where } A_{in}\geq 0 \text{ for } i\neq n
\quad \text{and} \quad \sum_{i=1}^I A_{in} =0 \text{ for all }n\in \mathcal I.
\end{equation}
Throughout we assume that there exists a positive equilibrium state $w
\in \sfQ$, i.e.\ $Aw=0$ and $w_i>0$ for all $i\in \cI$. At this stage we
don't need the detailed-balance condition. 

As we later want to change the equilibrium state $w$ (and hence also
the generator $A$) we write $A$ in a specific form, namely 
\begin{equation}
   \label{eq:A.via.K.w}
 \begin{aligned}
 & A= \DDD_w^{1/2} K \DDD_w^{-1/2} - \DDD_b \qquad \text{with
 }K=(\kappa_{in})\in \R^{I\ti I} \text{ and } b\in \R^I \text{ given by}
 \\
 & \kappa_{in}= A_{in} \big( \frac{w_n}{w_i} \big)^{1/2}>0 \text{ for }
 i\neq n, \qquad \kappa_{ii}=0, \ \text{ and }
 \\
 & b_i =-A_{ii} = \sum_{n=1}^I \kappa_{ni}\big(\frac{w_n}{w_i}\big)^{1/2}>0. 
 \end{aligned}
\end{equation}
This representation is useful, because we can keep $K$ fixed, while
varying $w$ to obtain Markov generators $A=A^{w,K}$ such that
$A^{w,K}w=0$. 

Assuming the DBC again, equation \eqref{eq:Evol.Gen} can be written in the
symmetric form
\begin{equation}
  \label{eq:Evol.kappa.w}
  \dot c_n = \sum_{i:\;i\neq n} \kappa_{ni} 
  \Big( \big(\frac{w_n}{w_i}\big)^{1/2}c_i - 
         \big(\frac{w_i}{w_n} \big)^{1/2} c_n\Big) \qquad \text{for
         }n\in \cI. 
\end{equation}

Moreover, we see that $A$ and $w$ satisfies the DBC  $A_{in}w_n=
A_{ni} w_i$ if and only if $K$ is symmetric. Thus, fixing a symmetric $K$
and changing $w$ does automatically generate the DBC for $A^{K,w}$ and
$w$.

\subsection{A general class of gradient structures}
\label{su:GenerClass}

We now assume the DBC $A\DDD_w =(A\DDD_w)^*$ or equivalently
$K=K^*$ in \eqref{eq:A.via.K.w} and discuss a general class of
gradient structures for \eqref{eq:Evol.Gen} following the 
general approach in \cite[Sec.\,2.5]{MaaMie18?MCRD}.

Let $\Phi:{[0,\infty[}\to {[0,\infty[}$ and $\Psi_{in}:\R\to
{[0,\infty[}$ for $1\leq i<n\leq I$ be lower semi-continuous and
strictly convex $\mathrm C^2$ functions such that $\Psi_{in}(0)=0$ and
$\Psi''_{in}(0)>0$.  We search for a gradient system $(\sfQ,\cE,\cR^*)$
with an energy functional $\cE$ and a dual dissipation potential in
the form
\begin{align*}
  \cE(c) = \sum_{i=1}^I w_i\,\Phi\big(\frac{c_i}{w_i} \big)
 \qquad \text{and} \qquad
  \cR^*(c,\xi) = \sum_{i=1}^{I-1} \sum_{n=i+1}^{I}
               a_{in}(c)\,\Psi_{in}(\xi_i {-}\xi_n),
\end{align*}
where the coefficient functions $a_{in}$ must be chosen appropriately,
but need to be nonnegative to guarantee that $\cR^*(c,\cdot)$ is a
dissipation potential.
 
With $\partial_{\xi_n} \cR^*(c,\xi) = \sum_{k=n+1}^I a_{nk}(c)
\Psi'_{nk} (\xi_n {-}\xi_k) - \sum_{i=1}^{n-1} a_{in}(c) \Psi'_{in} (
\xi_i{-}\xi_n) $ and $\D\cE(c)=\big(\Phi'(\frac{c_k}{w_k})\big)_k$
we find the relation
\[
\partial_{\xi_n} \cR^*(c,{-}\D\cE(c))
 = \sum_{i=n+1}^I a_{ni}(c)\Psi'_{ni} \Big(\Phi'\big(\frac{c_i}{w_i} \big)
 -\Phi'\big(\frac{c_n}{w_n}\big)\Big) 
 - \sum_{i=1}^{n-1} a_{in}(c) \Psi'_{in}\Big(\Phi'\big(\frac{c_n}{w_n}\big) 
        -\Phi'\big(\frac{c_i}{w_i}\big) \Big).  
\]
Thus, the equations $\dot c_n=\partial_{\xi_n} \cR^*(c,{-}\D\cE(c))$
are the same as in \eqref{eq:Evol.kappa.w}, provided we choose the
coefficient functions $a_{in}$ as
\begin{align}
 \label{eq:a.in.choice}
  a_{ni}(c) :=  \frac{\kappa_{ni}\,\sqrt{w_nw_i} \,\big(\frac{c_i}{w_i} -
    \frac{c_n}{w_n})} {\,\Psi'_{ni}\big(\Phi'(\frac{c_i}{w_i}) -
    \Phi'(\frac{c_n}{w_n})\big) \,} \text{ for }\tfrac{c_i}{w_i}\neq
  \tfrac{c_n}{w_n} \  \text{ and }\   a_{ni}(c) :=
  \;\frac{\kappa_{ni}\,\sqrt{w_nw_i} } {\,\Psi''_{ni}(0) \Phi''(\frac{c_i}{w_i})\,} 
  \text{ for }\tfrac{c_i}{w_i}= \tfrac{c_n}{w_n}
\end{align}
and exploit the DBC $\kappa_{in}=\kappa_{ni}$. We also emphasize that
$\Phi'$ is strictly increasing such that $\frac{c_i}{w_i} -
\frac{c_n}{w_n}$ and $\Phi' (\frac{c_i}{w_i}) - \Phi'(
\frac{c_n}{w_n})$ always have the same sign. Since $\Psi'(\zeta)$ and
$\zeta$ also always have the same sign, we conclude that
$a_{in}(c)\geq 0$ as desired for dissipation potentials.

As the choice of entropy functional density $\Phi$ and of the dual
dissipation potentials $\Psi_{in}$ is general quite arbitrary we see
that we can generate a whole zoo of different gradient structures for
\eqref{eq:Evol.Gen} or \eqref{eq:Evol.kappa.w}.
The following choices relate to situation where all $\Psi_{in}$ are
given by one function $\Psi$, but more general cases are possible. 

From the construction it is clear that $\cR^*$ is linear in the
generator $A$, i.e.\ if $A= A^1 + A^2$ and the equilibrium $w$ is
fixed, then $\cR^*=\cR^*_{A^1} + \cR^*_{A^2}$ where $\cR^*_{A^m}$ is
constructed as above.

\subsection{Some specific gradient structures for linear reaction systems}
\label{su:SomeSpecialGS}

We now realize special choices for the general gradient structures in
the previous subsection. These choices are singled out because they
lead to natural entropy functionals and relatively simple coefficient
functions $a_{in}$ in \eqref{eq:a.in.choice}.

\subsubsection{Quadratic energy and dissipation}
\label{suu:QuadraticGS}

The quadratic gradient structure is given by quadratic energy and
dissipation, i.e. 
\[
 \Phi_\text{quad}(\varrho )=\frac 1 2 \varrho^2
   \quad \text{ and } \quad
 \Psi_\text{quad}(\zeta)=\frac 1 2 \zeta^2.
\] 
The coefficient functions are constant and read $a_{in}(c) =
\kappa_{in}\sqrt{w_iw_n}$. Thus, we find 
\[
\cE_\text{quad}(c)= \frac 1 2 \sum_{i=1}^I \frac {c_i^2}{w_i} \quad 
\text{ and } \cR^*_\text{quad}(c,\xi) = \frac 1 2 \sum_{i=1}^{I-1}
     \sum_{n=i+1}^{I} \kappa_{in}\sqrt{w_iw_n} (\xi_i{-}\xi_n)^2 
 =\frac12\la \xi , \mathbb K_\text{quad}\xi\ra. 
\]
In this case the dual dissipation functional does not
depend on the concentration $c \in \sfQ$, which means that the equation
$\dot c= Ac= -\mathbb K \D \cE(c)$ can be treated as self-adjoint linear
evolution problem in the Hilbert space with the norm induced by
$\cR$. This leads to the classical Hilbert space approach for
reversible Markov operators.

\subsubsection{Boltzmann entropy and quadratic dissipation}
\label{suu:QuadrEntrGS}
The \textit{quadratic-entropic} gradient structure is defined by the
choices 
\[
\Phi_\text{Boltzmann}(\varrho )=\LB(\varrho):= \varrho \log \varrho -
\varrho +1  \quad \text{ and } \quad
\Psi_\text{quad}(\zeta)=\frac 1 2 \zeta^2.
\] 
This gradient structure for was first introduced in \cite{Miel11GSRD,
  Maas11GFEF, ErbMaa12RCFM, CHLZ12FPEF, Miel13GCRE} as a possible 
generalization of Otto's gradient structure for the Fokker-Planck and
more general diffusion equations
equation, cf.\ \cite{JoKiOt98VFFP, Otto01GDEE}. However, similar
structures also appear earlier in the physics literature, see e.g.\
\cite[Eqn.\,(113)]{OttGrm97DTCF2} 

The associated entropy is Boltzmann's relative entropy and, using the
logarithmic mean $\Lambda(a, b) = \int_0^1a^sb^{1-s} \dd
s=\frac{a-b}{\log a-\log b}$, the dual dissipation potential $\cR^*$
reads
\begin{align*}
  \cE_\text{Bz}(c):=\sum_{i=1}^I w_i\,\LB \big(\frac{c_i}{w_i}\big)
  \quad \text{and} \quad 
  \cR^*(c,\xi) = \frac 1 2 \sum_{i=1}^{I-1} \sum_{n=i+1}^{I}
  \kappa_{in}\, \sqrt{w_iw_n} \, \Lambda 
      \big(\frac{c_i}{w_i}, \frac{c_n}{w_n}\big)\,(\xi_i{-}\xi_n)^2.
\end{align*}
Again $\cR^*$ is quadratic in $\xi$ but now also depends nontrivially
on $c\in \sfQ$, viz.\ $\cR^*(c,\xi)=\frac12 \la \xi,\mathbb
K_\text{Bz}(c)\xi\ra$.  This means that $\sfQ$ can be equipped with the Riemannian
metric induced by $\cR$, see \cite{Maas11GFEF}.  

Note that $\mathbb K_\text{Bz}(w)= \mathbb K_\text{quad}$ and
$\cE_\text{quad}(c)=\frac12 \D^2 \cE_\text{Bz}(w)[c,c]$, which is the
desired compatibility under linearization at $c=w$.

\subsubsection{Boltzmann entropy and cosh-type dissipation}
\label{suu:Boltz.cosh}
The following, so-called \textit{entropic cosh-type gradient
  structure}, was derived via a large-deviation principle from an
interacting particle system in \cite{MiPeRe14RGFL, MPPR17NETP}. We
refer to Marcellin's PhD thesis \cite{Marc15CECP} from 1915 for a
historical, first physical derivation of exponential kinetic relations
in the context of Boltzmann statistics. Only little of this important
result penetrated into the main stream thermomechanical modeling of
reaction systems, see  \cite[Item iii on p.\,77 and
eqn.\,(69)]{Grme10MENT} for a discussion.

For this gradient structure the choices are  
\[
\Phi_\text{Boltzmann}(\varrho )=\LB(\varrho):= \varrho \log \varrho -
\varrho +1  \quad \text{ and } \quad
\Psi_\text{cosh}(\zeta)=\sfC^*(\zeta):= 4
\cosh\big(\frac\zeta2\big) - 4, 
\] 
giving Boltzmann's relative entropy $\cE_\text{Bz}$ and the
cosh-type dual dissipation potential: 
\begin{equation}
  \label{eq:EBz.cRcosh}
  \cE_\text{Bz}(c):=\sum_{i=1}^I w_i\,\LB \big(\frac{c_i}{w_i}\big)
  \quad \text{and} \quad 
  \cR_\text{cosh}^*(c,\xi) = \sum_{i=1}^{I-1}\sum_{n=i+1}^{I}
  \kappa_{in}\,\sqrt{c_ic_n} \,\sfC^*(\xi_i{-}\xi_n).
\end{equation}
The especially simple form of the coefficient functions arises from
the interaction of the $\cosh$ function with the the Boltzmann
function $\LB$, namely 
\begin{align*}
  \sfC^{*\prime}\big(\LB'(p) - \LB'(q)\big)= 2 \sinh
  \big(\log\sqrt{p/q}\:\big) = \sqrt{p/q}  - \sqrt{q/p} = 
  \frac{p-q}{\sqrt{pq}}.
\end{align*}
With this we easily find the simple formula $a_{in}(c)=\kappa_{in}
\sqrt{c_ic_n}$.

Because of the close connection between the cosh-type function
$\sfC^*$ and the Boltzmann function $\LB$, it is obvious
that using $\sfC^*$ means that we also use the Boltzmann
entropy. Hence, it will not lead to confusion if we simply call
$(\sfQ,\cE_\text{Bz}, \cR_\text{cosh})$ the \emph{cosh gradient
  structure}. 

Again, the quadratic gradient structure in Section
\ref{suu:QuadraticGS} is obtained by linearization:
\[
\cE_\text{quad}(c)=\frac12 \D^2 \cE_\text{Bz}(w)[c,c]
  \quad \text{and} \quad 
\mathbb K_\text{quad} = \D_\xi^2 \cR_\text{cosh}^*(w,0).
\]

\subsection{Tilting of Markov processes}
\label{su:Tilting}

Tilting, also called exponential tilting, is a standard procedure in
stochastics (in particular in the theory of large deviations) 
to change the dynamics of a Markov process in a controlled
way. In particular, the equilibrium measure $w$ is changed into
another one, let us say $\wt w$. For more motivation and theory we
refer to \cite{MiMoPe18?EFED} and the references therein. 

Defining two entropy functionals, namely the Boltzmann entropies for
$w$ and $\wt w$,
\[
\cE_\text{Bz}(c)=\sum_{i=1}^I w_i \,\LB\big( \frac{c_i}{w_i}\big) 
  \quad \text{and} \quad 
\wt\cE_\text{Bz}(c)=\sum_{i=1}^I \wt w_i \,\LB\big( \frac{c_i}{\wt w_i}\big) 
\]
the special structure of $\LB$ leads to  the relation 
\[
\wt \cE_\text{Bz} (c) = \cE_\text{Bz}(c)- \la \eta, c\ra \quad
\text{with } \eta =\big( \log(w_i/\wt w_i)\big)_{i\in \cI}\,. 
\]
Thus, we see that a change of the equilibrium measure leads to a tilt
in the sense of \eqref{eq:cE.tilt} for the entropy.  Moreover, for
every tilt $\eta\in X^*$ there is a unique new equilibrium state
$w^\eta$, namely the minimizer of $c\mapsto \cE^\eta(c) =
\cE_\text{Bz}(c)-\la \eta,c\ra$. We easily find
\[
w^\eta_i = \frac1Z \e^{-\eta_i} w_i \quad \text{with } Z=\sum_{n=1}^I
\e^{-\eta_n} w_n. 
\]   
This explains the name `exponential tilting'. 

For a time-dependent linear reaction systems the tilting is defined in
a consistent way, namely using the representation
\eqref{eq:A.via.K.w}. Given $\dot c =Ac$ with positive equilibrium $w$
and a tilt $\eta$ we first construct the equilibrium $w^\eta$ and
then, using $K=(\kappa_{in})$ from \eqref{eq:A.via.K.w}, we define the
evolution
\begin{equation}
  \label{eq:ODE.tilted}
  \dot c = A^\eta c \quad \text{with } A^\eta :=\DDD_{w^\eta}^{1/2} K
\DDD_{w^\eta}^{-1/2} -\DDD_{b^\eta}.
\end{equation}

One of the important observations in \cite{MiMoPe18?EFED} is that the
cosh gradient structure is invariant under
tilting, i.e.\ the dissipation potential does not change if the
Boltzmann entropy is tilted. This can now be formulated as follows:
\begin{equation}
  \label{eq:rel.tilt.GS}
  A^\eta c = \D_\xi \cR^*_\text{cosh}\big(c, -\D\cE^\eta(c) \big).
\end{equation}
This relation can easily checked by noting that \eqref{eq:ODE.tilted} has the
form \eqref{eq:Evol.kappa.w}, where now $w$ is replaced by $w^\eta$. 
But $\cE^\eta$ is exactly the relative entropy with respect to
$w^\eta$ such that the results in Section \ref{suu:Boltz.cosh} yield
identity \eqref{eq:rel.tilt.GS}.  

Using the formula \eqref{eq:a.in.choice} for $a_{in}(c)$ we can find
all possible gradient structures in terms of $\Phi$ and $\Psi_{in}$
such that the $a_{in}(c)$ is independent for $w$. The result shows
that, up to a trivial scaling, the only tilt-invariant gradient structures in
the form of Section \ref{su:GenerClass} are given by the cosh gradient
structure. Indeed, in \cite{MiPeRe14RGFL} the case $\gamma=1/2$ is
obtained from the theory of large deviations. 

\begin{prop}[Characterization of tilt-invariant gradient structures]
\label{pr:Tilt.Inv.GS}
If $\Phi$ and $\Psi_{in}$ are such that $a_{in}$ in
\eqref{eq:a.in.choice} is independent of $w$, then there exists
$\varphi_0,\varphi_1\in \R$ and $\psi_{in},\gamma>0$ such that 
\[
\Phi(c)=\gamma \LB(c) + \varphi_0 + \varphi_1c \quad \text{and} 
\quad \Psi_{in}(\zeta) = \gamma\, \psi_{in}\, 
 \sfC^*\big(\frac \zeta \gamma \big).
\]
In particular, we always obtain $a_{in}(c)=\frac{\kappa_{in}}{\psi_{in}}
\sqrt{c_ic_n}$. Since $\psi_{in}$ can be integrated into $\kappa_{in}$,
all tilt-invariant gradient structures are given by 
scaled cosh gradient structures 
\[
\cE(c)= \gamma \,\cE_\text{Bz}(c) + \varphi_0I+\varphi_1 
  \quad \text{and} \quad 
\cR^*(c,\xi)=\gamma \, \cR^*_\mathrm{cosh}(c,\frac1\gamma \xi).
\]
\end{prop}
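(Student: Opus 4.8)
The plan is to reverse-engineer the functional equation that the requirement "$a_{in}(c)$ is independent of $w$" imposes on $\Phi$ and $\Psi_{in}$, and then solve it. Recall from \eqref{eq:a.in.choice} that for $\tfrac{c_i}{w_i}\neq\tfrac{c_n}{w_n}$ the coefficient reads
\[
  a_{ni}(c) = \frac{\kappa_{ni}\sqrt{w_nw_i}\,\big(\tfrac{c_i}{w_i}-\tfrac{c_n}{w_n}\big)}
  {\Psi'_{ni}\big(\Phi'(\tfrac{c_i}{w_i})-\Phi'(\tfrac{c_n}{w_n})\big)}.
\]
Since $\kappa_{ni}$ is fixed, the content of the hypothesis is that the remaining $w$-dependence must cancel. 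The key observation is to introduce the relative densities $p=c_i/w_i$ and $q=c_n/w_n$ as independent positive variables, and to treat $s_i=\sqrt{w_i}$, $s_n=\sqrt{w_n}$ as further independent parameters. Then $a_{ni}$ must not depend on $(s_i,s_n)$ at all, so
\[
  \frac{\Psi'_{ni}\big(\Phi'(p)-\Phi'(q)\big)}{p-q}
    = s_is_n\cdot(\text{function of }p,q\text{ only, independent of }s).
\]
The only way the left side can equal $s_is_n$ times something $s$-free is if the left side is itself $s$-free, which it already is; the real constraint comes from varying $c$ and $w$ together. The cleaner route: **First I would** hold $c$ fixed and vary $w$, writing $p=c_i/w_i$ so that $\partial_{w_i}$ acts through $p$; demanding $\partial_{w_i}a_{ni}=0$ yields an ODE linking $\Phi$ and $\Psi_{ni}$.

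Carrying this out, the cleanest formulation is to set $g(p,q):=\Phi'(p)-\Phi'(q)$ and require that
\[
  \frac{p-q}{\Psi'_{ni}(g(p,q))}
\]
factor as a product $F(p)F(q)$ after absorbing the $\sqrt{w_iw_n}$ prefactor — because only such a product structure can be matched against $\sqrt{w_i}\sqrt{w_n}$ under independent rescaling of the two densities. Imposing this separation-of-variables condition and using the normalization $\Phi''(1)$, $\Psi''_{ni}(0)$ forces $\Phi'$ to be logarithmic, i.e.\ $\Phi'(p)=\gamma\log p + \text{const}$, which integrates to $\Phi(c)=\gamma\LB(c)+\varphi_0+\varphi_1 c$. **Then I would** substitute this $\Phi'$ back: $g(p,q)=\gamma\log(p/q)$, so $p/q=\e^{g/\gamma}$ and $p-q=\sqrt{pq}\,(\sqrt{p/q}-\sqrt{q/p})=\sqrt{pq}\,\cdot 2\sinh\!\big(\tfrac{g}{2\gamma}\big)$. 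Matching this against the cosh identity $\sfC^{*\prime}(\zeta)=2\sinh(\zeta/2)$ shown in Section~\ref{suu:Boltz.cosh} pins down $\Psi'_{ni}(\zeta)=\tfrac1{\psi_{ni}}\gamma\,\sfC^{*\prime}(\zeta/\gamma)$ up to the positive constant $\psi_{ni}$, and integrating gives $\Psi_{ni}(\zeta)=\gamma\psi_{ni}\,\sfC^*(\zeta/\gamma)$.

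With both $\Phi$ and $\Psi_{ni}$ identified, the final formula $a_{in}(c)=\tfrac{\kappa_{in}}{\psi_{in}}\sqrt{c_ic_n}$ follows by direct substitution into \eqref{eq:a.in.choice}, exactly reproducing the cosh computation but with the scaling factor $\gamma$, and the stated forms of $\cE$ and $\cR^*$ are then immediate by absorbing $\psi_{in}$ into $\kappa_{in}$. **The main obstacle** I expect is making the separation-of-variables step rigorous rather than heuristic: one must argue carefully that $w$-independence of $a_{in}$ for \emph{all} admissible $c$ and $w$ genuinely forces the multiplicative $\sqrt{w_iw_n}$ structure to cancel pointwise, which requires treating $c_i/w_i$ and $c_n/w_n$ as free positive variables and differentiating in the $w$-direction. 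Care is also needed at the diagonal $\tfrac{c_i}{w_i}=\tfrac{c_n}{w_n}$, where the second branch of \eqref{eq:a.in.choice} gives the compatibility condition $\Psi''_{ni}(0)\Phi''(c_i/w_i)$ must be $w$-free, providing an independent confirmation that $\Phi''(p)\propto 1/p$ and hence $\Phi$ is logarithmic.
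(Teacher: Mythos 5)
Your setup follows the same route as the paper's: after rewriting $a_{ni}$ in terms of the relative densities $p=c_i/w_i$ and $q=c_n/w_n$, the $w$-independence for all admissible $(c,w)$ forces $\tfrac{p-q}{\Psi'_{ni}(\Phi'(p)-\Phi'(q))}$ to be a fixed multiple of $\sqrt{pq}$ (your product structure $F(p)F(q)$ with $F(p)\propto\sqrt p$; the paper phrases the same fact as the constancy of $\tfrac{p-q}{\sqrt{pq}\,\Psi'_{ni}(\Phi'(p)-\Phi'(q))}$, and both follow from the homogeneity relation $\sqrt{w_iw_n}\,K(p,q)=a(pw_i,qw_n)$). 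The determination of $\Psi_{ni}$ once $\Phi'$ is known is also fine, up to a bookkeeping slip: the correct derivative is $\Psi'_{ni}(\zeta)=\psi_{ni}\,2\sinh\big(\tfrac{\zeta}{2\gamma}\big)=\psi_{ni}\,\sfC^{*\prime}(\zeta/\gamma)$, which integrates to $\gamma\psi_{ni}\sfC^*(\zeta/\gamma)$; your intermediate expression $\tfrac{\gamma}{\psi_{ni}}\sfC^{*\prime}(\zeta/\gamma)$ would integrate to $\tfrac{\gamma^2}{\psi_{ni}}\sfC^*(\zeta/\gamma)$ instead.

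The genuine gap is the central step: you never derive that $\Phi'(p)=\gamma\log p+\mathrm{const}$, you only assert that the separation condition ``forces'' it. From $\Psi'_{ni}\big(\Phi'(p)-\Phi'(q)\big)=\psi_{ni}\big(\sqrt{p/q}-\sqrt{q/p}\big)$ one must first invoke the invertibility of $\Psi'_{ni}$ (strict convexity of $\Psi_{ni}$) to conclude $\Phi'(p)-\Phi'(q)=G(p/q)$, and then solve the resulting functional equation: with $f(r)=\Phi'(\e^r)$ and $g(s)=G(\e^s)$ it reads $f(r_i)-f(r_n)=g(r_i-r_n)$, a Cauchy-type equation whose continuous solutions are affine, giving $\Phi'(\varrho)=\varphi_1+\gamma\log\varrho$ --- this is exactly how the paper closes the argument. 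Your proposed shortcut via the diagonal branch of \eqref{eq:a.in.choice} could also work, but as stated it is wrong: you require ``$\Psi''_{ni}(0)\Phi''(c_i/w_i)$ to be $w$-free'', which would force $\Phi''$ to be \emph{constant}; the correct requirement is that the full coefficient $\kappa_{ni}\sqrt{w_nw_i}\big/\big(\Psi''_{ni}(0)\Phi''(c_i/w_i)\big)$ be $w$-free, and only after parametrizing the diagonal by $w_i=tc_i$, $w_n=tc_n$ does the surviving $t$-dependence yield $\Phi''(1/t)\propto t$, i.e.\ $\Phi''(p)\propto 1/p$. One of these two arguments must actually be carried out for the proof to be complete.
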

\begin{proof}
We rewrite 
$a_{in}$ in the form 
\[
 a_{in}(c) =  \kappa_{in}\sqrt{c_ic_n} \: 
   \frac{\varrho_i - \varrho_n} {\sqrt{\varrho_i\varrho_n} \,
  \Psi'_{ni}\big(\Phi'(\varrho_i) - \Phi'(\varrho_n)\big)}, \ \text{ where  }
  \varrho_k=\frac{c_k}{w_k} 
\] 
Because the expression has to be independent of $w_i$ and $w_n$ for
all $c,w\in \sfQ$, the fraction involving $\varrho_i$ and $\varrho_n$
has to be a constant, which we set $1/\psi_{in}$ , i.e.\
\[
\text{(i) } \ \Phi'(\varrho_i) - \Phi'(\varrho_n) =
G\big(\frac{\varrho_i}{\varrho_n} \big), \quad \text{ (ii) } \
G(\sigma)= \big(\Psi'_{in} \big)^{-1} \Big(\psi_{in}  
  \big( \sqrt\sigma - \frac1{\sqrt\sigma}\big) \Big)  .
\]

Setting $r_k=\log \varrho_k$, $f(r)=\Phi'(\e^r)$, and
$g(s)=G(\e^s)$ in (i), we arrive at the relation
\[
f(r_i)-f(r_n) = g(r_i{-}r_n)\quad  \text{for all } r_i,r_n\in \R. 
\]
As $f$ and $g$ are continuous the only solutions of this functional 
relation are $f(r)=\varphi_1 + \gamma r$ and $g(s)=\gamma s$ with
$\varphi_1,\gamma\in \R$. This implies $\Phi'(\varrho)=\varphi_1+
\gamma \log \varrho$ and, hence, 
$\Phi(\varrho)=\varphi_0+\varphi_1 \varrho + \gamma
\LB(\varrho)$. Strict convexity of $\Phi$ leads to the restriction
$\gamma>0$. 
 
Solving (ii) with $G(\sigma)= \gamma \log \sigma=:\zeta$ yields 
\[
\Psi'_{in}(\zeta) = \psi_{in} \big( \e^{\zeta/(2\gamma)} -
\e^{-\zeta/(2\gamma)} \big) = \psi_{in} \,2 \sinh\big(
\frac{\zeta}{2\gamma}\big)= \psi_{in}\, \mathsf
C^*{}'\big(\frac{\zeta}\gamma\big).
\]
Because of $\Psi_{in}(0)=0$ this determines $\Psi_{in}$ uniquely, and
the result is established.  
\end{proof}

We also refer to \cite{HeKaSt20DSFP} for the connections of the
cosh gradient structure to the SQRA-discretization
scheme for drift-diffusion systems.

\section{EDP-convergence and the effective gradient structure}
\label{SectionMainTheorem}

In this section we fully concentrate on the cosh gradient
structure, because only this gradient structure allows
to prove EDP convergence with tilting. 

Our energy functionals $\cE_\eps$ are the relative Boltzmann
entropies, while the dual dissipation potentials $\cR^*_\eps$ is the
sum of a slow and a fast part: 
\begin{align*}
 &\cE_\eps(c) = \sum_{i=1}^I w_i^\eps\,\LB \big(\frac{c_i}{w_i^\eps} \big)
  \quad \text{and} \quad 
 \cR^*_\eps( c,\xi) = \cR^*_{S,\eps}(c,\xi) + \frac1\eps\cR^*_{F,\eps}(c,\xi),
 \text{ where }
\\
&\cR^*_{Z,\eps} (c,\xi):= \sum_{i=1}^{I-1} \sum_{n=i+1}^I
\kappa_{in}^{Z,\eps} \sqrt{c_i c_n} \, \sfC^*(\xi_i {-} \xi_n)
\quad \text{with }   \kappa_{in}^{Z,\eps} = A^Z_{in} \sqrt{w^\eps_n/w^\eps_i}
\quad\text{and } Z\in \{S,F\}.
\end{align*}
Here, the $\eps$-dependencies of the coefficients
$\kappa^{S,\eps}_{in} $ and $\kappa^{F,\eps}_{in}$ is trivial in the
sense that the limits for $\eps\to 0$ exist. The really important term
is the factor $1/\eps$ in front of $\cR^*_{F,\eps}\,$. 

The structure of this section is as follows. In Section
\ref{su:MainTheorem} we present the main results concerning the
$\Gamma$-convergence of $\cE_\eps$ and $\fD_\eps$ which then imply
the EDP-convergence with tilting of $(\sfQ,\cE_\eps,\cR_\eps)$ to the
limit system $(\sfQ,\cE,\cR_\eff)$. In Section \ref{SectionLimitingGS}
we show that this provides a gradient structure for the limit
equation \eqref{eq:LimitEqn}, and moreover that we obtain the natural
cosh gradient structure $(\hat \sfQ,\hat\cE,\hat\cR)$ for the coarse-grained
equation \eqref{eq:CoarseGraEqn}.    

The remaining part of this section then provides the proof of the
convergence $\fD_\eps \Moscolim \fD_0$, namely the a priori estimates
in Section \ref{su:BoundsComp}, the liminf estimate in Section
\ref{su:LiminfEst}, and the construction of recovery sequences in
Section \ref{su:RecoverySeq}.

\subsection{Main theorem on EDP-convergence }
\label{su:MainTheorem}

We now study the limit for $\eps\to 0$ of the family
of gradient systems $\big((\sfQ,\cE_\eps,\cR_\eps^*)\big)_{\eps>0}$ by
showing EDP-convergence with tilting for a suitable limit.  

As a first, and trivial result we state the Mosco convergence of the
energies, which follows immediately from our assumption
\eqref{eq:cond.A2}, i.e.\ $w^\eps\rightarrow w^0$. 

\begin{prop}\label{pr:GCvg.calE}
  On $\sfQ=\mathrm{Prob}(\cI )$, we have the uniform convergence
  $\cE_\eps \rightarrow \cE_0$, where $\cE_0(c) = \sum_{i=1}^I w_i^0\,
  \LB (c_i/w_i^0)$. In particular, we have $\cE_\eps \Moscolim
  \cE_0$ on $X$.
\end{prop}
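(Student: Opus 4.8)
The plan is to establish the uniform convergence $\cE_\eps \to \cE_0$ directly, from which Mosco convergence on the finite-dimensional space $X = \R^I$ follows automatically by the remark after the definition of $\Gamma$-convergence. First I would fix the single-variable density function $\phi_\eps(c_i) := w_i^\eps \LB(c_i/w_i^\eps)$ and the candidate limit $\phi_0(c_i) := w_i^0 \LB(c_i/w_i^0)$, and observe that $\cE_\eps(c) = \sum_{i=1}^I \phi_\eps(c_i)$ is a finite sum, so it suffices to control each summand uniformly in $c_i \in [0,1]$.

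The key step is to exploit that $c \mapsto w\, \LB(c/w) = c\log(c/w) - c + w$ depends on $w > 0$ continuously, and indeed smoothly, on the compact set $c \in [0,1]$, $w \in [\delta, 1]$ for some $\delta > 0$. By assumption \eqref{eq:cond.A2} we have $w_i^\eps \to w_i^0 > 0$, so for small $\eps$ all the weights stay in a common interval $[\delta, 1]$ with $\delta := \tfrac12 \min_i w_i^0 > 0$. On this compact parameter range the map $(c,w) \mapsto c\log c - c\log w - c + w$ is uniformly continuous; in particular the partial derivative with respect to $w$, namely $\partial_w\big(w\,\LB(c/w)\big) = 1 - c/w$, is bounded by $1 + 1/\delta$. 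Hence a mean-value estimate gives
\begin{align*}
  \big| \phi_\eps(c_i) - \phi_0(c_i) \big|
  \ \leq\ \big(1 + \tfrac1\delta\big)\, |w_i^\eps - w_i^0|
\end{align*}
uniformly in $c_i \in [0,1]$. Summing over $i \in \cI$ yields
$\sup_{c\in\sfQ} |\cE_\eps(c) - \cE_0(c)| \leq (1+\tfrac1\delta)\sum_{i=1}^I |w_i^\eps - w_i^0| \to 0$, which is the claimed uniform convergence.

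Finally, uniform convergence of continuous functionals on the metric space $\sfQ$ immediately implies both the liminf estimate and the existence of recovery sequences (take the constant sequence $\tilde u_\eps = \tilde u$): if $c^\eps \to c^0$ then $|\cE_\eps(c^\eps) - \cE_0(c^0)| \leq \|\cE_\eps - \cE_0\|_\infty + |\cE_0(c^\eps) - \cE_0(c^0)| \to 0$ by continuity of $\cE_0$, giving the full limit and hence both $\Gamma$-convergence inequalities. Since $X$ is finite-dimensional, strong and weak convergence coincide, so $\Gammlim$ and $\wGammlim$ agree and we obtain $\cE_\eps \Moscolim \cE_0$. I expect no serious obstacle here; the only point requiring care is the behaviour near the boundary $c_i = 0$, where $\LB$ has an unbounded derivative in $c$, but this is harmless because we differentiate in $w$ (not $c$) and $\phi_\eps(0) = w_i^\eps \to w_i^0 = \phi_0(0)$ directly. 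The positivity $w_i^0 > 0$ from \eqref{eq:cond.A2} is exactly what keeps the parameter $w$ away from the degenerate value $0$ and makes the estimate uniform.
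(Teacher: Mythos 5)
Your proof is correct and is exactly the standard verification of a claim the paper states without proof (it is introduced there as a ``trivial result'' following immediately from $w^\eps\to w^0$ in \eqref{eq:cond.A2}). The mean-value estimate in $w$ on $[\delta,1]$ with $\delta=\tfrac12\min_i w_i^0>0$, together with the observation that uniform convergence of continuous functionals plus finite-dimensionality yields $\cE_\eps\Moscolim\cE_0$, is precisely the intended argument, and your remark about the boundary $c_i=0$ correctly identifies the only point needing care.
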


To have a proper functional analytic setting we let 
\[
\rmL^2([0,T];\sfQ)=\bigset{ c\in  \rmL^2([0,T];\R^I)}{c(t)\in \sfQ \text{
    a.e.\ in }[0,T]}
\]
and use the weak and strong topology induced by $\rmL^2([0,T];\R^I)$. 
The dissipation functional $\fD_\eps$ is now defined via
\begin{align*}
  \mathfrak D_\eps (c) := 
 \begin{cases}
  \int_0^T \!\! \big(\cR_\eps(c,\dot c) +
  \cR^*_\eps(c, {-} \D\cE_\eps(c)) \big) \dd t &
  \text{for } c \in \rmW^{1,1}([0,T];\sfQ),\\
 \infty& \text{otherwise on } \rmL^2([0,T];\sfQ),
 \end{cases}
\end{align*}
where $\cR_\eps(c,\cdot)$ is defined implicitly as Legendre transform of
$\cR^*_\eps(c,\cdot)$. To see that $\fD_\eps$ is well defined, we
derive suitable properties for $\cR_\eps$.

\begin{prop}[Properties of $\cR_\eps$] \label{pr:Proper.cReps}
Let $\cR_\eps:\sfQ\ti X\to [0,\infty] $ be defined by $\cR_\eps(c,\cdot)
= \big( \cR_\eps^*(c,\cdot)\big)^*$. Then, $\cR_\eps:\sfQ\ti X\to
[0,\infty] $ \EEE is lower semicontinuous and jointly convex. 
\end{prop}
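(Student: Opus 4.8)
The plan is to argue directly from the definition
\[
\cR_\eps(c,v)=\big(\cR^*_\eps(c,\cdot)\big)^*(v)=\sup_{\xi\in X^*}\Phi_\xi(c,v),
\qquad \Phi_\xi(c,v):=\la\xi,v\ra-\cR^*_\eps(c,\xi),
\]
and to exhibit $\cR_\eps$ as a pointwise supremum of a family of functions that are \emph{jointly} convex and continuous in $(c,v)$. Since a pointwise supremum of jointly convex functions is jointly convex, and a pointwise supremum of continuous functions is lower semicontinuous, both assertions follow at once, provided each member $\Phi_\xi$ has these two properties. As $\Phi_\xi$ is affine in $v$, everything reduces to controlling the dependence on $c$ for fixed $\xi$.

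The key step is the sign-and-concavity structure of $\cR^*_\eps$. Writing
$\cR^*_\eps(c,\xi)=\sum_{i<n}\mu^\eps_{in}\sqrt{c_ic_n}\,\sfC^*(\xi_i{-}\xi_n)$
with $\mu^\eps_{in}:=\kappa^{S,\eps}_{in}+\tfrac1\eps\kappa^{F,\eps}_{in}\ge 0$, I would use that $\sfC^*\ge0$, so that for each fixed $\xi$ every coefficient $\mu^\eps_{in}\sfC^*(\xi_i{-}\xi_n)$ is nonnegative. The crucial analytic fact is that the geometric mean $(c_i,c_n)\mapsto\sqrt{c_ic_n}$ is concave on $[0,\infty)^2$ (its Hessian is negative semidefinite). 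Hence $c\mapsto\cR^*_\eps(c,\xi)$ is a nonnegative linear combination of concave functions, so it is concave and continuous on $\sfQ\subset[0,1]^I$. Consequently $\Phi_\xi(c,v)=\la\xi,v\ra-\cR^*_\eps(c,\xi)$ is jointly convex in $(c,v)$ (affine in $v$ plus a convex, namely $-$concave, function of $c$) and jointly continuous. Taking the supremum over $\xi$ yields the joint convexity of $\cR_\eps$; choosing $\xi=0$ and using $\cR^*_\eps(c,0)=0$ gives $\cR_\eps\ge0$, confirming $\cR_\eps:\sfQ\ti X\to[0,\infty]$.

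For lower semicontinuity I would stress why the supremum representation is the right vehicle: each $\Phi_\xi$ is jointly continuous, and a supremum of continuous functions is always lower semicontinuous, so $\cR_\eps=\sup_\xi\Phi_\xi$ is jointly lsc with no further work. The only genuine pitfall — and the point I expect to be the main obstacle if one proceeds differently — is the temptation to argue via the dual flux representation obtained from the conjugate-of-a-composition rule,
\[
\cR_\eps(c,v)=\inf_{Bj=v}\ \sum_{i<n}\mu^\eps_{in}\sqrt{c_ic_n}\,
\sfC\!\Big(\tfrac{j_{in}}{\mu^\eps_{in}\sqrt{c_ic_n}}\Big),
\]
where $B$ is the discrete divergence. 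That formula does give joint convexity (the perspective $a\mapsto a\,\sfC(j/a)$ is jointly convex and nonincreasing in the scale $a$, composed with the concave map $c\mapsto\sqrt{c_ic_n}$), but an infimum need not preserve lower semicontinuity, so joint lsc would then require an additional coercivity/compactness argument on the fluxes. Basing the proof on the supremum representation sidesteps this entirely, which is why I would adopt it throughout.
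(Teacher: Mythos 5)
Your proof is correct and follows essentially the same route as the paper: both exhibit $\cR_\eps(c,v)=\sup_\xi\big(\la\xi,v\ra-\cR^*_\eps(c,\xi)\big)$ as a supremum of functions that are jointly convex (affine in $v$, convex in $c$ because $c\mapsto\cR^*_\eps(c,\xi)$ is a nonnegative combination of the concave geometric means $\sqrt{c_ic_n}$) and continuous in $(c,v)$. The paper's lower-semicontinuity argument via a near-optimal $\xi_\delta$ is just an unpacked version of your observation that a supremum of continuous functions is lsc.
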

\begin{proof}
Since $(c_i,c_n)\mapsto \sqrt{c_ic_n}$ is concave and $\xi\mapsto
\sfC(\xi_i{-}\xi_n)$ is convex, the mapping $\cR^*:\sfQ\ti X^*\to
[0,\infty]$ is concave-convex and thus its partial conjugate is
convex in $(c,v)$. 

For the lower semicontinuity consider $(c_k,v_k)\to (c,v)$. Then, for
all $\delta>0$ there exist $\xi_\delta$ with $
\cR_\eps(c,v)\leq \la \xi_\delta,v\ra -\cR^*_\eps(c_k,\xi_\delta) + \delta$. 
The definition of the Legendre transform yields
\[
\cR_\eps(c_k,v_k)\geq \la \xi_\delta,v_k\ra
-\cR^*_\eps(c_k,\xi_\delta) \overset{k\to\infty}\to 
\la \xi_\delta,v\ra -\cR^*_\eps(c,\xi_\delta) \geq \cR_\eps(c,v)-\delta,
\]
where we used the continuity of $c\mapsto \cR^*_\eps(c,\xi)$. 
Since $\delta>0$ was arbitrary, we find $\liminf_{k\to \infty}
\cR_\eps(c_k,v_k) \geq \cR_\eps(c,v)$ as desired. 
\end{proof}

To formulate the main $\Gamma$-convergence result for $\fD_\eps$ we
define the effective dissipation $\cR^*_\eff$ beforehand. It can be
understood as the formal limit of $\cR^*_\eps$ when taking $\eps\to
0$. The slow part $\cR^*_{S,\eps}$ simply converges to its limit 
\begin{align*}
   \cR_S^*(c, \xi) := \sum_{i=1}^{I-1} \sum_{n=i+1}^I
   \kappa^{S,0}_{in} \, \sqrt{c_ic_n} \,\sfC^*(\xi_{i} - \xi_{j})
   \quad \text{ with }
\kappa^{S,0}_{in} =A_{in}^S\sqrt{w_n^0/w_i^0}=\lim_{\eps\to 0} \kappa_{in}^{S,\eps}. 
\end{align*}
For the fast part $\frac1\eps \cR^*_{F,\eps}$ we obtain blow up,
except for those $\xi$ that lie in the subspace that is not affected
by fast reactions. For this we set
\[
\Xi=M^*Y^*=\mathrm{range}(M^*) = \mathrm{kernel}(M)^\perp :=
\bigset{\xi \in X^*}{ \la \xi,v\ra =0 \text{ for all } v \in
  \mathrm{kernel} (M) }.  
\] 
and observe that by construction for all $\eps>0$ we have 
\begin{equation}
  \label{eq:cRF.Xi}
\cR^*_{F,\eps}(c,\xi) =0 \quad \text{ for all }\xi \in
\Xi.   
\end{equation}
Indeed, $\cR^*_{F,\eps}(c,\xi) $ contains $\sfC^*(\xi_i{-}\xi_n)$ with
a positive prefactor only if $i\sim_F n$, while $\xi\in \Xi$ implies
$\xi_i=\xi_n$ in that case.  Together we set
\begin{equation}
  \label{eq:cR*eff}
  \cR^*_\eff(c,\xi) := \cR^*_S(c,\xi) + \chi_{\Xi}(\xi), \ \text{ where }
  \chi_A(a)=\begin{cases} 0&\text{for }a\in A,\\ \infty& \text{for
    }a\not\in A. \end{cases} 
\end{equation}
The dual dissipation potential $\cR_\eff^*$ consists of two
terms: The first term $\cR_S^*$ contains the information of the slow
reactions in the limit $\eps\rightarrow 0$. The second term
$\chi_{\Xi}$ restricts the vector of chemical potentials $\xi = \D
\cE_0(c)$ exactly in such a way that the microscopic equilibria of the
fast reactions holds, i.e.\ $A^Fc=0$ or equivalently $Pc=c$, see
below.

Because of this constraint, it is actually irrelevant how
$\cR^*_\eff(c,\cdot):\Xi \to [0,\infty]$ is defined for $c \not\in
\sfQ_\mathrm{eq}=\sfQ\cap PX$. 

We note that $\cR^*_\eps(c,\cdot)$ has a Mosco limit
$\cR^*_0(c,\cdot)$ that is not necessarily equal to
$\cR^*_\eff(c,\cdot)$. For $c$ on the boundary of $\sfQ$, where some
of the $c_i$ are $0$,  we may have
$\cR^*_{F,\eps}(c,\xi)=0$ for all $\xi$, which implies
$\cR^*_0(c,\xi)=\cR^*_{S}(c,\xi)$ for these $c$ and all $\xi \in
\R^I$. However, the $\Gamma$-limit of $\fD_\eps$ yields
$\cR^*_\eff \geq \cR^*_0$. 

\begin{thm}[Mosco convergence of $\fD_\eps$]
\label{thm:MoscoCvg.frakD}
  On $\rmL^2([0,T];\sfQ)$ we have $\mathfrak D_\eps
  \Moscolim \fD_0$ with
\begin{align}
  \label{eq:frakD0}
 \fD_0(c) := \begin{cases}
           \text{\larger $\int_0^T$} \big(\cR_\eff(c, \dot c) + 
           \cR_\eff^*(c, - \D \cE_0(c))\big) \dd t
           &\text{for } c\in \rmW^{1,1}([0,T];\sfQ), \\
            \infty& \text{otherwise in }\rmL^2([0,T];\sfQ),
                    \end{cases}
\end{align}
where $\cR^*_\eff$ is given in \eqref{eq:cR*eff} and leads to the
primal dissipation potential 
\begin{align*}
 \cR_\eff(c,v) = \inf\bigset{\cR_S(c, z)}{z\in\R^I \text{ with } Mz=
   Mv} \quad \text{for all }c\in \sfQ_\mathrm{eq}=P\sfQ .
\end{align*}
\end{thm}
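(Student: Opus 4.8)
The plan is to prove the two halves of Mosco convergence separately: the weak liminf estimate and the existence of a strongly convergent recovery sequence. Two structural facts drive everything. First, evaluating the dual potentials along the chemical potential $-\D\cE_\eps(c)=\big(\log(w^\eps_i/c_i)\big)_{i\in\cI}$ makes the apparent logarithmic singularity disappear: using $\cosh(\tfrac12\log x)=\tfrac12(\sqrt{x}+1/\sqrt{x})$ one checks that $\cR^*_{Z,\eps}(c,-\D\cE_\eps(c))=\sum_{i<n}\kappa^{Z,\eps}_{in}\,2\big(c_n\sqrt{w^\eps_i/w^\eps_n}+c_i\sqrt{w^\eps_n/w^\eps_i}-2\sqrt{c_ic_n}\big)$ for $Z\in\{S,F\}$, which is a nonnegative, jointly convex and \emph{continuous} function of $c\in\sfQ$ (the concavity of $\sqrt{c_ic_n}$ enters with the right sign) converging uniformly to $\cR^*_Z(c,-\D\cE_0(c))$ as $w^\eps\to w^0$. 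Second, restricting the dual formula $\cR_\eps(c,v)=\sup_\xi\{\langle\xi,v\rangle-\cR^*_\eps(c,\xi)\}$ to $\xi\in\Xi$, where $\cR^*_{F,\eps}(c,\xi)=0$ by \eqref{eq:cRF.Xi}, yields the pointwise lower bound $\cR_\eps(c,v)\ge\cR_{\eff,\eps}(c,v):=(\cR^*_{S,\eps}+\chi_\Xi)^*(c,v)=\sup_{\hat\xi}\{\langle\hat\xi,Mv\rangle-\cR^*_{S,\eps}(c,M^*\hat\xi)\}$, which depends on $v$ \emph{only through} $Mv$ and converges to $\cR_\eff=(\cR^*_S+\chi_\Xi)^*$ (the indicator $\chi_\Xi$ dualizes to the infimum over $\ker M=X_{\mathrm{fast}}$, i.e.\ the constraint $Mz=Mv$).

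For the a priori estimates I would start from a weakly convergent sequence $c^\eps\rightharpoonup c^0$ in $\rmL^2([0,T];\sfQ)$ with $\sup_\eps\fD_\eps(c^\eps)<\infty$ (otherwise the liminf bound is trivial). Since $\cR_{\eff,\eps}$ is superlinear in $Mv$, the de la Vall\'ee--Poussin criterion makes $M\dot c^\eps=MA^Sc^\eps$ equi-integrable, so $Mc^\eps$ is precompact in $\rmC^0([0,T];Y)$ and $Mc^0\in\rmW^{1,1}$; crucially only $M\dot c^\eps$ (not the $O(1/\eps)$ full velocity) needs to be controlled. The bounded factor $\tfrac1\eps\cR^*_{F,\eps}(c^\eps,-\D\cE_\eps(c^\eps))$ forces $\int_0^T\cR^*_{F,\eps}(c^\eps,-\D\cE_\eps(c^\eps))\dd t\to0$; passing to the limit in the continuous convex fast integrand above gives local equilibrium across every fast reaction a.e., i.e.\ $A^Fc^0=0$. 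With \eqref{eq:DecompoX.c} this yields $c^0=Pc^0=NMc^0$, hence $c^0\in\rmW^{1,1}([0,T];P\sfQ)$.

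The liminf estimate then splits $\fD_\eps=\int(\cR_\eps+\cR^*_\eps)$. For the slope part I use $\cR^*_\eps\ge\cR^*_{S,\eps}$ together with the continuity/convexity of $c\mapsto\cR^*_{S,\eps}(c,-\D\cE_\eps(c))$: weak convergence $c^\eps\rightharpoonup c^0$ and uniform $\eps$-convergence give $\liminf\int\cR^*_\eps(c^\eps,-\D\cE_\eps(c^\eps))\dd t\ge\int\cR^*_S(c^0,-\D\cE_0(c^0))\dd t$, while $A^Fc^0=0$ means $-\D\cE_0(c^0)\in\Xi$, so $\chi_\Xi$ contributes nothing. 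For the rate part I use $\cR_\eps(c^\eps,\dot c^\eps)\ge\cR_{\eff,\eps}(c^\eps,\dot c^\eps)$; as $\cR_{\eff,\eps}$ is jointly convex and lower semicontinuous in $(c,Mv)$ (a partial conjugate of a concave-convex function, as in Proposition \ref{pr:Proper.cReps}) and converges to $\cR_\eff$, an Ioffe-type lower semicontinuity theorem applied to the weakly convergent pair $(c^\eps,M\dot c^\eps)\rightharpoonup(c^0,M\dot c^0)$ yields $\liminf\int\cR_\eps(c^\eps,\dot c^\eps)\dd t\ge\int\cR_\eff(c^0,\dot c^0)\dd t$. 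Adding the two gives $\liminf\fD_\eps(c^\eps)\ge\fD_0(c^0)$.

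For the recovery sequence, given $c^0\in\rmW^{1,1}([0,T];P\sfQ)$ (the only case with $\fD_0(c^0)<\infty$), the natural candidate is the \emph{constant} sequence $c^\eps\equiv c^0$, for which strong $\rmL^2$-convergence is automatic. The slope part needs no modification: the fast chemical-potential differences of $c^0$ are $O(|w^\eps-w^0|)=O(\eps)$, so $\tfrac1\eps\cR^*_{F,\eps}(c^0,-\D\cE_\eps(c^0))=O(\eps)\to0$ by the quadratic behaviour of $\sfC^*$ near $0$, and the slow part converges as above. For the rate part I pick $z^*$ minimizing $\cR_S(c^0,\cdot)$ under $Mz^*=M\dot c^0$, set $v_2^*:=\dot c^0-z^*\in X_{\mathrm{fast}}$, and estimate via the inf-convolution representation $\cR_\eps(c^0,\dot c^0)\le\cR_{S,\eps}(c^0,z^*)+\tfrac1\eps\cR_{F,\eps}(c^0,\eps v_2^*)$, where the first term tends to $\cR_\eff(c^0,\dot c^0)$ and the second is $O(\eps)$; this gives $\limsup\fD_\eps(c^0)\le\fD_0(c^0)$. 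The main obstacle is the degeneracy on $\partial\sfQ$ where some $c^0_i=0$: there the prefactor $\sqrt{c_ic_n}$ vanishes, so $\cR_{F,\eps}(c^0,\cdot)$ may be $+\infty$ on the correction $v_2^*$ and the penalization controlling the constraint is borderline. I expect this to be resolved by approximating $c^0$ from the interior of $\sfQ$ and closing with a diagonal argument — precisely the delicate point that the erratum's strengthened hypotheses are designed to control.
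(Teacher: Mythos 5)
Your architecture matches the paper's (compactness from the two complementary controls, liminf by duality and convexity, recovery by interior approximation plus a diagonal argument), but the technical machinery differs in both halves, mostly in viable and sometimes cleaner ways. For the liminf, the paper first upgrades to \emph{strong} $\rmL^2$-convergence (Lemma \ref{le:CvgFastReactions} and Proposition \ref{PropRegularityofc}) and then uses Rockafellar's interchange theorem, testing only with $\xi=P^*\xi$; you instead exploit that both relevant integrands are \emph{jointly convex} — the slope integrand is linear in $c$ minus positive multiples of the concave $\sqrt{c_ic_n}$, and $\cR_{\eff,\eps}=(\cR^*_{S,\eps}+\chi_\Xi)^*$ is a partial conjugate of a concave--convex function as in Proposition \ref{pr:Proper.cReps} — so weak lower semicontinuity of convex integral functionals applies directly to $(c^\eps,M\dot c^\eps)$ and strong compactness is never needed. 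That is a genuine simplification, provided you control the $\eps$-dependence of the integrand (doable via $\kappa^{S,\eps}\to\kappa^{S,0}$ and the scaling property \eqref{eq:Rcosh.scaling}). For the limsup, your inf-convolution bound $\cR_\eps(c,\dot c)\le\cR_{S,\eps}(c,z^*)+\tfrac1\eps\cR_{F,\eps}(c,\eps v_2^*)$ with $v_2^*=\dot c-z^*\in\mathrm{kernel}(M)$ is more elementary than the paper's route (Rockafellar representation, uniformly bounded maximizers $\xi^\eps$, Ioffe argument), and you correctly identify the boundary degeneracy of $\sqrt{c_ic_n}$ as the obstruction that forces the interior approximation $\delta w^0+(1-\delta)c^0$ and the diagonal argument, exactly as in the paper.

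Two concrete points need repair. First, in the compactness step you write $M\dot c^\eps=MA^Sc^\eps$; this identity holds only for solutions of the ODE, whereas here $c^\eps$ is an arbitrary curve with $\fD_\eps(c^\eps)\le C$. The equi-integrability of $M\dot c^\eps$ must instead come from the superlinear lower bound $\cR_\eps(c,v)\ge\cR_{\eff,\eps}(c,v)\ge a\,\sfC(|Mv|/C)$ that you have already set up — so the slip is notational rather than structural, but as written the sentence proves nothing. Second, your constant recovery sequence $c^\eps\equiv c^0$ makes the fast slope term $\tfrac1\eps\cR^*_{F,\eps}(c^0,-\D\cE_\eps(c^0))$ of order $\tfrac1\eps|w^\eps-w^0|^2$, which tends to zero only if $|w^\eps-w^0|=o(\sqrt\eps)$; assumption \eqref{eq:cond.A2} alone guarantees no rate. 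The paper's choice $c^\eps=\DDD_{w^\eps}\DDD_{w^0}^{-1}c^0$ makes $\D\cE_\eps(c^\eps)\in\Xi$ exactly, so the fast slope part vanishes identically; you should adopt this rescaling (it costs nothing in your inf-convolution estimate for the velocity part). Finally, passing the pointwise limit $\tfrac1\eps\cR_{F,\eps}(c^0,\eps v_2^*)\to0$ through the time integral needs the domination $\tfrac1\eps\cR_{F,\eps}(c,\eps v)\le\cR_{F,\eps}(c,v)$ (convexity) together with $\cR_{F,\eps}(c^0,v_2^*)\in\rmL^1$ and a measurable selection of $z^*$; these hold for interior-valued $c^0$ but should be stated.
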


The proof of this theorem is the main part of this section and will be
given in Sections \ref{su:BoundsComp} to \ref{su:RecoverySeq}. Now, we want to use the above result to conclude the EDP-convergence
with tilting. For this result, it is essential to study the dependence
of the limit $\fD_0$ on the limit equilibrium measure $w^0$. One
the one hand, $\cE_0(c)$ is the relative Boltzmann entropy of $c$ with
respect to $w^0$, which provides a simple and well-behaved dependence
on $w^0$. On the other hand, $\cR^*_\eff$ is given through
$\cR^*_S$ and $\chi_\Xi$. The former only depends on
$(\kappa^{S,0}_{in})_{i,n\in \cI}$ and the latter depends only on $M \in
\{0,1\}^{J\ti I}$. Thus, there is no dependence on $w^0$ at all. The
proof relies on the fact that the two processes of (i) tilting with 
driving forces $\eta$ and of (ii) taking the limit $\eps\to 0$
commute.

\begin{thm}[EDP-convergence with tilting]
\label{thm:EDPtilt}
  The gradient systems $(\sfQ,\cE_\eps,\cR_\eps)$ EDP-converge with
  tilting to the limit gradient structure $ (\sfQ,\cE_0,\cR_\eff)$.

  The closure of the domain of the limit gradient system in the
  sense of \eqref{eq:Dom.GS} is $\sfQ_\mathrm{eq}$.
\end{thm}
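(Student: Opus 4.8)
The plan is to deduce EDP-convergence with tilting from the already-established Mosco convergence $\fD_\eps \Moscolim \fD_0$ (Theorem \ref{thm:MoscoCvg.frakD}) together with the $\Gamma$-convergence of the energies (Proposition \ref{pr:GCvg.calE}), and to reduce the tilted case to the untilted one by showing that tilting commutes with the limit $\eps\to 0$. First I would recall that EDP-convergence with tilting (Definition \ref{def:EDPcvg}(B)) requires, for every fixed $\eta\in X^*$, the simple EDP-convergence $(\sfQ,\cE_\eps{-}\ell_\eta,\cR_\eps)\EDPto(\sfQ,\cE_0{-}\ell_\eta,\cR_\eff)$, i.e.\ $\cE_\eps{-}\ell_\eta \Gammlim \cE_0{-}\ell_\eta$ and $\fD_\eps^\eta \Gammlim \fD_0^\eta$. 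The energy convergence is immediate: since $\ell_\eta(c)=\la\eta,c\ra$ is linear and continuous (hence $\Gamma$-continuous on the finite-dimensional $X$), Proposition \ref{pr:GCvg.calE} gives $\cE_\eps{-}\ell_\eta\Gammlim\cE_0{-}\ell_\eta$ at once.

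The crux is the tilted dissipation functional $\fD_\eps^\eta$ from \eqref{eq:mfDeps}, which differs from $\fD_\eps$ only through the shift $-\D\cE_\eps(c)\rightsquigarrow \eta-\D\cE_\eps(c)$ in the dual potential. The key structural observation, which I would invoke as in Section \ref{su:Tilting}, is that tilting the Boltzmann entropy by $\eta$ is exactly equivalent to replacing the equilibrium $w^\eps$ by the tilted equilibrium $w^{\eps,\eta}_i=\frac1{Z^\eps}\e^{-\eta_i}w^\eps_i$, and that the cosh dual dissipation potential $\cR^*_\eps$ is \emph{invariant} under this replacement (the tilt-invariance encoded in Proposition \ref{pr:Tilt.Inv.GS} and relation \eqref{eq:rel.tilt.GS}). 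Concretely, $\eta-\D\cE_\eps(c) = -\D\cE_\eps^\eta(c)$ where $\cE_\eps^\eta$ is the relative entropy with respect to $w^{\eps,\eta}$, and the coefficients $\kappa^{Z,\eps}_{in}=A^Z_{in}\sqrt{w^\eps_n/w^\eps_i}$ appearing in $\cR^*_\eps$ are \emph{unchanged} by tilting because the factor $\e^{-\eta_n/2+\eta_i/2}$ that tilting would introduce is precisely cancelled. Hence $\fD_\eps^\eta$ has exactly the same form as $\fD_\eps$, but built from the tilted equilibria $w^{\eps,\eta}$. Since $w^\eps\to w^0$ implies $w^{\eps,\eta}\to w^{0,\eta}$ (with $w^{0,\eta}$ again strictly positive because $w^0\in{]0,1[}^I$), the whole machinery of Theorem \ref{thm:MoscoCvg.frakD} applies verbatim to the tilted family, yielding $\fD_\eps^\eta\Moscolim\fD_0^\eta$. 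The decisive point—and the one I expect to need the most care—is verifying that the limit functional $\fD_0^\eta$ produced this way coincides with the tilted version of $\fD_0$, namely that $\cR^*_\eff$ is \emph{also} tilt-invariant: this holds because, as noted before the theorem, $\cR^*_S$ depends only on $(\kappa^{S,0}_{in})$ and $\chi_\Xi$ depends only on $M\in\{0,1\}^{J\ti I}$, neither of which sees $w^0$, so that $\cR^*_\eff$ is genuinely $w^0$-independent and the constraint set $\Xi$ is unchanged under tilting. Thus $\fD_0^\eta(c)=\int_0^T\big(\cR_\eff(c,\dot c)+\cR^*_\eff(c,\eta-\D\cE_0(c))\big)\dd t$, which is exactly the simple-EDP limit for the tilted energy $\cE_0-\ell_\eta$.

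For the final claim on the closure of the domain, I would argue as follows. By the representation \eqref{eq:cR*eff}, the dual potential $\cR^*_\eff(c,\xi)$ is finite only for $\xi\in\Xi=\mathrm{range}(M^*)$, so along any curve in the domain one needs $-\D\cE_0(c(t))\in\Xi$, i.e.\ $\big(\log(c_i/w^0_i)\big)_{i\in\cI}$ must be constant on each fast cluster $\alpha_j$. Since $\log$ is strictly monotone and $w^0$ has constant relative density on clusters only for $c\in\sfQ_\mathrm{eq}=P\sfQ$, the condition $\chi_\Xi({-}\D\cE_0(c))<\infty$ is equivalent to $c\in\sfQ_\mathrm{eq}$, as announced in Section \ref{su:MainTheorem}. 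Consequently $\mathrm{Dom}(\sfQ,\cE_0,\cR_\eff)\subset\sfQ_\mathrm{eq}$; conversely every interior point of $\sfQ_\mathrm{eq}$ lies in the domain since there $\D\cE_0$ exists and $\partial_\xi\cR^*_\eff$ is nonempty, and taking closures gives $\overline{\mathrm{Dom}(\sfQ,\cE_0,\cR_\eff)}=\sfQ_\mathrm{eq}$ because $\sfQ_\mathrm{eq}=P\sfQ$ is itself closed and convex with the interior points dense in it.
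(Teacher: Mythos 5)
Your proposal is correct and follows essentially the same route as the paper: simple EDP-convergence from Proposition \ref{pr:GCvg.calE} and Theorem \ref{thm:MoscoCvg.frakD}, reduction of the tilted case to the untilted one by recognizing $\cE_\eps-\ell_\eta$ as the relative Boltzmann entropy with respect to the tilted equilibria $w^{\eta,\eps}\to w^{\eta,0}>0$ and invoking the tilt-invariance of the cosh coefficients so that Theorem \ref{thm:MoscoCvg.frakD} applies verbatim with $\cR_\eff$ unchanged, and the same characterization $\chi_\Xi({-}\D\cE_0(c))<\infty \Leftrightarrow c\in\sfQ_\mathrm{eq}$ for the domain. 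The extra detail you supply (why $\cR^*_S$ and $\chi_\Xi$ do not see $w^0$) is exactly the observation the paper records in the paragraph preceding the theorem.
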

\begin{proof} Proposition \ref{pr:GCvg.calE} and Theorem
\ref{thm:MoscoCvg.frakD} already provide the simple EDP convergence
$(\sfQ,\cE_\eps, \cR^*_\eps) \EDPto (\sfQ,\cE_0,\cR^*_\eff)$.  The
domain is restricted by the conditions (i) that $\D\cE_0(c)$ exists,
which means that $ c_i>0$ for all $i$, and (ii) that $\D\cE_0(c)$ lies
in the domain of $\partial_\xi \cR_\eff^*(c,\,\cdot\,)$. \EEE The
latter condition is equivalent to $\D \cE_0(c)\in \Xi$ or equivalently
$c\in X_\mathrm{eq}$.

For the tilted energies $\cE^\eta_\eps= \cE_\eps -\la \eta,\cdot\ra $
we obviously have $\cE^\eta_\eps \Moscolim \cE_0^\eta$. We can now
apply Theorem \ref{thm:MoscoCvg.frakD} once again for $\fD^\eta_\eps$.
Using the fact that $\cE^\eta$ is again a relative Boltzmann entropy
with respect to the exponentially tilted equilibrium state
$w^{\eta,\eps}$ that satisfies $w^{\eta,\eps} \to w^{\eta,0}$. Thus,
the Mosco limit $\fD^\eta_0$ of $\fD^\eta_\eps$ again exists and has
the same form as $\fD_0$ in \eqref{eq:frakD0}, but with $\D\cE_0(c)$
replaced by $\D\cE(c)-\eta$. In particular, $\cR_\eff$ remains
unchanged and EDP-convergence with tilting is established.
\end{proof}

\subsection{The limit and the coarse-grained gradient structure}
\label{SectionLimitingGS}

Before going into the proof of Theorem \ref{thm:MoscoCvg.frakD} we
connect the limit gradient systems with the limit equation
\eqref{eq:LimitEqn}. The gradient-flow equation for the limit gradient
systems reads
\begin{equation}
  \label{eq:GradFlowLim}
  \dot c \in \partial_\xi \cR^*_\eff(c, {-}\D \cE_0(c)) \quad \text{a.e.\ on
  } [0,T]. 
\end{equation}
Since $\cR^*_\eff$ is no longer smooth, we use the set-valued convex
subdifferential $\partial_\xi$ that satisfies, because of the continuity
of $\cR^*_S$, the sum rule 
\[
\partial_\xi \cR^*_\eff(c,\xi)= \D_\xi \cR^*_S(c,\xi) + \partial
\chi_\Xi (\xi) \quad \text{with } \partial\chi_\Xi(\xi)
= \begin{cases}
   \mathrm{kernel}(M) & \text{for }\xi \in \Xi, \\
    \emptyset & \text{for }\xi \not\in \Xi,
  \end{cases}
\]
where we used the relation 
$\Xi= \mathrm{range}(M^*)=\mathrm{kernel}(M)^\perp$. 

On the one hand, \eqref{eq:GradFlowLim} implies that $\D\cE_0(c)\in \Xi$ for
a.a.\ $t\in [0,T]$. Recalling that the rows of $M\in \{0,1\}^{J\ti I}$
consists of vectors having the entry $1$ in exactly one equivalent class
$\alpha(j)\subset \cI$ for $\sim_F$ and $0$ else, we have 
\[
\Xi=\mathrm{range}(M^*)=\bigset{\xi\in \R^I}{ \forall\,j\in \cJ\ 
  \forall\, i_1,i_2\in \alpha(j): \ \xi_{i_1}=\xi_{i_2} }
\]
we conclude 
\[
\D\cE_0(c) \in \Xi \ \Longleftrightarrow \ 
\forall\,j\in \cJ\ 
  \forall\, i_1,i_2\in \alpha(j): \ \frac{c_{i_1}}{w^0_{i_1}}= 
 \frac{c_{i_2}}{w^0_{i_2}} \ \Longleftrightarrow \  c \in
 X_\mathrm{eq} \ \Longleftrightarrow \ A^F c=0.
\]
One the other hand, by construction of the gradient structure the
term $\D_\xi \cR^*_S(c,{-}\D \cE_0(c))$ generates exactly
the term $A^S c$. Thus, \eqref{eq:GradFlowLim} is equivalent to 
\begin{equation}
  \label{eq:GradFlow-2}
  \dot c(t) \in A^S c(t) + \mathrm{kernel}(M),\quad  A^Fc(t)=0
  \qquad \text{a.e.\ on } [0,T]. 
\end{equation}
Applying $M$ to the first equation gives
the limit equation \eqref{eq:LimitEqn} and the following result.

\begin{prop}[Gradient structure for limit equation]
 \label{pr:GS4LimEqn}  
The limit equation \eqref{eq:LimitEqn} is the
gradient-flow equation generated by the limit gradient system
$(\sfQ,\cE_0,\cR^*_\eff)$. 
\end{prop}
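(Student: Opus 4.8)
The plan is to read the claim off directly from the reformulation \eqref{eq:GradFlow-2} of the abstract gradient-flow equation, which has already been derived in the paragraph immediately preceding the statement; the proof is therefore a bookkeeping step that assembles facts established above rather than a genuinely new argument.

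First I would recall that, by definition, the gradient-flow equation generated by $(\sfQ,\cE_0,\cR^*_\eff)$ is \eqref{eq:GradFlowLim}, namely $\dot c \in \partial_\xi \cR^*_\eff(c,{-}\D\cE_0(c))$ a.e.\ on $[0,T]$. Using the sum rule $\partial_\xi \cR^*_\eff(c,\xi)= \D_\xi \cR^*_S(c,\xi) + \partial\chi_\Xi(\xi)$ together with $\partial\chi_\Xi(\xi)=\mathrm{kernel}(M)$ for $\xi\in\Xi$ and $\partial\chi_\Xi(\xi)=\emptyset$ otherwise, this inclusion is nonempty only when $\D\cE_0(c)\in\Xi$, and then reads $\dot c\in \D_\xi\cR^*_S(c,{-}\D\cE_0(c))+\mathrm{kernel}(M)$. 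Invoking the two observations already supplied — that $\D\cE_0(c)\in\Xi$ is equivalent to $A^Fc=0$, and that $\D_\xi\cR^*_S(c,{-}\D\cE_0(c))=A^Sc$ by the very construction of the cosh gradient structure from the slow rates — this is exactly the system \eqref{eq:GradFlow-2}.

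It then remains to match \eqref{eq:GradFlow-2} with the limit equation \eqref{eq:LimitEqn}. Applying the coarse-graining operator $M$ to the inclusion $\dot c \in A^Sc+\mathrm{kernel}(M)$ and using $M(\mathrm{kernel}(M))=\{0\}$ removes the $\mathrm{kernel}(M)$-ambiguity and yields $M\dot c = MA^Sc$; together with the retained constraint $A^Fc=0$ this is precisely \eqref{eq:LimitEqn}. For the converse one only has to note that $M\dot c = MA^Sc$ gives $M(\dot c - A^Sc)=0$, i.e.\ $\dot c - A^Sc\in \mathrm{kernel}(M)$, which is exactly $\dot c\in A^Sc+\mathrm{kernel}(M)$; and $A^Fc=0$ is common to both systems. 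Hence \eqref{eq:GradFlow-2} and \eqref{eq:LimitEqn} are equivalent, and since every step above is an equivalence, the systems \eqref{eq:GradFlowLim} and \eqref{eq:LimitEqn} have the same solutions, which is the assertion.

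I do not expect any real obstacle here, because all analytic content — the sum rule for $\partial_\xi\cR^*_\eff$, the characterization $\D\cE_0(c)\in\Xi\Leftrightarrow A^Fc=0$, and the identity $\D_\xi\cR^*_S(c,{-}\D\cE_0(c))=A^Sc$ — is already in place. The only mild subtlety worth flagging is that passing from the set-valued inclusion to the single equation \eqref{eq:LimitEqn} discards exactly the $\mathrm{kernel}(M)$-component of $\dot c$, so one must carry the constraint $A^Fc=0$ (equivalently $c\in X_\mathrm{eq}$, i.e.\ $Pc=c$) alongside $M\dot c=MA^Sc$ to retain the full information; the reconstruction $c=N\hat c$ from Theorem \ref{th:CoarseGrEq} confirms that this constraint entails no loss, the coarse-grained trajectory $\hat c=Mc$ already determining $c$ completely.
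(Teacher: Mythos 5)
Your proposal is correct and takes essentially the same route as the paper: the paper gives no separate proof but derives \eqref{eq:GradFlow-2} from \eqref{eq:GradFlowLim} in the text immediately preceding the statement, using exactly the sum rule for $\partial_\xi\cR^*_\eff$, the equivalence $\D\cE_0(c)\in\Xi\Leftrightarrow A^Fc=0$, and the identity $\D_\xi\cR^*_S(c,{-}\D\cE_0(c))=A^Sc$, and then applies $M$. Your explicit check of the converse direction ($M\dot c=MA^Sc$ iff $\dot c-A^Sc\in\mathrm{kernel}(M)$) only spells out what the paper leaves implicit.
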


As a last step, we show that the gradient structure for the limit
equation also provides a gradient structure for the coarse gradient
equation \eqref{eq:CoarseGraEqn} $\dot{\hat c} = MA^SN\hat c$ for the
coarse-grained states $\hat c = Mc \in \hat\sfQ$. For this we exploit
the special relations derived for coarse graining via $M:X\to
Y$ and reconstruction via $N:Y\to X$. 

\begin{thm}[Gradient structure for coarse-grained equation] 
\label{thm:GS4CoarseGraEqn}
The coarse-grained equation \eqref{eq:CoarseGraEqn} (viz.\ $\dot{\hat
  c}= MA^SN\hat c$) is the
gradient-flow equation generated by the coarse-grained gradient system 
$(\hat \sfQ,\hat\cE, \hat\cR)$ given by 
\[
\hat\cE(\hat c) = \cE_0(N\hat c) = \mathcal H_J(\hat c| \hat w) 
\quad \text{and} \quad  \hat \cR(\hat c, \hat v)
 = \cR_\eff( N \hat c, N \hat v). 
\]
Moreover, we have $\hat\cR^*(\hat c,\hat\xi)= \cR^*_\eff(N\hat c, M^*\hat\xi) =
\cR^*_S (N\hat c, M^*\hat\xi) $.
\end{thm}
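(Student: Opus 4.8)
The plan is to verify the three assertions in turn, using the structural identities for $M$, $N$, and $P=NM$ from Proposition~\ref{pr:OperatorN} together with the explicit form of $\cR^*_\eff$ in \eqref{eq:cR*eff}. First I would establish the energy identity. Since $N=\DDD_{w^0}M^*\DDD_{\hat w}^{-1}$, a direct computation gives $(N\hat c)_i/w^0_i = \hat c_{\phi(i)}/\hat w_{\phi(i)}$, i.e.\ the reconstructed relative density is constant on each equivalence class $\alpha_j$. Inserting this into $\cE_0(c)=\sum_i w^0_i\,\LB(c_i/w^0_i)$ and grouping the sum over $i\in\alpha_j$ yields
\[
\cE_0(N\hat c)=\sum_{j\in\cJ}\Big(\sum_{i\in\alpha_j}w^0_i\Big)\LB\!\Big(\frac{\hat c_j}{\hat w_j}\Big)=\sum_{j\in\cJ}\hat w_j\,\LB\!\Big(\frac{\hat c_j}{\hat w_j}\Big),
\]
where I used $\hat w_j=(Mw^0)_j=\sum_{i\in\alpha_j}w^0_i$. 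This is exactly $\mathcal H_J(\hat c|\hat w)$, so $\hat\cE(\hat c)=\cE_0(N\hat c)=\mathcal H_J(\hat c|\hat w)$.

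Next I would compute the dual potential $\hat\cR^*$. The key observation is that the primal formula $\cR_\eff(c,v)=\inf\{\cR_S(c,z)\mid Mz=Mv\}$ shows that $\cR_\eff(c,\cdot)$ depends on $v$ only through $Mv$; write $\cR_\eff(c,v)=g_c(Mv)$. Using $MN=\mathrm{id}_Y$ I then get $\hat\cR(\hat c,\hat v)=\cR_\eff(N\hat c,N\hat v)=g_{N\hat c}(MN\hat v)=g_{N\hat c}(\hat v)$, so that $\hat\cR^*(\hat c,\cdot)=g_{N\hat c}^*$. On the other side, since $M$ is surjective and $\la M^*\hat\xi,v\ra=\la\hat\xi,Mv\ra$, taking the Legendre transform of $v\mapsto g_c(Mv)$ gives $\cR^*_\eff(c,M^*\hat\xi)=g_c^*(\hat\xi)$ (while a direction $v\in\mathrm{kernel}(M)$ forces $+\infty$ unless $\xi\in\Xi$, matching the $\chi_\Xi$ term). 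With $c=N\hat c$ this yields $\hat\cR^*(\hat c,\hat\xi)=\cR^*_\eff(N\hat c,M^*\hat\xi)$. Finally $M^*\hat\xi\in\mathrm{range}(M^*)=\Xi$, so $\chi_\Xi(M^*\hat\xi)=0$ and the ``moreover'' identity $\cR^*_\eff(N\hat c,M^*\hat\xi)=\cR^*_S(N\hat c,M^*\hat\xi)$ follows from \eqref{eq:cR*eff}.

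It remains to show that $(\hat\sfQ,\hat\cE,\hat\cR)$ generates $\dot{\hat c}=MA^SN\hat c$. I would work with the rate form~(III), $\dot{\hat c}=\partial_{\hat\xi}\hat\cR^*(\hat c,-\D\hat\cE(\hat c))$. From the first step one reads off the chain-rule identity $M^*\D\hat\cE(\hat c)=\D\cE_0(N\hat c)$, since both vectors have $i$-th component $\log(\hat c_{\phi(i)}/\hat w_{\phi(i)})$. Combining this with $\hat\cR^*(\hat c,\hat\xi)=\cR^*_S(N\hat c,M^*\hat\xi)$ and the chain rule $\partial_{\hat\xi}\big[\cR^*_S(N\hat c,M^*\hat\xi)\big]=M\,\D_\xi\cR^*_S(N\hat c,M^*\hat\xi)$ gives
\[
\dot{\hat c}=M\,\D_\xi\cR^*_S\big(N\hat c,-\D\cE_0(N\hat c)\big)=M\,A^S N\hat c,
\]
where the last equality is exactly the fact, already used for Proposition~\ref{pr:GS4LimEqn}, that the slow cosh structure $(\sfQ,\cE_0,\cR^*_S)$ generates $A^S$. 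This is the coarse-grained equation \eqref{eq:CoarseGraEqn}.

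The main obstacle is the second step: one must check that the Legendre transform defining $\hat\cR^*$ (a supremum over $\hat v\in Y$, equivalently over reconstructed velocities $N\hat v\in X_\mathrm{eq}$) really agrees with $\cR^*_\eff(N\hat c,M^*\cdot)$ (a supremum over all $v\in X$). The reduction $\cR_\eff(c,v)=g_c(Mv)$ together with $MN=\mathrm{id}_Y$ and the surjectivity of $M$ is precisely what makes the two suprema coincide: the degenerate directions $\mathrm{kernel}(M)$ contribute nothing to $g_c\circ M$ on one side and are annihilated by $M^*\hat\xi\in\Xi$ on the other. Everything else is a careful but routine manipulation of adjoints and the explicit form of $N$.
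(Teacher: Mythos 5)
Your proposal is correct and follows essentially the same three steps as the paper's proof: the grouping argument for $\hat\cE=\mathcal H_J(\hat c|\hat w)$, the Legendre duality between $\hat\cR$ and $\hat\cR^*$ via $MN=\mathrm{id}_Y$ and $\Xi=M^*Y^*$, and the chain-rule identity $M^*\D\hat\cE(\hat c)=\D\cE_0(N\hat c)$ combined with $\D_\xi\cR^*_S(c,-\D\cE_0(c))=A^Sc$. The only (immaterial) difference is that you conjugate the primal inf-projection formula for $\cR_\eff$ to obtain $\hat\cR^*$, whereas the paper starts from $\hat\cR^*$ and computes $\hat\cR$ as a supremum over $\hat\xi\in Y^*$ — the same duality read in the opposite direction.
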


This result can be seen as an exact coarse graining in the sense of
the formal approach developed in \cite[Sec.\,6.1]{MaaMie18?MCRD}.

Before giving the proof of this result we want to highlight its
relevance. First, we emphasize that the coarse-grained equation is
again a linear reaction system, now in $\R^J$, i.e.\ the master
equation for a Markov process on $\cJ=\{1,\ldots,J\}$. Second, the
coarse-grained energy functional is again the relative Boltzmann
entropy, now with respect to the coarse-grained equilibrium $\hat w =
Mw^0$. Third, the coarse-grained dual dissipation potential is again
given in terms of the function $\sfC^*$, i.e.\ the coarse-grained
gradient system is again of cosh-type. In summary, the coarse-grained
gradient structure $(\hat \sfQ,\hat\cE, \hat\cR)$ is again a cosh
gradient structure, see Proposition
\ref{pr:kappa.hat.kappa} below.

We refer to \cite[Sec.\,3.3]{LMPR17MOGG} for an example that shows
that other gradient structures may not be stable under
EDP-convergence.  All these results rely on the special properties of
$M$ and $N$ developed in Section \ref{su:PropCG+Recov}.  In
particular, we use that the projection $P=NM:X\to X$ is a Markov
operator, i.e.\ it maps $\sfQ$ onto itself.
 
\noindent
\begin{proof}[Proof of Theorem \ref{thm:GS4CoarseGraEqn}] \mbox{}

\underline{Step 1: $\hat\cE$ is a relative entropy.} We use the
special form $N=\DDD_{w^0} M^* \DDD_{\hat w}$, which gives $(N\hat
c)_i= w_i^0 \hat c_j/\hat w_j$, where $i\in \alpha(j)$. With this and
$\hat w_j = \sum_{i\in \alpha(j)} w^0_i$ we obtain 
\begin{align*}
\hat\cE(\hat c)&=\cE_0(N\hat c) = \mathcal H_I(Nc|w^0)
 = \sum_i w_i^0 \,\LB\big(\frac{(N\hat c)_i}{ w^0_i}\big)\\ 
& = \sum_{j=1}^J \sum_{i\in \alpha(j)} w_i^0 \LB\big( 
          \frac{\hat c_j}{\hat w_j} \big)
= \sum_{j=1}^J \hat w_j \LB\big( 
          \frac{\hat c_j}{\hat w_j} \big) = \mathcal H_J(\hat c|\hat w).
\end{align*}  

\underline{Step 2: Legendre-conjugate pair $\hat\cR$ and $\hat \cR^*$.} We
start from the formula for $\hat\cR^*$ and calculate $\hat\cR$ as
follows. Using $MN=\mathrm{id}_Y$ and $\Xi=M^*Y^*$, we obtain
\begin{align*}
 &\hat \cR (\hat c, \hat v) 
 =\sup \bigset{ \la\hat\xi, MN\hat v\ra_J - 
       \hat \cR^*(\hat c, \hat \xi)}{\hat \xi \in Y^*} 
 \\
 & = \sup \bigset{ \la M^*\hat\xi, N\hat v\ra_I 
            -\hat \cR^*(\hat c, \hat \xi) }{\hat \xi \in Y^*} 
  = \sup \bigset{\la \xi, N\hat v\ra_I - \cR_S^*(N\hat
    c, \xi)}{\xi \in M^*Y^*} 
 \\
 & = \sup \bigset{\la\xi,N\hat v\ra_I -
    \cR_S^*(N\hat c, \xi) - \chi_\Xi(\xi) }{\xi \in X^*} =
  \cR_\eff(N\hat c, N \hat v),
\end{align*}
where we use the definition of $\cR^*_\eff$ in \eqref{eq:cR*eff}. 

\underline{Step 3: The gradient-flow equation for
  $(\hat\sfQ,\hat\cE,\hat\cR)$.} We first observe 
\begin{equation}
  \label{eq:DualProj}
  M^*N^* \D \cE_0(N\hat c) =  \D \cE_0(N\hat c).
\end{equation}
  Indeed, let us define the component-wise log-function on $\R^I$,
  $\log : x\mapsto (\log(x_i))_{i=1,\dots, I}$. We have $\D\cE_0(c) =
  \log(\DDD_{w^0}^{-1}c)$. Hence, for $c= N\hat c = \DDD_{w^0}M^*
  \DDD_{\hat w}^{-1}\hat c$, we conclude
\begin{align*}
  \D \cE_0(N\hat c) = \log(\DDD_{w^0}^{-1}N\hat c) = \log(M^*
  \DDD_{\hat w}^{-1}\hat c) = M^* \log( \DDD_{\hat w}^{-1}\hat c) =
  M^* \D \hat \cE(\hat c) = M^*N^* \D \cE_0(N\hat c),
\end{align*}
where we used that $\D \hat \cE(\hat c) = N^* \D \cE_0(N\hat c)$.

 With $\D\hat\cE(\hat c)=N^*\D\cE_0(N\hat c)$ and
\eqref{eq:DualProj} the gradient-flow equation for
$(\hat\sfQ,\hat\cE,\hat\cR)$ reads 
\begin{align*}
  \dot{\hat c} &= \partial_{\hat\xi}\hat\cR^*(\hat c,
  -\D\hat\cE_0(\hat c)) = M \partial_{\xi}\cR_S^*\big(N\hat c,
  -M^*\D\hat\cE(\hat c)\big) \\
&= M \partial_{\xi}\cR_S^*\big(N\hat c,
  -M^*N^*\D\cE_0(N\hat c)\big) = M \partial_{\xi}\cR_S^*(N\hat c,
  -\D\cE_0(N\hat c)) =M A^S N\hat c,
\end{align*}
 where we used the identity $\D_\xi
\cR^*_S(c,{-}\D\cE_0(c))=A^Sc$, which holds for all $c$ by the
construction of our gradient structure.  
\end{proof}

In analogy to formula \eqref{eq:hatA.j1j2} 
providing the coefficients $\hat A_{j_1j_2}$
of the coarse-grained generator $\hat A=MA^SN$ we can also give 
a formula for the tilting-invariant reaction intensities
$\kappa^{S,0}_{i_1i_2}$ to obtain the corresponding intensities
$\hat\kappa_{j_1,j_2}$ for the
coarse-grained equation \eqref{eq:CoarseGraEqn} by a suitable
averaging. In particular, the gradient systems 
$(\hat \sfQ,\hat \cE,\hat\cR)$ provides again a cosh gradient structure
in the sense of Section \ref{suu:Boltz.cosh}. 

\begin{prop}[Cosh structure of $\hat\cR{}^*$]\label{pr:kappa.hat.kappa}
The coarse-grained dual dissipation potential $\hat\cR{}^*$ reads 
\begin{align*}
&\hat\cR{}^*(\hat c,\hat \xi)=\hspace{-0.4em}
 \sum_{1\leq j_1<j_2\leq J} \hspace{-0.4em} \hat
\kappa_{j_1,j_2} \,\sqrt{\hat c_{j_1} \hat c_{j_2}} \,
 \sfC^* (\hat \xi_{j_1} {-} \hat\xi_{j_2} ) 
\text{ with }\hat\kappa_{j_1,j_2}= \!\!\sum_{i_1\in \alpha(j_1)} \sum_{i_2\in
  \alpha(j_2)}\!\! \kappa^{S,0}_{i_1i_2} \big( \tfrac{w^0_{i_1}w^0_{i_2}} 
 {\hat w_{j_1}\hat w_{j_2}} \big)^{1/2}. 
\end{align*}
\end{prop}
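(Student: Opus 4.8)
The plan is to read off the result directly from the identity already established in Theorem~\ref{thm:GS4CoarseGraEqn}, namely $\hat\cR^*(\hat c,\hat\xi)=\cR^*_S(N\hat c,M^*\hat\xi)$, and then to insert the explicit component formulas for $N\hat c$ and $M^*\hat\xi$ into the definition of $\cR^*_S$. Writing $\phi:\cI\to\cJ$ for the class map, I would use $(M^*\hat\xi)_i=\hat\xi_{\phi(i)}$ together with the formula $N=\DDD_{w^0}M^*\DDD_{\hat w}^{-1}$, which gives $(N\hat c)_i = w^0_i\,\hat c_{\phi(i)}/\hat w_{\phi(i)}$. Substituting these into
\[
\cR^*_S(c,\xi)=\sum_{1\le i<n\le I}\kappa^{S,0}_{in}\,\sqrt{c_ic_n}\;\sfC^*(\xi_i-\xi_n)
\]
turns the generic summand into $\kappa^{S,0}_{in}\,\sqrt{w^0_iw^0_n}\,\big(\hat c_{\phi(i)}\hat c_{\phi(n)}/(\hat w_{\phi(i)}\hat w_{\phi(n)})\big)^{1/2}\sfC^*(\hat\xi_{\phi(i)}-\hat\xi_{\phi(n)})$.

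Next I would make two observations. First, every summand with $\phi(i)=\phi(n)$ drops out, since $\sfC^*(0)=4\cosh 0-4=0$; hence only pairs $(i,n)$ lying in two distinct classes survive. Second, since $\sfC^*$ is even and the factor $\big(\hat c_{j_1}\hat c_{j_2}/(\hat w_{j_1}\hat w_{j_2})\big)^{1/2}$ depends only on the pair of classes $\{j_1,j_2\}=\{\phi(i),\phi(n)\}$, I would group the surviving double sum according to the unordered class pair with $j_1<j_2$. Comparing the resulting block with the asserted cosh form forces
\[
\hat\kappa_{j_1,j_2}=\frac{1}{\sqrt{\hat w_{j_1}\hat w_{j_2}}}\sum_{\substack{i<n\\ \{\phi(i),\phi(n)\}=\{j_1,j_2\}}}\kappa^{S,0}_{in}\,\sqrt{w^0_iw^0_n}.
\]

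The only genuine subtlety, and the step I expect to be the main obstacle, is the bookkeeping that identifies this inner sum over numerically ordered pairs $i<n$ with the sum over the ordered Cartesian product $\alpha(j_1)\times\alpha(j_2)$ that appears in the statement. Splitting the $i<n$ sum into the cases $(\phi(i),\phi(n))=(j_1,j_2)$ and $(\phi(i),\phi(n))=(j_2,j_1)$, the first case matches the part of $\sum_{i_1\in\alpha(j_1),\,i_2\in\alpha(j_2)}$ with $i_1<i_2$ verbatim, while the second matches the part with $i_1>i_2$ after relabeling, but only once $\kappa^{S,0}_{in}$ is replaced by $\kappa^{S,0}_{ni}$. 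The identity therefore rests on the symmetry $\kappa^{S,0}_{in}=\kappa^{S,0}_{ni}$, which is precisely the detailed-balance condition for the slow part $A^S$ with respect to $w^0$ (equivalently $K^S=(K^S)^*$ in the representation of Section~\ref{su:Generators}), inherited in the limit $\eps\to0$ from the symmetry of the cosh structure $\cR^*_{S,\eps}$ used throughout Section~\ref{SectionMainTheorem}. With this symmetry the two sums coincide, yielding exactly $\hat\kappa_{j_1,j_2}=\sum_{i_1\in\alpha(j_1)}\sum_{i_2\in\alpha(j_2)}\kappa^{S,0}_{i_1i_2}\big(w^0_{i_1}w^0_{i_2}/(\hat w_{j_1}\hat w_{j_2})\big)^{1/2}$. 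Since $A^S_{i_1i_2}\ge0$ makes each $\kappa^{S,0}_{i_1i_2}\ge0$, all $\hat\kappa_{j_1,j_2}$ are nonnegative, so the assembled $\hat\cR^*$ is manifestly a cosh-type dual dissipation potential, which completes the argument.
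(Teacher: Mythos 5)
Your proposal is correct and follows essentially the same route as the paper's proof: both start from the identity $\hat\cR^*(\hat c,\hat\xi)=\cR^*_S(N\hat c,M^*\hat\xi)$ of Theorem~\ref{thm:GS4CoarseGraEqn}, insert $(N\hat c)_i=w^0_i\hat c_{\phi(i)}/\hat w_{\phi(i)}$ and $(M^*\hat\xi)_i=\hat\xi_{\phi(i)}$, and regroup the sum by equivalence classes. The only difference is presentational: the paper symmetrizes the $i<n$ sum into $\tfrac12\sum_{i_1}\sum_{i_2}$ over all ordered pairs (implicitly using $\kappa^{S,0}_{i_1i_2}=\kappa^{S,0}_{i_2i_1}$, the vanishing of diagonal terms, and the evenness of $\sfC^*$), whereas you carry out the same bookkeeping explicitly and correctly identify the DBC-symmetry of $\kappa^{S,0}$ as the ingredient that makes the ordered-pair sum match the Cartesian product $\alpha(j_1)\times\alpha(j_2)$.
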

\begin{proof} Theorem \ref{thm:GS4CoarseGraEqn} provides an explicit
  formula for $\hat\cR{}^*$. Inserting the definitions of $M$ and $N$
  and grouping according to equivalence classes will provide the
  result. Recalling the function
  $\phi:\cI\to \cJ$ giving for each $i$ the associated equivalence
  class $\alpha(\phi(i))\subset \cI$ we have $(N\hat c)_i=w^0_i \hat
  c_{\phi(i)}/\hat w_{\phi(i)}$ and $(M^*\hat \xi)_i =
  \hat\xi_{\phi(i)}$ and find  
\begin{align*}
\hat\cR^*(\hat c,\hat \xi)&= \cR^*_S(N\hat c, M^*\hat \xi)= 
\frac12\sum_{i_1\in \cI}\sum_{i_2\in \cI} \kappa_{i_1 i_2}^{S,0}
\Big(\frac{w^0_{i_1} c_{\phi(i_1)}}{\hat w_{\phi(i_1)}}\,
   \frac{w^0_{i_2} c_{\phi(i_2)}}{\hat w_{\phi(i_2)}} \Big)^{1/2}
   \;\sfC^* \big(\hat\xi_{\phi(i_1)}- \hat \xi_{\phi(i_2)}\big)\\
&= \frac12\sum_{j_1\in \cJ}\sum_{j_2\in \cJ} 
     \sum_{i_1\in \alpha(i_1)}\sum_{i_2\in \alpha(j_2)} 
  \kappa_{i_1 i_2}^{S,0}
\Big(\frac{w^0_{i_1} c_{j_1}}{\hat w_{j_1}}\,
   \frac{w^0_{i_2} c_{j_2}}{\hat w_{j_2}} \Big)^{1/2}
   \;\sfC^* \big(\hat\xi_{j_1}{-} \hat \xi_{j_2}\big).  
\end{align*}
This shows the desired result. 
\end{proof}

\subsection{A priori bounds and compactness}
\label{su:BoundsComp}

We start the proof of the $\Gamma$-convergence for the dissipation
functional $\fD_\eps$ on $\rmL^2([0,T],\sfQ)$ by deriving the
necessary a priori bounds for proving the compactness  for a
family $(c^\eps)_{\eps>0}$ of functions satisfying $\fD_\eps(c^\eps)
\leq C < \infty$. 

Clearly since for all $t\in[0,T]$ we have $c^\eps(t)\in \sfQ$ we get
immediately uniform $\rmL^\infty$-bounds on $c^\eps$. Hence, we have
(after extracting a suitable subsequence, which is not relabeled) a
weak limit $c^0\in \rmL^2([0,T],\sfQ)$. We want to improve the
convergence to strong convergence.  Already in the proof of the
convergence of the solutions $c^\eps$ in Section \ref{su:ConvergSol}
it became clear that there are two different controls, namely (i) the 
tendency to go to microscopic equilibrium and (ii) the dissipation
through the slow reactions. From (i) we will obtain control
of the distance of $c^\eps$ from $X_\mathrm{eq}=PX$ by estimating
$(I{-}P)c^\eps$, but we are not able 
to control $(I{-}P)\dot c^\eps$. From (ii) we obtain an a priori bound for
$P\dot c^\eps$, and the major task is to show that these two
complementary pieces of information are enough to obtain compactness. 

Subsequently,
we will drop $\eps$ in the notations for $w^\eps$,
$\kappa^{\alpha,\eps}_{in}$, and $\cR_{S,\eps}$, and so on. Of course,
we will keep the important factor $1/\eps$    in
$\cR_\eps^*=\cR_S^*+\frac1\eps \cR^*_F$.

The following result shows the convergence of sequences to the
subspace $ X_\mathrm{eq} =PX$ of microscopic equilibria. Recall the
decomposition $X=X_\mathrm{eq} \oplus X_\mathrm{fast}$ from
\eqref{eq:DecompoX} and the projection $P=NM$ such that
$X_\mathrm{eq}=PX$ and $X_\mathrm{fast}=(I{-}P)X$. In particular, the
semi-norm $c\mapsto |(I{-}P)c|$ is equivalent to $c \mapsto
\mathrm{dist}(c,X_\mathrm{eq})$.

\begin{lem}[Convergence in the direction of fast reactions]
\label{le:CvgFastReactions}
Consider a sequence $(c^\eps)$ in $\rmL^2([0,T], \sfQ)$ with $\fD_\eps
(c^\eps)\leq C_\fD <\infty $ and $c^\eps\rightharpoonup c^0$ in
$\rmL^2([0,T], \R^I)$. Then, there is a constant $C>0$ such that
\[
\int_0^T |(I{-}P)c^\eps(t)|^2 \dd t \leq C\eps .
\]
In particular, we have $c^0(t)\in \sfQ_\mathrm{eq}=P\sfQ$ for a.a.\
$t\in [0,T]$.   
\end{lem}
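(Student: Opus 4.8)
The plan is to exploit the factor $1/\eps$ in front of the fast part of the dual dissipation. Since $\cR_\eps(c,\dot c)\ge 0$, the assumption $\fD_\eps(c^\eps)\le C_\fD$ gives $\int_0^T \cR^*_\eps(c^\eps,-\D\cE_\eps(c^\eps))\,\dd t\le C_\fD$, and dropping the nonnegative slow part $\cR^*_{S,\eps}$ together with the factor $1/\eps$ yields
\[
\int_0^T \cR^*_{F,\eps}\big(c^\eps(t),-\D\cE_\eps(c^\eps(t))\big)\,\dd t \;\le\; \eps\,C_\fD .
\]
Thus it suffices to show that this integrand controls $|(I{-}P)c^\eps(t)|^2$ pointwise, uniformly in $\eps$.

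The heart of the argument is an explicit evaluation of the cosh dissipation at the entropic force. Writing $\varrho_i=c_i/w^\eps_i$ we have $-\D\cE_\eps(c)_i=-\log\varrho_i$, and the interaction of $\sfC^*(\zeta)=4\cosh(\zeta/2)-4$ with the logarithm gives, for every pair $i\neq n$,
\[
\sqrt{c_ic_n}\,\sfC^*\big({-}\D\cE_\eps(c)_i+\D\cE_\eps(c)_n\big)
 = 2\sqrt{w^\eps_iw^\eps_n}\,\big(\sqrt{\varrho_i}-\sqrt{\varrho_n}\big)^2 .
\]
In particular the apparent singularity of $\D\cE_\eps$ on $\partial\sfQ$ cancels, so the integrand is well defined there. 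Using $\kappa^{F,\eps}_{in}\sqrt{w^\eps_iw^\eps_n}=A^F_{in}w^\eps_n$ this becomes
\[
\cR^*_{F,\eps}(c,-\D\cE_\eps(c)) = 2\!\!\sum_{i<n}A^F_{in}w^\eps_n\big(\sqrt{\varrho_i}-\sqrt{\varrho_n}\big)^2 .
\]
Because $c\in\sfQ$ forces $\varrho_i\le 1/\delta$ with $\delta:=\inf_\eps\min_i w^\eps_i>0$ (which is positive by \eqref{eq:cond.A2}), the elementary bound $(\sqrt a-\sqrt b)^2\ge \tfrac\delta4(a-b)^2$ converts this into control of the Dirichlet form $Q^F_\eps(c):=\sum_{i<n}A^F_{in}w^\eps_n(\varrho_i-\varrho_n)^2$, namely $\cR^*_{F,\eps}(c,-\D\cE_\eps(c))\ge \tfrac\delta2\,Q^F_\eps(c)$.

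The final ingredient is a coercivity estimate that is uniform in $\eps$. Since $A^F$ satisfies detailed balance with respect to $w^\eps$ — this is exactly what renders the coefficients $\kappa^{F,\eps}_{in}$ symmetric, and it gives $A^Fw^\eps=0$ — the symmetric form $Q^F_\eps$ is positive semidefinite with null space $\{c:\varrho\text{ constant on each fast cluster}\}=\ker A^F=X_\mathrm{eq}$. Hence $Q^F_\eps(c)=Q^F_\eps((I{-}P)c)$, and on the fixed subspace $X_\mathrm{fast}$ the form is positive definite. As its coefficients $A^F_{in}w^\eps_n$ converge to $A^F_{in}w^0_n>0$, the limiting form $Q^F_0$ is positive definite on $X_\mathrm{fast}$ as well, so by continuity of eigenvalues the smallest eigenvalue on $X_\mathrm{fast}$ stays bounded below by some $\lambda_0>0$ for all small $\eps$; this is the analogue of the norm equivalence used in Step~3 of the proof of Theorem~\ref{thm:ConvergSol}. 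Combining the three estimates gives the pointwise bound $|(I{-}P)c^\eps(t)|^2\le \tfrac{2}{\lambda_0\delta}\,\cR^*_{F,\eps}(c^\eps(t),-\D\cE_\eps(c^\eps(t)))$, and integration together with the first display yields $\int_0^T|(I{-}P)c^\eps|^2\,\dd t\le C\eps$. This means $(I{-}P)c^\eps\to0$ strongly in $\rmL^2$; comparing with $c^\eps\rightharpoonup c^0$ forces $(I{-}P)c^0=0$ a.e., and since $c^0(t)\in\sfQ$ a.e.\ we conclude $c^0(t)\in\sfQ\cap X_\mathrm{eq}=P\sfQ=\sfQ_\mathrm{eq}$.

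I expect the main obstacle to be the uniform coercivity of the last paragraph: one must ensure the null spaces of the $Q^F_\eps$ do not drift (which relies precisely on the separate detailed balance $A^Fw^\eps=0$ of the fast part) and that the spectral gap on $X_\mathrm{fast}$ does not degenerate as $\eps\to0$. The cosh-to-square reduction is the other technical point, but it is a routine computation once the $\sfC^*$–log identity is established.
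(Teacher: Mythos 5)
Your proof follows the same route as the paper's: drop the kinetic term and the slow slope term to isolate $\frac1\eps\cR^*_{F,\eps}(c^\eps,{-}\D\cE_\eps(c^\eps))$, use the identity $\sfC^*(\log p-\log q)=2\big(\sqrt{p/q}+\sqrt{q/p}-2\big)$ to rewrite the fast slope term as $2\sum_{i<n}A^F_{in}w^\eps_n\big(\sqrt{\varrho_i}-\sqrt{\varrho_n}\big)^2$, convert squared differences of square roots into squared differences of densities via the uniform positivity of $w^\eps$, and invoke finite-dimensional coercivity on $X_\mathrm{fast}$; the paper phrases this last step as the norm equivalence $|(I{-}P)c|\leq C_2\|c\|_{\mathbb F}$ rather than as an eigenvalue bound, but the content is identical, and the final passage to $c^0=Pc^0$ is the same. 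One imprecision to flag: $A^F$ satisfies detailed balance with respect to $w^0$ (multiply the DBC for $A^\eps$ by $\eps$ and let $\eps\to 0$), \emph{not} with respect to $w^\eps$, and $A^Fw^\eps=-\eps A^Sw^\eps\neq 0$ in general. Hence the null space of your $Q^F_\eps$ is $\{c: c_i/w^\eps_i \text{ cluster-constant}\}$, an $O(\eps)$ perturbation of $X_\mathrm{eq}$, so neither $Q^F_\eps(c)=Q^F_\eps((I{-}P)c)$ nor the pointwise bound $|(I{-}P)c|^2\leq C\,Q^F_\eps(c)$ is exact for $\eps>0$. This is harmless: since $w^\eps=w^0+O(\eps)$ the discrepancy contributes only $O(\eps^2)$ to $\int_0^T|(I{-}P)c^\eps|^2\dd t$ and is absorbed into the $C\eps$ bound (the paper's own proof elides the same point by suppressing the $\eps$ in $w^\eps$), but the statement as you wrote it should be corrected.
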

\begin{proof} The bound on the dissipation functional $\fD_\eps$,
  $\cR_\eps\geq 0$, $\cR_{S}^*\geq 0$ and the relation $\sfC^*(\log p
  - \log q)=2\big(\sqrt{p/q} + \sqrt{q/p} -2)$ imply
\begin{align*}
C_\fD\geq \fD_\eps(c^\eps)\geq 
\frac 1 \eps \int_0^T  \sum_{(i,n)\in \mathbb F}
\frac{4 \kappa^F_{in}}{\sqrt{w_iw_n}}
\left(\sqrt{\frac{c^\eps_i}{w_i}}
  -\sqrt{\frac{c^\eps_n}{w_n}}\right)^2   \dd t , 
\end{align*}
where the set $\mathbb F$ is given in term of the equivalence relation
$\sim_F$, viz.
\[
\mathbb F:= \bigset{(i,n)\in \cI\ti\cI }{ i\sim_F n \text{ and } i<n}. 
\]
Using the decomposition $X=X_\mathrm{eq} \oplus X_\mathrm{fast}$ from
\eqref{eq:DecompoX}, we see that the semi-norm
\[
\|c\|_{\mathbb F} := \Big(\sum\nolimits_{(i,n)\in \mathbb F}
\big(\frac{c_i}{w_i} {-}\frac{c_n}{w_n}\big)^2 \Big)^{1/2} 
\]
defines a norm on $ X_\mathrm{fast}$ and there exists $C_2>0$
such that $|(I{-}P)c|\leq C_2 \| c\|_{\mathbb F} $ on $\sfQ$.

Denoting by $\ulw>0$ and $\ulk>0$ lower bounds for all $w^\eps_i$ and all
$\kappa^F_{in} $ with $i\sim_F n$, respectively, we obtain the
estimate 
\begin{align*}
&\int_0^T |(I{-}P)c^\eps(t)|^2 \dd t 
   \leq C_2^2\int_0^T \| c^\eps(t)\|_{\mathbb F}^2 \dd t\\
&\leq C_2^2\int_0^T \sum_{(i,n)\in \mathbb F}
\Big(\sqrt{\frac{c^\eps_i}{w_i}} {-}\sqrt{\frac{c^\eps_n}{w_n}}\Big)^2 
 \Big(\sqrt{\frac{c^\eps_i}{w_i}} {+}\sqrt{\frac{c^\eps_n}{w_n}}\Big)^2 \dd t\\
&\leq \frac{C_2^2}{\ulw^2\ulk} \int_0^T \sum_{(i,n)\in \mathbb
  F}\frac{4\kappa^F_{in}}{\sqrt{w_iw_n}} 
\Big(\sqrt{\frac{c^\eps_i}{w_i}} {-}\sqrt{\frac{c^\eps_n}{w_n}}\Big)^2
  \dd t \leq \frac{C_2^2}{\ulw^2\ulk}\: C_\fD\, \eps.
\end{align*} 

By weak lower semicontinuity of semi-norms we find
$\int_0^T|(I{-}P)c^0(t)|^2 \dd t =0$ and conclude $c^0(t)=Pc^0(t)$
a.e.\ on $[0,T]$. This proves the result. 
\end{proof}

The next result shows that we are able to control the time derivative
of $Pc^\eps$. Using $\mathrm{range}(P)=\mathrm{range}(N)$ and
$NM=\mathrm{id}_Y$ it suffices to control $M\dot c^\eps$. For this, we
show that $ \cR_\eps(c,\cdot) $ restricted to $PX$ has a uniform lower
superlinear bound in terms of the superlinear function $\sfC$, see
\eqref{eq:sfC.growth}.

\begin{prop}[Convergence in the direction of slow
  reactions]\label{PropRegularityofc} 
Consider a sequence $(c^\eps)$
in $\rmL^2([0,T], \sfQ)$ with $\fD_\eps (c^\eps)\leq C_\fD <\infty $ and 
$c^\eps\rightharpoonup c^0$ in $\rmL^2([0,T]; X)$. Then, there is a
constant $C_\rmW>0$ such that 
\begin{equation}
  \label{eq:Pdot.ceps}
  \int_0^T \sfC\big( \frac1{C_\rmW}\,|P\dot c^\eps(t)|\big) \dd t  \leq C_\rmW. 
\end{equation}
Moreover, $Pc^\eps \rightharpoonup Pc^0 \in \rmW^{1,1}([0,T];\sfQ)$
and $Pc^\eps \to Pc^0$ in $\rmC^0([0,T];P\sfQ)$. 

With Lemma \ref{le:CvgFastReactions} we have 
$c^\eps\rightarrow c^0$ strongly in $\rmL^2([0,T],\sfQ)$ and
$c^0=Pc^0\in \rmW^{1,1}([0,T];\sfQ)$. 
\end{prop}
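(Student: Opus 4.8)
The plan is to read off everything from the single a priori bound $\fD_\eps(c^\eps)\le C_\fD$ and then to combine the resulting control in the slow direction with the estimate of Lemma~\ref{le:CvgFastReactions} in the fast direction. Since $\cR^*_\eps(c,{-}\D\cE_\eps(c))\ge 0$, the bound first gives $\int_0^T\cR_\eps(c^\eps,\dot c^\eps)\dd t\le C_\fD$. The whole proposition then rests on the uniform superlinear lower bound \eqref{eq:Pdot.ceps} for $\cR_\eps$ restricted to the $P$-component, which I expect to be the main obstacle.

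To prove \eqref{eq:Pdot.ceps} I would use that $\cR^*_F(c,\xi)=0$ for every $\xi\in\Xi=\mathrm{range}(M^*)$, see \eqref{eq:cRF.Xi}, so that the singular factor $1/\eps$ disappears uniformly once the defining supremum of $\cR_\eps(c,\cdot)=(\cR^*_\eps(c,\cdot))^*$ is restricted to $\xi=M^*\hat\xi\in\Xi$. With $\la M^*\hat\xi,v\ra=\la\hat\xi,Mv\ra$ this yields
\[
\cR_\eps(c,v)\ \ge\ \sup_{\hat\xi\in Y^*}\big\{\la\hat\xi,Mv\ra-g_c(\hat\xi)\big\}=g_c^*(Mv),\qquad g_c(\hat\xi):=\cR^*_S(c,M^*\hat\xi),
\]
a quantity that depends on $v$ only through $Mv$. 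Because $c\in\sfQ$ forces $\sqrt{c_ic_n}\le 1$ and the slow intensities $\kappa^{S,\eps}_{in}$ stay uniformly bounded as $w^\eps\to w^0\in{]0,1[}^I$, one has $g_c\le h$ with the $c$- and $\eps$-independent majorant $h(\hat\xi):=\bar\kappa\sum_{i<n}\sfC^*(\hat\xi_{\phi(i)}-\hat\xi_{\phi(n)})$. Order-reversal of the Legendre transform then gives $\cR_\eps(c,v)\ge g_c^*(Mv)\ge h^*(Mv)$.

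The decisive point is the superlinearity of the $c$-independent function $h^*$. Since $h$ depends on $\hat\xi$ only through the differences $\hat\xi_{j_1}-\hat\xi_{j_2}$ along slow reactions joining distinct clusters, and since by irreducibility \eqref{eq:cond.A1} these slow links make the quotient graph on the clusters connected (fast reactions never leave a cluster), these differences control $\hat\xi$ modulo additive constants. Every admissible $\hat v=M\dot c^\eps$ satisfies $\sum_j\hat v_j=\sum_i\dot c^\eps_i=0$ on the simplex, hence lies in the complement of the constants, where $h^*$ is finite and inherits the superlinear (entropy-type) growth of $\sfC=(\sfC^*)^*$. As $|Pv|=|NMv|$ and $N$ is injective, $|Pv|$ is comparable to $|Mv|$, so this produces \eqref{eq:Pdot.ceps}. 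I expect the delicate part to be the simultaneous balancing of three effects: annihilating the $1/\eps$ term uniformly (via $\Xi$), avoiding the degeneracy of $\sqrt{c_ic_n}$ on $\partial\sfQ$ (via the $c$-independent majorant $h$), and extracting coercivity from the connectedness of the coarse-grained slow graph.

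The compactness statements then follow by standard arguments. The superlinearity of $\sfC$ turns \eqref{eq:Pdot.ceps} into a de~la~Vall\'ee--Poussin condition, so $(P\dot c^\eps)$ is uniformly integrable, hence relatively weakly compact in $\rmL^1([0,T];X)$ by Dunford--Pettis, and together with $c^\eps(t)\in\sfQ$ this shows $Pc^\eps$ bounded in $\rmW^{1,1}$. Uniform integrability gives equicontinuity through $|Pc^\eps(t)-Pc^\eps(s)|\le\int_s^t|P\dot c^\eps|\dd r$, so Arzel\`a--Ascoli extracts a subsequence with $Pc^\eps\to Pc^0$ in $\rmC^0([0,T];P\sfQ)$, the limit being identified by matching the weak $\rmL^2$-limit and passing to the limit in the distributional derivative, whence $Pc^0\in\rmW^{1,1}$; uniqueness of the limit upgrades this to the full family. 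Finally, combining $Pc^\eps\to Pc^0$ in $\rmL^2$ with $\int_0^T|(I{-}P)c^\eps|^2\dd t\le C\eps\to0$ from Lemma~\ref{le:CvgFastReactions} gives $c^\eps\to Pc^0$ strongly in $\rmL^2([0,T];\sfQ)$; comparison with the weak limit $c^0$ forces $c^0=Pc^0\in\rmW^{1,1}([0,T];\sfQ)$.
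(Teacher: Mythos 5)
Your proposal is correct and follows essentially the same route as the paper: both annihilate the singular $1/\eps$ term by restricting the Legendre supremum to $\Xi=\mathrm{range}(M^*)=P^*X^*$ (so that $\cR^*_{F,\eps}$ vanishes by \eqref{eq:cRF.Xi}), bound the remaining slow dual potential by a $c$- and $\eps$-independent radial cosh-type function, and conjugate to obtain the uniform superlinear bound \eqref{eq:Pdot.ceps}, after which uniform integrability of $P\dot c^\eps$ and Arzel\`a--Ascoli yield the compactness (the paper merely writes out the resulting equicontinuity estimate explicitly instead of citing Dunford--Pettis). The only cosmetic differences are that you work with $Mv$ and $\hat\xi\in Y^*$ where the paper uses $Pv$ and $P^*\xi$, and that, since your majorant $h$ sums over all pairs $i<n$, the superlinear lower bound on $h^*$ already follows from the upper bound $h(\hat\xi)\le C\,\sfC^*(\sqrt2\,|\hat\xi|)$, so the connectivity of the coarse-grained slow graph you invoke is not actually needed at this step.
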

\begin{proof} To show a lower bound for $\cR_\eps(c,Pv)$ we first
  derive an upper bound for $\cR_\eps^*(c, \tilde\xi)$ for $\tilde\xi\in
  P^*X^*$. Use $\cR^*_{F,\eps}(c,\tilde\xi)=0$ and set
  $\olka: = \sup\bigset{\kappa_{in}^{S,\eps}}{ 1\leq i<n\leq I, \eps
  \in {]0,1[}}$ to obtain
\begin{align*}
  \cR^*_\eps(c, \tilde\xi) 
& = \sum_{i<j}\kappa^{S,\eps}_{in} \sqrt{c_ic_j} \,\sfC^* 
     \big((\tilde\xi_i {-} \tilde\xi_j\big)
\leq \sum_{i<j}\olka \,\tfrac12 \,\sfC^*(\sqrt2\,|\tilde\xi| ) 
\leq  a\, \sfC^*(\sqrt2\,|\tilde\xi|)  
\end{align*}
with $ a=I^2\olka/4$. Next, Legendre transform,
$\cR^*_{F,\eps}(c,\tilde\xi)=0$ by (\ref{eq:cRF.Xi}) and the
bound $|P^*\xi|\leq C_P|\xi|/\sqrt{2}$ yield the lower bound
\begin{align*}
\cR_\eps(c,v) 
& \geq \sup\bigset{ \la \tilde\xi, v\ra {-} \cR^*_\eps(c, \tilde\xi)}{ 
 \tilde \xi\in P^*X^*}   =\sup\bigset{ \la P^*\xi, v\ra {-} 
   \cR^*_{S,\eps}(c, P^*\xi)}{\xi\in X^*} 
\\
& \geq \sup\bigset{ \la \xi, Pv\ra {-} a \sfC^*(\sqrt2
  |P^*\xi|)}{\xi\in X^*} \geq \sup\bigset{ \la \xi, Pv\ra {-} 
  a \sfC^*(C_P |\xi|)}{\xi\in X^*} \\
 & = a \,\sfC\big(|Pv|/(aC_P)\big). 
\end{align*}
Applying this to $v=\dot c^\eps$ we find 
\[
\int_0^T a \,\sfC \big(\frac{|P\dot c^\eps(t)|}{aC_P} \big) \dd t
\leq \int_0^T \cR_\eps(c^\eps(t),\dot c^\eps(t)) \dd t \leq
\fD_\eps(c^\eps) \leq C_\fD,
\]
which gives \eqref{eq:Pdot.ceps} with $C_\rmW=\max\{a C_P,
C_\fD/a\}$. \EEE

With the superlinearity of $\sfC$, we obtain $Pc^\eps \rightharpoonup Pc^0$ in
$\rmW^{1,1}([0,T];PX)$. Moreover, the sequence $Pc^\eps$ is also
equicontinuous, which is seen as follows. By \eqref{eq:Pdot.ceps} and
\eqref{eq:sfC.growth} we have $
\int_0^T |P\dot c^\eps(t)| \log\big(2{+}|P\dot c^\eps(t)|\big) \dd t
\leq C_1$. 
For $R>0$ we set $\Sigma(R,\eps)=\bigset{t\in [0,T]}{ |P\dot
  c^\eps(t)|\geq R}$. Thus, for $t_1<t_2$ we obtain the estimate 
\begin{align*}
&|Pc^\eps(t_2){-}Pc^\eps(t_1)| \leq \int_{t_1}^{t_2} |P\dot
c^\eps(t)|\dd t \\
&\leq \int_{[t_1,t_2]\setminus \Sigma(R,\eps)} \!\!|P\dot c^\eps(t)| \dd t
 + \int_{\Sigma(R,\eps)}\!\!|P\dot c^\eps(t)|\, \tfrac{ \log(2{+}|P\dot
  c^\eps(t)|)}{\log(2{+}R)} \dd t
\leq (t_2{-}t_1)R + \frac{C_1}{\log(2{+}R)}.
\end{align*}
The last sum can be made smaller than any $\eps>0$ by choosing first
$R=R(\eps):= \exp(2C_1/\eps)$ and then assuming $t_2-t_1< \delta
(\eps) := \eps/(2R(\eps))$. This shows $|P c^\eps (t_2) {-} P
c^\eps(t_1)|< \eps$ whenever $|t_2{-}t_1|< \delta(\eps)$, which is the
desired equicontinuity. By the Arzel\`a-Ascoli theorem we obtain
uniform convergence.

The final convergence follows from $c^\eps = Pc^\eps + (I{-}P)c^\eps$ via
Lemma \ref{le:CvgFastReactions}, and the last statement from
$Pc^0(t)=c^0(t)$ a.e. in $[0,T]$. 
\end{proof}

\subsection{The liminf estimate}
\label{su:LiminfEst}

For the limit passage $\eps\to 0$ we use a technique, which was
introduced formally in \cite{LMPR17MOGG} and exploited in
\cite{MaaMie18?MCRD} for the study of the large-volume limit in
chemical master equations. It relies on the idea that the velocity
part $\fD^\mathrm{vel}_\eps=\int \cR_\eps\,\d t$ \EEE of the
dissipation functional $\fD_\eps$ can be characterized by Legendre
transform using a classical result of Rockafellar:

\begin{thm}[{\cite[Thm.\,2]{Rock68ICF}}]
\label{thm:Rockafellar}
Let $f:[0,T]\ti \R^n\to \Rinfty$ be a normal, convex integrand and
with conjugate $f^*$. Assume there exist $u_\circ\in
\rmL^1([0,T];\R^n)$ and $\xi_\circ\in \rmL^\infty([0,T];\R^n)$ such
that $t\mapsto f(t,u_\circ(t))$ and $t\mapsto f^*(t,\eta_\circ(t))$
are integrable, then the functionals
\[
I_f{:}\left\{ \begin{array}{c@{\,}c@{\,}c}\!\!\rmL^1([0,T];\R^n)&\to& \Rinfty,\\
 u&\mapsto& \int_0^T \!f(t,u(t))\dd t \end{array}\right.
\text{ and } \ \ 
I_{f^*}:\left\{ \begin{array}{c@{\,}c@{\,}c}
  \!\!\rmL^\infty([0,T];\R^n)&\to
    &\Rinfty,\\
  \eta&\mapsto &\int_0^T \!f^*(t,\eta(t))\dd t \end{array}\right.
\] 
are proper convex functionals that are conjugate to each other with
respect to the dual pairing $(u,\eta)\mapsto \int_0^T\la \xi(t), 
u(t)\ra \dd t$, viz.\ for all $ u \in \rmL^1([0,T];\R^n)$ we have
\begin{equation}
  \label{eq:Rockafellar}
 \int_0^T \!\! f(t,u(t))\dd t = \sup \bigg\{\:
   \int_0^T \!\! \Big(\la\eta(t),u(t)\ra - f^*(t,\eta(t)) \Big) \dd t 
  \: \bigg|\:  \eta \in \rmL^\infty([0,T];\R^n) \: \bigg\}. 
\end{equation}
\end{thm}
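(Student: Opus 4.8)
The plan is to read \eqref{eq:Rockafellar} as the statement that the two integral functionals $I_f$ and $I_{f^*}$ are convex conjugates of one another for the dual pairing $\langle u,\eta\rangle = \int_0^T \langle \eta(t),u(t)\rangle \dd t$ between $\rmL^1([0,T];\R^n)$ and $\rmL^\infty([0,T];\R^n)$; indeed \eqref{eq:Rockafellar} is precisely the identity $I_f = (I_{f^*})^*$. Before anything else I would record that both functionals are proper: the anchors $u_\circ,\eta_\circ$ give $I_f(u_\circ)<\infty$ and $I_{f^*}(\eta_\circ)<\infty$, while testing Fenchel--Young against the fixed $\eta_\circ$ yields $I_f(u)\geq \int_0^T\big(\langle \eta_\circ,u\rangle - f^*(t,\eta_\circ)\big)\dd t > -\infty$ for every $u\in\rmL^1$ (the right-hand side is finite because $\langle\eta_\circ,u\rangle\in\rmL^1$ and $f^*(\cdot,\eta_\circ)\in\rmL^1$), and symmetrically for $I_{f^*}$.

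First I would dispose of the easy inequality ``$\geq$'' in \eqref{eq:Rockafellar}: for any fixed $\eta\in\rmL^\infty$ the pointwise Fenchel--Young inequality $\langle\eta(t),u(t)\rangle - f^*(t,\eta(t))\leq f(t,u(t))$ holds a.e., so integrating and taking the supremum over $\eta$ gives $\sup_\eta\{\cdots\}\leq I_f(u)$. The content of the theorem is the reverse inequality, whose crucial pointwise input is the biconjugation identity: since $f$ is a normal convex integrand, $f(t,\cdot)$ is proper, convex and lower semicontinuous for a.e.\ $t$, so the Fenchel--Moreau theorem yields $f(t,u(t))=\sup_{\eta\in\R^n}\{\langle\eta,u(t)\rangle - f^*(t,\eta)\}$ pointwise. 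The task is then to upgrade this pointwise supremum to a supremum over $\rmL^\infty$-selections without losing mass.

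The main work, and the main obstacle, is a measurable selection argument. For a truncation level $N$ and an error $\eps>0$ I would invoke the measurability built into the notion of normal integrand (joint measurability of $(t,\eta)\mapsto f^*(t,\eta)$, hence measurability of the relevant superlevel multifunction) to produce, via a measurable selection theorem, a measurable map $t\mapsto\eta_\eps(t)$ with $\langle\eta_\eps(t),u(t)\rangle - f^*(t,\eta_\eps(t))\geq \min\{f(t,u(t)),N\}-\eps$ for a.e.\ $t$. Such a selection need not be essentially bounded, so I would truncate it: on $\{\,|\eta_\eps|\leq R\,\}$ keep $\eta_\eps$, and on the complement replace it by the anchor $\eta_\circ$ (or a convex interpolation toward $\eta_\circ$), using convexity of $f^*(t,\cdot)$ together with the integrability of $f^*(\cdot,\eta_\circ)$ to bound the error on the exceptional set, whose measure tends to $0$ as $R\to\infty$. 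Passing to the limit $R\to\infty$, then $\eps\to 0$ and $N\to\infty$ by monotone convergence (using $\min\{f(t,u(t)),N\}\uparrow f(t,u(t))$) gives $\sup_\eta\{\cdots\}\geq I_f(u)$, the case $I_f(u)=+\infty$ being covered since the same construction then forces the supremum to be $+\infty$. The delicate point is exactly this joint-measurability and selection step, as it is where ``normal convex integrand'' is used in full strength; the anchors $u_\circ,\eta_\circ$ enter only to secure properness and to provide safe values in the truncation.
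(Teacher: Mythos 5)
The paper contains no proof of this statement: it is imported verbatim from Rockafellar \cite[Thm.\,2]{Rock68ICF} and used as a black box, so there is no internal argument to compare yours against. Your sketch is a correct reconstruction of the standard proof, and essentially of Rockafellar's own: properness of $I_f$ and $I_{f^*}$ from the two anchors, the inequality ``$\geq$'' in \eqref{eq:Rockafellar} by integrating the pointwise Fenchel--Young inequality, and the reverse inequality by upgrading the pointwise biconjugation $f(t,u(t))=\sup_{\eta\in\R^n}\{\la\eta,u(t)\ra-f^*(t,\eta)\}$ (valid a.e.\ since $f(t,\cdot)$ is proper, convex and lsc for a normal integrand) to an $\rmL^\infty$-selection via measurable selection plus truncation, with the integrable anchor $\eta_\circ$ supplying the safe value on the exceptional set. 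You correctly locate the two load-bearing hypotheses: normality yields joint measurability of $f^*$ (itself a normal convex integrand), hence measurability of the near-argmax multifunction and existence of a measurable selection, while the anchors give properness, the integrable minorant $\la\eta_\circ,u\ra-f^*(\cdot,\eta_\circ)$ needed for the monotone-convergence step in $N$, and the replacement value in the truncation. The only item not addressed is the symmetric half of the claim, $I_{f^*}=(I_f)^*$ on $\rmL^\infty$, which follows by the mirror argument with the truncation performed in $\rmL^1$; for the paper's purposes only the displayed identity \eqref{eq:Rockafellar} is used, so this omission is harmless.
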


We apply this result with $f(t,u)=\cR_\eps(c(t),u)$ and obtain, for
$\eps \in [0,1]$, the identity
\begin{align}
\label{eq:mfD=sup.mfB}
\fD_\eps(c)&= \sup\bigset{ \mfB_\eps(c,\dot c,\xi) }{
  \xi\in \EEE \rmL^\infty([0,T];X^*) }, \  \text{ where }
\\
\nonumber
& \mfB_\eps(c,u,\xi):=\mfB^\mathrm{vel}_\eps(
c,u,\xi)+\fD^\mathrm{slope}_\eps(c) \ \text{ with }
\\ 
\nonumber
&\mfB^\mathrm{vel}_\eps(c,u,\xi):=\int_0^T \Big(\la\xi(t),u(t)\ra -
\cR^*_\eps(c(t),\xi(t)) \Big)  \dd t \
\text{ and }
\\
&\nonumber
\fD^\mathrm{slope}_\eps(c) =
\int_0^T \cR^*_\eps\big(c,-\D\cE_\eps(c(t))\big) \EEE \dd t . 
\end{align}
The assumptions are easily satisfied as we may choose $u_\circ\equiv 0$ and
$\eta_\circ \equiv 0$.  

With these preparations we obtain the liminf estimate in a
straightforward manner.

\begin{thm}[Liminf estimate]
\label{TheoremLiminfEstimate}
The weak convergence  $c^\eps\rightharpoonup c^0$ in
$\rmL^2([0,T];\sfQ)$ implies 
 $\liminf_{\eps\rightarrow 0} \fD_\eps(c^\eps) \geq \fD_0(c^0)$, where
 $\fD_0$ is defined via $\cE_0$ and $\cR_\eff$ in \eqref{eq:frakD0}.
\end{thm}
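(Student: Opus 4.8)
The plan is to exploit the dual (Rockafellar) representation $\fD_\eps(c)=\sup\{\mfB_\eps(c,\dot c,\xi)\mid \xi\in\rmL^\infty([0,T];X^*)\}$ from \eqref{eq:mfD=sup.mfB} and to pass to the liminf against a carefully restricted class of test functions. First I would reduce to the nontrivial case $\liminf_\eps\fD_\eps(c^\eps)<\infty$ and pass to a subsequence (not relabeled) realizing the liminf with $\fD_\eps(c^\eps)\leq C_\fD$. Then Lemma \ref{le:CvgFastReactions} and Proposition \ref{PropRegularityofc} upgrade the weak convergence to $c^\eps\to c^0$ strongly in $\rmL^2([0,T];\sfQ)$ with $c^0=Pc^0\in\rmW^{1,1}([0,T];\sfQ)$ and $Pc^\eps\rightharpoonup Pc^0$ in $\rmW^{1,1}$, hence $P\dot c^\eps\rightharpoonup\dot c^0$ weakly in $\rmL^1$ (the superlinear bound \eqref{eq:Pdot.ceps} supplies the equi-integrability). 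After a further subsequence I may also assume $c^\eps(t)\to c^0(t)$ for a.a.\ $t$.

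The decisive idea is to test only with $\xi\in\rmL^\infty([0,T];\Xi)$, where $\Xi=M^*Y^*=\mathrm{kernel}(M)^\perp$. For such $\xi$ two things happen simultaneously. Since $\mathrm{kernel}(M)=(I{-}P)X$ by \eqref{eq:DecompoX.c}, the splitting $v=Pv+(I{-}P)v$ gives $\la\xi,v\ra=\la\xi,Pv\ra$, so $\la\xi(t),\dot c^\eps(t)\ra=\la\xi(t),P\dot c^\eps(t)\ra$; this is exactly what lets me use the controlled quantity $P\dot c^\eps$ and never touch the genuinely uncontrolled $(I{-}P)\dot c^\eps$. Moreover, by \eqref{eq:cRF.Xi} the fast part vanishes, $\cR^*_{F,\eps}(c,\xi)=0$, so $\cR^*_\eps(c,\xi)=\cR^*_{S,\eps}(c,\xi)$ and the dangerous factor $1/\eps$ disappears. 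Using $\kappa^{S,\eps}_{in}\to\kappa^{S,0}_{in}$, $w^\eps\to w^0$, the a.e.\ convergence $c^\eps\to c^0$, and dominated convergence (the integrand is bounded on the compact set $\sfQ$), I obtain for each fixed such $\xi$ the full convergence $\mfB^\mathrm{vel}_\eps(c^\eps,\dot c^\eps,\xi)\to \int_0^T(\la\xi,\dot c^0\ra-\cR^*_S(c^0,\xi))\dd t=:\mfB^\mathrm{vel}_0(c^0,\dot c^0,\xi)$, where the linear term passes by weak $\rmL^1$ convergence of $P\dot c^\eps$ against the $\rmL^\infty$ function $\xi$.

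For the slope part I use $\frac1\eps\cR^*_{F,\eps}\geq 0$ to drop the fast contribution, so $\fD^\mathrm{slope}_\eps(c^\eps)\geq\int_0^T\cR^*_{S,\eps}(c^\eps,{-}\D\cE_\eps(c^\eps))\dd t$. The key observation is that $c\mapsto\cR^*_S(c,{-}\D\cE_0(c))$ extends continuously to all of $\sfQ$, including its boundary: the logarithmic singularities of $-\D\cE_0(c)=(\log(w^0_i/c_i))_i$ are cancelled by the prefactor $\sqrt{c_ic_n}$, since $\sqrt{c_ic_n}\,\sfC^*\!\big(\log\tfrac{w^0_ic_n}{c_iw^0_n}\big)=2\big(c_n\sqrt{w^0_i/w^0_n}+c_i\sqrt{w^0_n/w^0_i}\big)-4\sqrt{c_ic_n}$. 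Hence $\cR^*_{S,\eps}(c^\eps,{-}\D\cE_\eps(c^\eps))$ is a bounded continuous function of $c^\eps$ with coefficients converging to those of $\cR^*_S(\cdot,{-}\D\cE_0(\cdot))$, and dominated convergence yields the full limit $\int_0^T\cR^*_{S,\eps}(c^\eps,{-}\D\cE_\eps(c^\eps))\dd t\to\int_0^T\cR^*_S(c^0,{-}\D\cE_0(c^0))\dd t$. Since $c^0=Pc^0$ gives $-\D\cE_0(c^0)\in\Xi$ a.e., this limit equals $\int_0^T\cR^*_\eff(c^0,{-}\D\cE_0(c^0))\dd t=:\fD^\mathrm{slope}_0(c^0)$, so $\liminf_\eps\fD^\mathrm{slope}_\eps(c^\eps)\geq\fD^\mathrm{slope}_0(c^0)$.

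Combining the two parts, for every fixed $\xi\in\rmL^\infty([0,T];\Xi)$ I get $\liminf_\eps\fD_\eps(c^\eps)\geq\liminf_\eps\mfB_\eps(c^\eps,\dot c^\eps,\xi)\geq\mfB^\mathrm{vel}_0(c^0,\dot c^0,\xi)+\fD^\mathrm{slope}_0(c^0)$, using that the velocity part \emph{converges} while the slope part only satisfies a liminf bound. Taking the supremum over $\xi\in\rmL^\infty([0,T];\Xi)$ and invoking Theorem \ref{thm:Rockafellar} once more for the limit integrand $f(t,v)=\cR_\eff(c^0(t),v)$ (whose conjugate is $\cR^*_\eff(c^0(t),\cdot)=\cR^*_S(c^0(t),\cdot)+\chi_\Xi$, so the admissible test functions are automatically $\Xi$-valued) identifies the supremum of the velocity part with $\int_0^T\cR_\eff(c^0,\dot c^0)\dd t$. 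This gives $\liminf_\eps\fD_\eps(c^\eps)\geq\int_0^T\cR_\eff(c^0,\dot c^0)\dd t+\fD^\mathrm{slope}_0(c^0)=\fD_0(c^0)$, as claimed. The main obstacle, and the heart of the argument, is the velocity part: one must \emph{not} attempt to pass to the limit in $\int\la\xi,\dot c^\eps\ra\dd t$ for general $\xi$, and the restriction to $\Xi$-valued tests — which simultaneously removes the $1/\eps$ term and reduces $\dot c^\eps$ to the controlled $P\dot c^\eps$ — is precisely what makes both the convergence and the final Rockafellar reconstruction of $\cR_\eff$ succeed.
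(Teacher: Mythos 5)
Your proposal is correct and follows essentially the same route as the paper's proof: upgrade to strong $\rmL^2$ convergence and $c^0=Pc^0\in\rmW^{1,1}$ via Lemma \ref{le:CvgFastReactions} and Proposition \ref{PropRegularityofc}, treat the slope part by dropping the nonnegative fast contribution and using the continuity of $(\eps,c)\mapsto\cR^*_{S,\eps}(c,-\D\cE_\eps(c))$ (your explicit cancellation of the logarithms by $\sqrt{c_ic_n}$ is exactly the content of \eqref{eq:slopeSeps}), and handle the velocity part through the Rockafellar representation restricted to $\Xi$-valued test functions, which is the same as the paper's restriction to $\xi=P^*\xi$ since $\Xi=\mathrm{range}(P^*)$. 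The only differences are cosmetic (explicit subsequence extraction, and combining the two parts via ``limit plus liminf'' rather than two separate liminf bounds).
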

\begin{proof} 
We may assume that $\alpha_*:=\liminf_{\eps\rightarrow 0}
\fD_\eps(c^\eps)<\infty$, since otherwise the desired estimate is
trivially satisfied.

\underline{\em Step 1. Strong convergence and limit characterization:}  
Using Proposition \ref{PropRegularityofc} gives
\[
c^\eps \to c^0 \text{ strongly in } \rmL^2([0,T];Q)\quad
\text{and} \quad c^0=Pc^0 \in \rmW^{1,1}([0,T];\R^I). 
\]

\underline{\em Step 2. Slope part:} Because of $Pc^0(t)=c^0(t)$ we
know $\xi_0(t)=\D\cE_0(c^0(t))\in M^*X^*$ which implies
$\chi_\Xi \big({-}\D\cE_0(c^0(t))\big)=0$ on $[0,T]$. Hence,
dropping the nonnegative term $\cR^*_{F,\eps}(c^\eps,
{-}\D\cE_\eps(c^\eps(t))\big)$ and setting $\mathcal S_\eps(c):=
\cR^*_{S,\eps}(c, {-}\D\cE_\eps(c))$ we obtain 
\[
\liminf_{\eps\to 0} \fD_\eps^\mathrm{slope}(c^\eps) 
\geq \liminf_{\eps\to 0} \int_0^T \mathcal S_\eps(c^\eps(t)) \dd t 
\overset*= 
\int_0^T \mathcal S_0(c^0(t))\dd t = \fD_0^\mathrm{slope}(c^0).
\]
In the passage $\overset*=$ we use the strong convergence $c^\eps
\to c^0$ and the continuity of 
\begin{equation}
  \label{eq:slopeSeps}
  [0,1]\ti Q\ni (\eps,c)\mapsto \mathcal S_\eps(c) = \cR^*_{S,\eps}(c,
{-}\D\cE_\eps(c))= \sum_{i<n} \frac{ 4 \kappa^{S,\eps}_{in}}{w_i^\eps
  w_n^\eps} \Big( \sqrt{\tfrac{c_i}{w_i^\eps}} -
\sqrt{\tfrac{c_n}{w_n^\eps}} \Big)^2. 
\end{equation}

\underline{\em Step 3. Velocity part:} We exploit the Rockafellar
representation \eqref{eq:mfD=sup.mfB} together with the fact that
$\dot c{}^0(t) = P\dot c{}^0(t)$ a.e.\ in $[0,T]$. The latter
condition allows us to test only by functions $\xi =P^*\xi \in 
\rmL^\infty([0,T];X^*)$, which leads to the estimate
\begin{align*}
\liminf_{\eps\to 0} \fD^\mathrm{vel}_\eps(c^\eps) &\geq \liminf_{\eps
  \to 0 } \mfB_\eps^\mathrm{vel}(c^\eps, \dot c^\eps, P^*\xi) 
\overset{\mathrm{a}}= \liminf_{\eps \to 0 } \int_0^T \!\!\Big(\la \xi,P\dot c^\eps \ra - 
\cR^*_{S,\eps}(c^\eps, \xi) \Big) \dd t\\
& \overset{\mathrm{b}}= \int_0^T\!\!\Big(\la \xi,P\dot c^0 \ra - 
\cR^*_{S}(c^0, \xi) \Big) \dd t= \mfB_0^\mathrm{vel}(c^0,\dot c{}^0,\xi),
\end{align*}
where in $\overset{\mathrm{a}}=$ we used
$\cR_\eps^*(c,\xi)=\cR^*_{S,\eps}(c,\xi)$ whenever $\xi=P^*\xi$, see
\eqref{eq:cRF.Xi}. In $\overset{\mathrm{b}}= $ we exploited the weak
convergence $P\dot c{}^\eps \rightharpoonup P c{}^0$ established in
Proposition \ref{PropRegularityofc} as well as the strong convergence
$c^\eps \to c^0$ together with the continuity of $(\eps,c)\mapsto
\cR^*_{S,\eps}(c,\xi)$.

Now we exploit Rockafellar's characterization \eqref{eq:mfD=sup.mfB}
to return to $\fD^\mathrm{vel}_0(c^0)$, namely
\begin{align*}
 \fD^\mathrm{vel}_0(c^0)&=\sup\bigset{\mfB_0^\mathrm{vel}(c^0,
   \dot c^0, \xi) }{ \xi\in\rmL^\infty([0,T];X^*)}\\
&=\sup\bigset{\int_0^T\Big(\la \xi,\dot c^0\ra -
  \cR^*_S(c^0,\xi)-\chi_\Xi(\xi) \Big) \dd t }{
  \xi \in \rmL^\infty([0,T];X^*)}
\\
&=\sup\bigset{ \mfB_0^\mathrm{vel}(c^0,\dot c{}^0,\xi) }{
  \xi= P^* \xi \EEE \in \rmL^\infty([0,T];X^*)}.
\end{align*}
With the above estimate we conclude $\liminf_{\eps\to 0}
\fD^\mathrm{vel}_\eps(c^\eps) \geq \fD^\mathrm{vel}_0(c^0)$.

Adding this to the estimate in Step 2 we obtain the full liminf estimate. 
\end{proof}

\subsection{Construction of the recovery sequence}
\label{su:RecoverySeq}

Now we construct the recovery sequence for the Mosco-convergence of
the dissipation functionals $\fD_\eps$. This provides the required
limsup estimate $\limsup_{\eps \to 0 } \fD_\eps(c^\eps) \leq
\fD_0(c^0)$ along at least one sequence with the strong convergence
$c^\eps \to c^0$ in $\rmL^2([0,T];\sfQ)$.  For this we use in Step
2(b) an approximation result by piecewise affine functions $\widehat
c_N$ introduced in \cite[Thm.\,2.6, Step\,3]{LieRei18HCHT} and
adapted to state-dependent dissipation potentials in
\cite[Cor.\,3.3]{BaEmMi18?EREG}.

\begin{thm}[Recovery sequences]
  For every $c^0\in\rmL^2([0,T];\sfQ)$ there exists a
  sequence $(c^\eps)_{\eps \in {]0,1[}}$ with $c^\eps\rightarrow c^0$ in
  $\rmL^2([0,T];\sfQ)$ such that $\lim_{\eps\rightarrow
    0}\fD_\eps(c^\eps)=\fD_0(c^0)$.
\end{thm}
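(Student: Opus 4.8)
The plan is to take the simplest possible recovery sequence, namely the $\eps$-independent choice $c^\eps = c^0$, and to show that the inf-convolution hidden in $\cR_\eps$ realises the effective dissipation $\cR_\eff$ automatically in the limit. First I would dispose of the trivial case: if $\fD_0(c^0)=\infty$ nothing is to be proved, so I assume $c^0\in\rmW^{1,1}([0,T];\sfQ)$ with $c^0=Pc^0$ and $\fD_0(c^0)<\infty$. Since $c^\eps=c^0$ is constant in $\eps$, the required strong convergence $c^\eps\to c^0$ in $\rmL^2([0,T];\sfQ)$ is automatic, and it remains to prove the limsup estimate $\limsup_{\eps\to 0}\fD_\eps(c^0)\le\fD_0(c^0)$, which splits into a velocity part $\int_0^T\cR_\eps(c^0,\dot c^0)\dd t$ and a slope part $\int_0^T\cR^*_\eps(c^0,-\D\cE_\eps(c^0))\dd t$.

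For the slope part I would exploit that $c^0=Pc^0$ lies in $X_\mathrm{eq}$, so $-\D\cE_0(c^0)\in\Xi$ and $\cR^*_F(c^0,-\D\cE_0(c^0))=0$ by \eqref{eq:cRF.Xi}. At finite $\eps$ the vector $-\D\cE_\eps(c^0)$ involves $w^\eps$ instead of $w^0$; since $w^\eps=w^0+O(\eps)$ by the polynomial dependence established in Proposition \ref{prop:DBSimpliesPositivityofw0}, the differences $(-\D\cE_\eps(c^0))_i-(-\D\cE_\eps(c^0))_n$ for $i\sim_F n$ are of order $\eps$, whence $\sfC^*$ contributes $O(\eps^2)$ and the dangerous factor $1/\eps$ gives $\tfrac1\eps\cR^*_{F,\eps}(c^0,-\D\cE_\eps(c^0))=O(\eps)\to 0$. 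The slow part $\cR^*_{S,\eps}(c^0,-\D\cE_\eps(c^0))$ converges to $\cR^*_S(c^0,-\D\cE_0(c^0))$ by the continuity of the integrand \eqref{eq:slopeSeps} in $(\eps,c)$, so dominated convergence yields convergence of the whole slope integral.

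For the velocity part I would use that $\cR_\eps(c^0,\cdot)$ is the inf-convolution of $\cR_{S,\eps}(c^0,\cdot)$ with $v\mapsto\tfrac1\eps\cR_{F,\eps}(c^0,\eps v)$, dual to the splitting $\cR^*_\eps=\cR^*_{S,\eps}+\tfrac1\eps\cR^*_{F,\eps}$; since $\cR^*_{F,\eps}(c^0,\cdot)$ vanishes on $\Xi$, the second factor is finite only in the fast directions $\mathrm{kernel}(M)=X_\mathrm{fast}$. Choosing a measurable minimiser $z$ of $\cR_S(c^0(t),\cdot)$ on $\{z:Mz=M\dot c^0(t)\}$, so that $\cR_S(c^0,z)=\cR_\eff(c^0,\dot c^0)$ and $\dot c^0-z\in X_\mathrm{fast}$, the admissible split $\dot c^0=z+(\dot c^0-z)$ gives the pointwise bound
\[
 \cR_\eps(c^0,\dot c^0)\;\le\;\cR_{S,\eps}(c^0,z)+\tfrac1\eps\cR_{F,\eps}\big(c^0,\eps(\dot c^0-z)\big).
\]
The first term converges to $\cR_S(c^0,z)=\cR_\eff(c^0,\dot c^0)$ by continuity of the Legendre transforms in the coefficients $\kappa^{S,\eps}_{in}$, while the quadratic behaviour near the origin of the finite branch of $\cR_{F,\eps}(c^0,\cdot)$ on $X_\mathrm{fast}$ bounds the second term by $C\eps\,|\dot c^0-z|^2\to 0$; integrating and applying reverse Fatou then yields $\limsup_\eps\int_0^T\cR_\eps(c^0,\dot c^0)\dd t\le\int_0^T\cR_\eff(c^0,\dot c^0)\dd t$.

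The main obstacle is precisely this passage of $\limsup$ inside the time integral, which demands a uniform integrable majorant for $\cR_\eps(c^0,\dot c^0)$, and hence control of the optimal fast correction $\dot c^0-z$ in $\rmL^2$ together with a measurable selection of $z$; near $\partial\sfQ$ the coefficients $\sqrt{c_ic_n}$ degenerate, which makes a direct selection delicate. Therefore I would first reduce, by a diagonal argument, to piecewise affine curves $\widehat c_N\to c^0$ in $\sfQ_\mathrm{eq}$ that stay bounded away from $\partial\sfQ$, using the approximation results of \cite{LieRei18HCHT, BaEmMi18?EREG}, for which $\fD_0(\widehat c_N)\to\fD_0(c^0)$ and for which both $\dot{\widehat c}_N$ and the selection $z$ are bounded. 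The dominated-convergence argument of the preceding two paragraphs then applies to each $\widehat c_N$ without degeneracy, and the resulting diagonal sequence furnishes the desired recovery sequence for the original $c^0$, completing the strong-$\Gamma$-limsup and hence, together with Theorem \ref{TheoremLiminfEstimate}, the Mosco convergence $\fD_\eps\Moscolim\fD_0$.
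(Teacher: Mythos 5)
Your proposal is essentially correct but follows a genuinely different route for the core limsup estimate, while sharing the paper's preliminary reductions (Step 2(a): pushing $c^0$ off $\partial\sfQ$ via $\delta w^0+(1{-}\delta)c^0$; Step 2(b): piecewise affine interpolants $\widehat c_N$; Step 3: diagonal argument). The paper does \emph{not} use the constant sequence: it takes the rescaled curves $c^\eps=\DDD_{w^\eps}\DDD_{w^0}^{-1}c^0$, which force $\D\cE_\eps(c^\eps(t))\in\Xi$ exactly, so the fast slope term vanishes identically rather than being $O(\eps)$; and it treats the velocity part by duality, writing $\fD^{\mathrm{vel}}_\eps(c^\eps)=\sup_\xi\mfB^{\mathrm{vel}}_\eps(c^\eps,\dot c^\eps,\xi)$ via Rockafellar's theorem, showing the maximizers $\xi^\eps$ (normalized by $\xi^\eps_1=0$) are uniformly bounded in $\rmL^\infty$, and closing with an Ioffe-type lower semicontinuity argument. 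Your primal alternative — the inf-convolution upper bound $\cR_\eps(c,v)\le\cR_{S,\eps}(c,z)+\tfrac1\eps\cR_{F,\eps}(c,\eps(v{-}z))$ with the near-optimal competitor $z\in\arg\min\{\cR_S(c,\cdot):\ Mz=Mv\}$ and the uniform quadratic bound $\tfrac1\eps\cR_{F,\eps}(c,\eps u)\le C\eps|u|^2$ on $\mathrm{kernel}(M)$ for $c\in\sfQ_{c_*}$ — is more elementary and makes transparent why $\cR_\eff$ is the marginal of $\cR_S$ under $M$; the price is that you must produce a measurable, uniformly bounded selection $t\mapsto z(t)$, which your reduction to Lipschitz curves in $\sfQ_{c_*}$ does make available but which the paper's dual argument never needs.

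One point deserves an explicit flag. Your slope estimate for the constant sequence needs the \emph{rate} $|w^\eps-w^0|=O(\eps)$ (anything weaker than $o(\sqrt\eps)$ would let $\tfrac1\eps\cR^*_{F,\eps}(c^0,-\D\cE_\eps(c^0))$ blow up), whereas the standing assumption \eqref{eq:cond.A2} only states $w^\eps\to w^0$. This rate does hold in the paper's setting — $w^\eps$ depends rationally on $\eps$ and is analytic at $\eps=0$ once the limit exists, and the paper itself invokes $|w^\eps-w^0|\le C_2\eps$ in the proof of Theorem \ref{thm:ConvergSol} — so your argument goes through, but you should state this input explicitly; the paper's choice $c^\eps=\DDD_{w^\eps}\DDD_{w^0}^{-1}c^0$ is precisely designed to avoid needing any rate. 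Your dismissal of the case $\fD_0(c^0)=\infty$ is acceptable only because the liminf estimate of Theorem \ref{TheoremLiminfEstimate} is already in hand (so any sequence converging to $c^0$ automatically satisfies $\fD_\eps(c^\eps)\to\infty$); the paper instead verifies $\fD_\eps(c^0)\to\infty$ directly in its Step 1.
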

\begin{proof} \underline{\em Step 1. The case $\fD_0(c^0)=\infty$.}
  We choose the constant sequence $c^\eps=c^0$ and claim
  $\fD_\eps(c^\eps)=\fD_\eps(c^0)\to \infty$.  Because of
  $\fD_0(c^0)=\infty$ one of the following conditions is false:
\medskip 

\centerline{(i)
  $c^{0}(t)\in \sfQ_\mathrm{eq}$ a.e.\ in $[0,T]$ \quad or \quad (ii)
  $\sfC\big(|P\dot c{}^0(\cdot)|\big) \in \rmL^1([0,T])$.\medskip }

If (i) is false, then $c^0(t)\not\in \sfQ_\mathrm{eq}$ for $t\in \mathcal
T \subset[0,T]$, where $|\mathcal T|=\int_{\mathcal T}1 \dd t >0$. 
Setting $\mathcal F_\eps(c):=\cR^*_{F,\eps}(c,{-}\D\cE_\eps(c))$ we
have 
\[
\fD^\mathrm{slope}_\eps(c^0)=
\int_0^T\!\!\Big(\cR_{S,\eps}^*\big(c^0,{-}\D\cE_\eps(c^0) \big){+}\frac1\eps 
 \cR^*_{F,\eps}\big(c^0,{-}\D\cE_\eps(c^0) \big) \Big) \dd t 
 \geq \frac1\eps \int_0^T\mathcal F_\eps(c^0(t)) \dd t.  
\]
However, for $t\in \mathcal T$ we have $\mathcal F_\eps(c^0(t))\to
\mathcal F_0(c^0(t)) >0$. Thus, $\fD^\mathrm{slope}_\eps(c^0)\to
\infty$ follows which implies $\fD_\eps(c^0)\to \infty$.

If (ii) is false, then $\fD_\eps^\mathrm{vel}(c^0)=\infty$ for all
$\eps>0$ and we are done.\medskip

\underline{\em Step 2. Preliminary recovery sequences for the case
  $\fD_0(c^0)<\infty$.} In the sub-steps 
(a) to (c) we discuss three approximations for general
$c^0$.\smallskip

{\em Step 2(a).} Positivity for the case $\eps=0$. We set $\tilde
c_\delta(t):=\delta w^0+ (1{-}\delta) c^0(t)$ and claim that
$\fD_0(\tilde c_\delta ) \to \fD_0(c^0)<\infty$ for $\delta\searrow
0$. As $\fD_0$ is convex and lower semicontinuous (cf.\ see Proposition
\ref{pr:Proper.cReps}), we have
$\liminf_{\delta \searrow 0} \fD_0(\tilde c_\delta) \geq \cD_0(c^0)$.

Obviously, $\tilde c_\delta \geq (1{-}\delta) c^0$ holds
componentwise, and hence the explicit form of $\cR^*_0$ gives
\[
\cR^*_\eff(\tilde
c_\delta, \xi)\geq (1{-}\delta) \cR^*_\eff(c^0,\xi),  
\quad\text{and thus  }
\cR_\eff(\tilde c_\delta,v)\leq  (1{-}\delta) \cR_\eff \big(c^0,
\frac1{1{-}\delta} v\big).
\]
Inserting $v=\dot{\tilde c}_\delta= (1{-}\delta) \dot c{}^0$ into the
latter estimate gives 
\[
\fD_0^\mathrm{vel}(\tilde c_\delta)=\int_0^T \!\!\cR_\eff(\tilde c_\delta, 
 \dot{\tilde c}_\delta) \dd t \leq
 \int_0^T\!\!(1{-}\delta)\cR_\eff(c^0,\dot c{}^0) \dd t = (1{-}\delta)
 \fD_0^\mathrm{vel}(c^0),
\]
which proves the desired claim of Step 2(a), because
$\fD_0^\mathrm{slope}( \tilde c_\delta) \to \fD_0^\mathrm{slope}(c^0)$
is trivial.\smallskip

{\em Step 2(b).} We stay with $\eps=0$ and, by Step 2(a), may
assume for some $c_*>0$ that
\[
c^0(t) \in \sfQ_{c_*}:=\bigset{c\in \sfQ}{\forall\,i\in \cI{:}\
  c_i\geq c_*}  \text{ for all } t \in [0,T].
\]
We now approximate $c^0$ by
a function $\widehat c_N \in \rmW^{1,\infty}([0,T];PX)$ still
satisfying $\widehat c_N(t)\in \sfQ_{c_*}$. 

For $N\in \N$ we define $\widehat c_N:[0,T]\to PX$ as the piecewise
affine interpolant of the nodal points $\widehat c_N(kT/N)=c^0(kT/N)$
for $k=0,1,\ldots, N$.  We also define the piecewise constant
interpolant $\overline c_N:[0,T]\to \sfQ_{c_*}$ via $\overline
c_N(t)=c^0(kT/N)$ for $t\in {](k{-}1)T/N,kT/N]}$.  Then, using $c^0\in
\rmW^{1,1}([0,T];PX) \subset \rmC^0([0,T];PX)$ we have
\[
\widehat c_N\to c^0 \text{ in }\rmW^{1,1}([0,T];PX) \text{ and in } 
\rmC^0([0,T];PX)\quad \text{and} \quad 
\overline c_N \to c^0 \text{ in } \rmL^\infty([0,T];PX).  
\]
We now set
\[\alpha_N:=\|c^0{-}\widehat c_N\|_{\rmL^\infty} + \| c^0{-}\overline
c_N\|_{\rmL^\infty} 
\] 
and obtain $\alpha_N\to 0$.

These uniform estimates can be used in conjunction with the uniform
continuity of $c\mapsto \cR^*_\eff(c,\xi)$ when restricted to
$\sfQ_{c_*}$. Clearly $\sfQ_{c_*}\ni c\mapsto \sqrt{c_ic_n}$ is
Lipschitz continuous, and we call the Lipschitz constant $\lambda^*$.
The special form of $\cR^*_\eff$ then implies 
\[
\forall\,c,\tilde c\in \sfQ_{c_*} \ \forall\, \xi\in X^*:\ 
|\cR^*_\eff(c,\xi)-\cR^*_\eff(\tilde c,\xi)| \leq \Lambda^*
|c{-}\tilde c| \cR^*_\eff(c,\xi) \ \text{ with
}\Lambda^*=\lambda^*\overline \kappa.
\]
Assuming $|c{-}\tilde c|\leq \alpha$ and $\Lambda^*\alpha<1$ and
applying the Legendre transform we find 
\[
(1{-}\Lambda ^*\alpha)\cR_\eff(c,\tfrac1{1{-}\Lambda^*\alpha}v) 
 \geq \cR_\eff(\tilde c,v) \geq 
(1{+}\Lambda ^*\alpha)\cR_\eff(c,\tfrac1{1{+}\Lambda^*\alpha}v) .
\] 
Exploiting the scaling property \eqref{eq:Rcosh.scaling} we arrive at
the estimates 
\[
\frac1{1{-}\Lambda ^*\alpha}\,\cR_\eff(c, v ) 
 \geq \cR_\eff(\tilde c,v) \geq 
\frac1{1{+}\Lambda ^*\alpha}\, \cR_\eff(c, v) .
\] 

To estimate the velocity part of the dissipation functional as in
\cite{LieRei18HCHT, BaEmMi18?EREG} we introduce
\[
\mathfrak J(c,v):=\int_0^T \cR_\eff(c(t),v(t))\dd t,
\]
which allows us to use different approximations for $c^0$ and for $\dot
c^0$. We obtain
\begin{align*}
\fD^\mathrm{vel}_0(\widehat c_N)&= \mathfrak J(\widehat c_N,
\dot{\widehat c}_N) \leq (1{+}\Lambda^*\alpha_N) \mathfrak J(\overline
c_N, \dot{\widehat c}_N) \\
&\overset*\leq (1{+}\Lambda^*\alpha_N) \mathfrak J(\overline c_N, \dot c^0) 
\leq (1{+}\Lambda^*\alpha_N)^2 \mathfrak J(c^0,\dot c^0) =
 (1{+}\Lambda^*\alpha_N)^2 \fD^\mathrm{vel}_0( c^0).
\end{align*}
For the estimate $\overset*\leq$ we split $[0,T]$ into the
subintervals $S_k^N:={](k{-}1)T/N,kT/N[}$, where $\overline c_N$ and
$\dot{\widehat c}_N$ are equal to the constants $c^0(kT/N)$ and $\frac
TN\int_{S_N} \dot c^0(t) \dd t$, respectively. Then, Jensen's
inequality for the convex function $\cR_\eff(\overline
c_N,\cdot)$ \EEE
gives the desired estimate.  

Since $\fD^\text{slope}_0(\widehat c_N)\to \fD^\text{slope}_0(c^0)$ by
the continuity of the integrand $\mathcal S_0$ (cf.\
\eqref{eq:slopeSeps}), the lower semicontinuity of $\fD_0$ yields
$\fD_0(\widehat c_N)\to \fD_0(c^0)$. \smallskip

{\em Step 2(c).} Using the Step 2(a) and 2(b), we now may assume 
$c^0\in \rmW^{1,\infty}([0,T],X)$ with $c^0(t)\in \sfQ_{c_*}$ on
$[0,T]$ and define $c^\eps$ via the formula
\[
c^\eps(t) = \DDD_{w^\eps}\DDD_{w^0}^{-1} c^0(t) \text{ for } t \in
[0,T].
\]
This definition gives $\D \cE_\eps(c^\eps(t))\in \Xi$ and hence
$\mathcal S^F_\eps(c^\eps(t))=0$. Hence, the
definition of $\mathcal S_\eps$ in terms of the ratios $c_i/w^\eps_i$
(cf.\ \eqref{eq:slopeSeps}) implies $\fD^\text{slope}_\eps(c^\eps)
\to \fD_0^\text{slope}(c^0)$. 

For the velocity part we again use the Rockafellar characterization,
namely
\[
\fD^\text{vel}_\eps(c^\eps) = \sup \bigset{ \mathfrak
  B^\mathrm{vel}_\eps( c^\eps, \dot c^\eps, \xi) }{ \xi \in
  \rmL^\infty([0,T];X^*)  }.  
\]
Because of the uniform bound for $\dot c^\eps$ in
$\rmL^\infty([0,T];X)$ we are able to show \EEE that the supremum
over $\mathfrak B^\mathrm{vel}_\eps(c^\eps,\dot c^\eps,\cdot)$ is
attained by maximizers $\xi^\eps$ that are uniformly bounded in
$\rmL^\infty([0,T];X^*)$.  To see this, we firstly observe that
for all $(c,\dot c)\in\sfQ\times X$ the functional
$\mathfrak{B}_\eps(c,\dot c,\cdot)$ is invariant under translation
$\xi\mapsto \xi+c\one$, since $\la\one,\dot{c}\ra=0$ and the
dissipation potential $\cR^*_\eps(c,\xi)$ only depends on differences
$\xi_i-\xi_n$. Hence, to generate compactness for maximizers $\xi^\eps$
we fix the first component by setting$\xi^\eps_1=0$. Then, by
$c^0(t)\in \sfQ_{c_*}$ and the exponential growth of $\cR^*_\eps$ we
obtain
\[
\cR^*_\eps(c^\eps(t),\xi) \geq c_\circ \sum_{i<n:A_{in}^\eps>0}|\xi_i-\xi_n|^2,
\]
with a positive constant $c_\circ>0$. By assumption \eqref{eq:cond.A1}
on the connectivity of the reaction network, all
vertices can be reached by a reaction path from vertex 1. Hence, there
is another constant $\tilde c_\circ$, such that the estimate
\[
\cR^*_\eps(c^\eps(t),\xi) \ \geq \ c_\circ\! \sum_{i<n:A_{in}^\eps>0} 
 |\xi_i-\xi_n|^2 \ \geq \ \tilde c_\circ \sum_{i>1}|\xi_1-\xi_i|^2
\]
holds.
Hence, the maximizers $\xi^\eps$ for $\fD^\text{vel}_\eps(c^\eps)$ 
with $\xi^\eps_1=0$ satisfy
\[
\la \xi^\eps,\dot c^\eps(t)\ra- 
\cR^*_\eps(c^\eps(t),\xi) \leq |\xi^\eps| C \|\dot c^0\|_{\rmL^\infty([0,T],X)}- \tilde c_\circ |\xi^\eps|^2,
\]
which implies the uniform bound $\|\xi^\eps\|_{\rmL^\infty([0,T],X^*)} \leq C \|\dot
c^0\|_{\rmL^\infty([0,T],X)}/\tilde c_\circ$. \EEE

We now first choose a subsequence $(\eps_k)_{k\in \N}$  such that
$\eps_k \searrow 0$ and 
$\fD^\mathrm{vel}_{\eps_k}( c^{\eps_k}) \to \beta=\limsup_{\eps\to 0} 
\fD^\mathrm{vel}_{\eps}( c^{\eps})$.  Thus, after selecting a further 
subsequence (not relabeled) we may assume $\xi^{\eps_k} \rightharpoonup
\xi^0$ in $\rmL^2([0,T];X^*)$. With the strong convergence of $\dot
c^\eps \to \dot c^0$ we conclude 
\[
\limsup_{\eps \to 0} \fD^\mathrm{vel}_\eps(c^\eps) =
\lim_{k\to \infty} \mathfrak B^\mathrm{vel}_{\eps_k}( c^{\eps_k},
\dot c^{\eps_k}, \xi^{\eps_k}) 
\overset*\leq 
\mathfrak B^\mathrm{vel}_0 (c^0,\dot c^0,\xi^0) \leq \fD^\mathrm{vel}_0(c^0),  
\] 
where in $\overset*\leq$ we used the convergence of the duality
pairing $\int_0^T \la \xi^\eps,\dot c^\eps\ra \dd t$ and a Ioffe-type
argument based on the convexity of $\cR^*_\eps(c^\eps,\cdot)$ and the
lower semicontinuity of $[0,1]\ti X^*\ni (\eps,\xi)\mapsto
\cR^*_\eps(c^\eps(t),\xi)\in [0,\infty]$, cf.\
\cite[Thm.\,7.5]{FonLeo07MMCV}. Adding the convergence of the slope
part, and taking into account the liminf estimate from Theorem
\ref{TheoremLiminfEstimate} we obtain the convergence $\lim_{\eps \to
  0} \fD_\eps(c^\eps) = \fD_0(c^0)$.  \medskip

\underline{\em Step 3. Construction of recovery sequences for the case
  $\fD_0(c^0)$.} We now
apply the approximation steps discussed in Step 2 and show that it is
possible to choose an suitable diagonal sequence for getting the
desired recovery sequence. 

For a  general $c^0$ we apply the approximation as indicated in the
sub-steps 2(a), 2(b), and 2(c) and set
\[
c^{\delta,N,\eps} = \DDD_{w^\eps}^{-1} \DDD_{w_0} \big( \delta w^0
+(1{-}\delta) \widehat c{}^{\;0}_N\big) .
\]
We easily obtain
$\|c^0 -c^{\delta,N,\eps}\|_{\rmL^2([0,T];X)} \leq C(\delta +
\alpha_N+\eps) \to 0$ if $\delta\to 0$, $N\to \infty$, and
$\eps \to 0$.  Moreover, the difference in the dissipation
functionals $\fD_\eps$ can be estimated via
\begin{align*}
\mbox{}\quad\big|\fD_\eps(c^{\delta,N,\eps}) - \fD_0(c^0)\big| &\leq A(\delta)+
B_\delta(N)+ C_{\delta,N}(\eps),&&  \text{where } \\
A(\delta)&=\big|\fD_0(\tilde c^{\:\delta}) - \fD_0(c^0)\big|  
 && \text{with } \tilde c^{\:\delta}(t)=\delta w^0 +(1{-}\delta)c^0(t), \\
B_\delta(N)&= \big|\fD_0\big( c^{\delta,N}\big) -
\fD_0(\tilde c^{\:\delta})\big| &&\text{with } c^{\delta,N}(t)=\delta w^0
+(1{-}\delta) \widehat c{}^{\;0}_N(t),\quad \mbox{}\\
C_{\delta,N}(\eps)&=\big|\fD_\eps(c^{\delta,N,\eps}) -
\fD_0(c^{\delta,N}) \big|.
\end{align*}
A recovery sequence $c^\eps \to c^0$ with $\fD_\eps(c^\eps) \to
\fD_0(c^0)$ is now obtained by a standard diagonal argument. \EEE
\end{proof}

\appendix

\section{Special properties of cosh gradient structures}
\label{se:ApropC}

Here we discuss a few special properties that are characterizing for
the function $\sfC$ and $\sfC^*$ and this lead to corresponding
properties of $\cR^*_\mathrm{cosh}$. 

We consider the special non-quadratic dissipation functional
\begin{align*}
 \sfC(v): = 2v~\mathrm{arsinh}(v/2) - 2\sqrt{4{+}v^2} + 4  \quad 
\text{and \quad its Legendre dual } \ \sfC^*(\xi) := 4\cosh(\xi/2)-4.
\end{align*}
Then, we have $\sfC(v) = \frac 1 2 v^2 + O(v^4)$ and $\mathsf
C^*(\xi) = \frac 1 2 \xi ^2 + O(\xi ^4)$.  The function $\sfC^*$
has the following properties:
\begin{align}
 \label{eq:A.prop.sfC}
  \sfC^*(\log p - \log q) = 2 \Big(\sqrt{\frac pq} + \sqrt{\frac qp}
  -2 \Big), \quad \sfC^{*\prime}(\log p - \log q) =
  \frac{p-q}{\sqrt{pq}}.
\end{align}

In addition we have superlinear growth of $\sfC$:
\begin{equation}
  \label{eq:sfC.growth}
\frac12 |s|\log(1{+}|s|) \leq \sfC(s) \leq   2 |s|\log(1{+}|s|) \EEE \quad
  \text{for all } s \in \R. 
\end{equation}
 
The first of the following scaling properties follows easily by
considering the power series expansion of $\sfC^*$, the second by
Legendre transform:
\begin{equation}
  \label{eq:scC.scaling}
  \forall\,\lambda \geq 1\ \forall \, s,\zeta\in \R:\quad
       \sfC^*(\lambda \zeta) \geq \lambda^2 \sfC^*(\zeta)
  \ \text{ and } \  
    \sfC(\lambda s) \leq \lambda^2 \sfC(s).
\end{equation}
This implies the corresponding scaling property for $\cR_\mathrm{cosh}$, namely
\begin{equation}
  \label{eq:Rcosh.scaling}
\begin{aligned} 
& \forall\, \lambda\geq 1\ \forall\,c\in\sfQ\ \forall\, v,\xi\in \R^I:
\\ 
& \qquad       \cR^*_\mathrm{cosh}(c,\lambda \xi) 
          \geq \lambda^2 \cR^*_\mathrm{cosh}(c,\xi)
   \text{ and } 
    \cR_\mathrm{cosh}(c,\lambda v) \leq 
   \lambda^2 \cR_\mathrm{cosh}(c,v).
\end{aligned} 
\end{equation}

\paragraph*{Acknowledgments.} 
This research has been partially supported by Deutsche
Forschungsgemeinschaft (DFG) through the Collaborative Research Center
SFB 1114 "Scaling Cascades in Complex Systems" (Project Number
235221301), subproject C05 ``Effective Models for Materials and
Interfaces with Multiple Scales''. The authors are grateful to 
two anonymous referees
for helpful comments and to Mark
Peletier for stimulating discussions  and for pointing out the wrong
formulation of the previous version of Lemma \ref{le:EDPimpliesCvg}. \EEE

\footnotesize


\newcommand{\etalchar}[1]{$^{#1}$}
\def\cprime{$'$}

\end{document}